\theoremstyle{plain}
\newtheorem{theorem}{Theorem}[section]
\newtheorem{proposition}[theorem]{Proposition}
\newtheorem{lemma}[theorem]{Lemma}
\theoremstyle{definition}
\newtheorem{definition}[theorem]{Definition}
\newtheorem{assumption}[theorem]{Assumption}
\newtheorem{problem}[theorem]{Problem}
\theoremstyle{remark}
\newtheorem{remark}[theorem]{Remark}
\newcommand{\eqn}[1]{(\ref{eqn:#1})}
\newcommand{\thm}[1]{\hyperref[thm:#1]{Theorem~\ref*{thm:#1}}}
\newcommand{\cor}[1]{\hyperref[cor:#1]{Corollary~\ref*{cor:#1}}}
\newcommand{\defn}[1]{\hyperref[defn:#1]{Definition~\ref*{defn:#1}}}
\newcommand{\lem}[1]{\hyperref[lem:#1]{Lemma~\ref*{lem:#1}}}
\newcommand{\prop}[1]{\hyperref[prop:#1]{Proposition~\ref*{prop:#1}}}
\newcommand{\prob}[1]{\hyperref[prob:#1]{Problem~\ref*{prob:#1}}}
\newcommand{\assum}[1]{\hyperref[assum:#1]{Assumption~\ref*{assum:#1}}}
\newcommand{\fig}[1]{\hyperref[fig:#1]{Figure~\ref*{fig:#1}}}
\newcommand{\tab}[1]{\hyperref[tab:#1]{Table~\ref*{tab:#1}}}
\newcommand{\algo}[1]{\hyperref[algo:#1]{Algorithm~\ref*{algo:#1}}}
\renewcommand{\sec}[1]{\hyperref[sec:#1]{Section~\ref*{sec:#1}}}
\newcommand{\append}[1]{\hyperref[append:#1]{Appendix~\ref*{append:#1}}}
\newcommand{\fac}[1]{\hyperref[fac:#1]{Fact~\ref*{fac:#1}}}
\newcommand{\lin}[1]{\hyperref[lin:#1]{Line~\ref*{lin:#1}}}
\icmltitlerunning{Quantum Lower Bounds for Finding Stationary Points of Nonconvex Functions}
\def\>{\rangle}
\def\<{\langle}
\newcommand{\vect}[1]{\ensuremath{\mathbf{#1}}}
\newcommand{\x}{\ensuremath{\mathbf{x}}}
\newcommand{\N}{\mathbb{N}}
\newcommand{\R}{\mathbb{R}}
\newcommand{\E}{\mathbb{E}}
\DeclareMathOperator{\poly}{poly}
\DeclareMathOperator{\spn}{span}
\DeclareMathOperator{\supp}{supp}
\DeclareMathOperator{\quan}{quan}
\DeclareMathOperator{\sto}{stoc}
\DeclareMathOperator{\prog}{prog}
\renewcommand{\d}{\mathrm{d}}
\renewcommand{\x}{\vect{x}}
\newcommand{\y}{\vect{y}}
\newcommand{\z}{\vect{z}}
\newcommand{\g}{\vect{g}}
\newcommand{\m}{\vect{m}}
\renewcommand{\u}{\vect{u}}
\newcommand{\0}{\mathbf{0}}
\newcommand{\tildef}{\tilde{f}_{T;U}}
\newcommand{\hatf}{\hat{f}_{T;U}}
\newcommand{\tildeOp}{\widetilde{O}^{(p)}}
\newcommand{\tildeO}{\widetilde{O}}
\def\:{\hbox{\bf:}}
\let\oldnl\nl
\newcommand{\nonl}{\renewcommand{\nl}{\let\nl\oldnl}}
\begin{document}

\twocolumn[
\icmltitle{Quantum Lower Bounds for Finding Stationary Points of Nonconvex Functions}

\begin{icmlauthorlist}
\icmlauthor{Chenyi Zhang}{1,2}
\icmlauthor{Tongyang Li}{3,4}
\end{icmlauthorlist}

\icmlaffiliation{1}{Computer Science Department, Stanford University}
\icmlaffiliation{2}{Institute for Interdisciplinary Information Sciences, Tsinghua University}
\icmlaffiliation{3}{Center on Frontiers of Computing Studies, Peking University}
\icmlaffiliation{4}{School of Computer Science, Peking University}

\icmlcorrespondingauthor{Tongyang Li}{tongyangli@pku.edu.cn}

\icmlkeywords{Machine Learning, ICML}

\vskip 0.3in
]

\printAffiliationsAndNotice{}

\begin{abstract}
Quantum computing is an emerging technology that has been rapidly advancing in the past decades. In this paper, we conduct a systematic study of quantum lower bounds on finding $\epsilon$-approximate stationary points of nonconvex functions, and we consider the following two important settings: 1) having access to $p$-th order derivatives; or 2) having access to stochastic gradients.
The classical query lower bounds are $\Omega\big(\epsilon^{-\frac{1+p}{p}}\big)$ regarding the first setting and $\Omega(\epsilon^{-4})$ regarding the second setting (or $\Omega(\epsilon^{-3})$ if the stochastic gradient function is mean-squared smooth). In this paper, we extend all these classical lower bounds to the quantum setting. They match the classical algorithmic results respectively, demonstrating that there is no quantum speedup for finding $\epsilon$-stationary points of nonconvex functions with $p$-th order derivative inputs or stochastic gradient inputs, whether with or without the mean-squared smoothness assumption. Technically, we prove our quantum lower bounds by showing that the sequential nature of classical hard instances in all these settings also applies to quantum queries, preventing any quantum speedup other than revealing information of the stationary points sequentially.
\end{abstract}

%%%%%%%%%%%%%%%%%%%%%%%%%%%%%%%%%%%%%%%%%%%%%%%%%%%%%%%%%%%%%%%%%%%%%%%%%%%%%%

\section{Introduction}
Quantum computing is an emerging technology with wide applications. Among those, quantum algorithms for optimization are of general interest. On the one hand, optimization algorithms have wide applications in machine learning, statistics, operations research, and many other areas. On the other hand, it is crucial in quantum computing to figure out the extent of quantum speedups in specific problems, and previous literature had established quantum speedups for solving linear systems~\cite{harrow2009quantum,childs2015quantum}, semidefinite programs~\cite{brandao2016quantum,brandao2017SDP,vanApeldoorn2020quantum,vanApeldoorn2018SDP,kerenidis2020SDP}, general convex optimization~\cite{chakrabarti2020optimization,vanApeldoorn2020optimization}, etc.

More recently, nonconvex optimization has been a primary research direction in machine learning, since the landscapes of many models, including neural networks, are typically nonconvex. Finding the global optimum of a nonconvex function, even approximately, is NP-hard in general~\cite{murty1985some,nemirovski1983problem}. To give efficient optimization algorithms for nonconvex functions, a first step is to find stationary points~\cite{agarwal2017finding,birgin2017worst,carmon2018accelerated,carmon2017convex,nesterov2003introductory,nesterov2006cubic}. However, quantum algorithms for nonconvex optimization are less understood. Based on gradient descents, \citet{zhang2021quantum} proposed a quantum algorithm that can find an $\epsilon$-approximate second-order stationary point of a $d$-dimensional nonconvex function and improve the logarithmic dimension dependence from $\log^6 d$ in the classical result~\cite{jin2018accelerated} to $\log^2 d$, but the $\epsilon$ dependence remains the same. The dependence in $\log d$ is further improved to linear in \citet{childs2022quantum}.
Moreover, \citet{liu2022quantum} showed that quantum tunneling can provide quantum speedups in the task of finding an unknown local minimum starting from a known one.

Meanwhile, various results have been developed concerning the classical lower bounds for finding $\epsilon$-approximate first-order stationary points (i.e., points with gradient smaller than $\epsilon$) of nonconvex functions under different assumptions. In particular, \citet{carmon2020lower} discussed the setting where the objective function $f$ has Lipschitz $p$-th order derivative and proved that any randomized classical algorithm has to make at least $\Omega\big(\epsilon^{-\frac{1+p}{p}}\big)$ derivative queries to guarantee an $\Omega(1)$ success probability in the worst case. Using a similar approach,~\citet{carmon2021lower} proved a deterministic classical lower bound for first-order method of order $\Omega(\epsilon^{-12/7})$ for functions with Lipschitz first and second derivatives.

As for stochastic settings, \citet{arjevani2020second,arjevani2022lower} thoroughly investigated classical lower bounds under different assumptions with or without the mean-squared smoothness property, and with access to different orders of stochastic derivatives. In particular, the query lower bound for stochastic first-order methods is $\Omega(\epsilon^{-4})$~\cite{arjevani2022lower}. If the stochastic gradient additionally satisfies the mean-squared smoothness property, the query lower bound would be of order $\Omega(\epsilon^{-3})$~\cite{arjevani2022lower}, which is also the query lower bound in the case where we have access to second- and higher-order stochastic derivatives~\cite{arjevani2020second}. Nevertheless, for objective functions with Lipschitz $p$-th oder derivative, the query lower bound remains $\Omega(\epsilon^{-3})$ for stochastic $p$-th order methods if we have the mean-squared smoothness property~\cite{arjevani2020second}.

However, despite recent progress on quantum lower bounds for convex optimization~\cite{garg2020no,garg2021near}, quantum lower bounds for nonconvex optimization are widely open.

%====================================================================================

\paragraph{Contributions}
We conduct a systematic study of quantum lower bounds for finding an $\epsilon$-stationary point of a nonconvex objective function $f$, i.e., finding an $\x\in\R^d$ such that
\begin{align}
\|\nabla f(\x)\|\leq\epsilon.\nonumber
\end{align}
For optimization problems with deterministic queries, high-order methods are of general interest~\cite{bubeck2019near,gasnikov2019optimal}, which compared to first-order methods can achieve better convergence rate by exploiting higher-order smoothness~\cite{birgin2017worst,cartis2010complexity,nesterov2006cubic}. Beyond that, another widely considered setting is having access to stochastic gradients, which is widely applied in modern machine learning tasks~\cite{bottou2007tradeoffs,bottou2018optimization} as it only requires access to an unbiased gradient estimator. Various classical algorithms have been developed under this setting from variants of stochastic gradient descent (SGD)~\cite{jin2021nonconvex,fang2018spider,zhang2021escape,zhou2018finding} to more advanced methods~\cite{kingma2014adam,liu2018adaptive,duchi2011adaptive}. Hence, we study the quantum query lower bounds for finding and $\epsilon$-stationary point under the following two important settings:
\begin{enumerate}[leftmargin=*]
\item having access to derivatives of a $p$-th order Lipschitz function;
\item having access to stochastic gradients of a Lipschitz function without the mean-squared smoothness assumption; or
\item having access to stochastic gradients of a Lipschitz function that additionally satisfy the mean-squared smoothness assumption.
\end{enumerate}
For the first setting , we consider a $C^{\infty}$ function $f\colon\R^d\to\R$ with $L_p$-Lipschitz $p$-th derivative, i.e., $\|\nabla^p f(\x)\|\leq L_p$. We define the $p$-th order response to a query at point $\x$ to be
\begin{align}\label{eqn:pth-derivatives}
\nabla^{(0,\ldots,p)}f(\x):=\{f(\x),\nabla f(\x),\ldots,\nabla^p f(\x)\},
\end{align}
which we assume can be accessed via the \emph{quantum evaluation oracle} defined as a unitary map on $\R^d\to\R^{d^0+\cdots+d^p}$ such that for any $\x\in\R^d$,
\begin{align}\label{eqn:Ofp-defn}
O_f^{(p)}\ket{\x}\ket{y}\to\ket{\x}\ket{y\oplus\nabla^{(0,\ldots,p)}f(\x)}.
\end{align}
Here, the Dirac notation $\ket{\cdot}$ denotes the register storing quantum states. Specifically, for any $m\in\mathbb{N}$, $\x_1,\ldots,\x_m\in\R^d$, and $\vect{c}\in\mathbb{C}^m$ such that $\sum_{i=1}^m|c_i|^2=1$,
\begin{align}
O_f^{(p)}\Big(\sum_{i=1}^mc_i\ket{\x_i}\otimes\ket{\0}\Big)=\sum_{i=1}^mc_i\ket{\x_i}\otimes\ket{\nabla^{(0,\ldots,p)}f(\x_i)}.\nonumber
\end{align}
If we measure this quantum state, we get $\nabla^{(0,\ldots,p)}f(\x_i)$ with probability $|c_i|^2$. Compared to the classical evaluation oracle (i.e., $m=1$), the quantum evaluation oracle can query different locations in \textit{superposition}, which is the essence of speedups from quantum algorithms.
In addition, if we can implement the classical evaluation oracle by arithmetic circuits, the quantum evaluation oracle can be implemented by quantum arithmetic circuits of about the same size (see Footnote 2 of~\citealt{chakrabarti2023quantum}). Hence, it has been the standard assumption in previous quantum computing literature for convex optimization~\cite{vanApeldoorn2020optimization,chakrabarti2020optimization} and nonconvex optimization~\cite{liu2022quantum,zhang2021quantum}, with different magnitudes of quantum speedups obtained.

Despite that the quantum evaluation oracle is powerful by taking queries in superposition, however, we show that it cannot provide quantum speedup for finding stationary points of nonconvex functions. In particular, we prove the following quantum query lower bound.
\begin{theorem}[Informal version of \thm{p-th-order-formal}]\label{thm:p-th-order-informal}
For any $\epsilon>0$ and $p\in\mathbb{N}$, there exists a family $\mathcal{F}$ of functions $f\colon\R^d\to\R$ with $L_p$-Lipschitz $p$-th derivative such that any quantum algorithm that finds an $\epsilon$-stationary point of any $f\in\mathcal{F}$ must make 
$$\Omega\left(L_p^{1/p}\epsilon^{-(p+1)/p}\right)$$ queries to the quantum $p$-th order oracle.
\end{theorem}

For the second setting where we have access to stochastic gradients, we also consider a $C^{\infty}$ function $f\colon\R^d\to\R$ that is $L$-Lipschitz, i.e., $\|\nabla f(\x)\|\leq L$. Assume the stochastic gradient $\g(\x,\xi)$ of $f$ indexed by some random seed $\xi$ satisfies
\begin{align}
\begin{cases}
\mathbb{E}_{\xi}[\g(\x,\xi)]=\nabla f(\x), \\
\E_{\xi}\|\g(\x,\xi)-\nabla f(\x)\|^2\leq\sigma^2,\nonumber
\end{cases}
\end{align}
for some constant $\sigma$, which can be accessed via the following quantum oracle
\begin{align}\label{eqn:intro-Og-defn}
O_\g\ket{\x}\ket{\xi}\ket{\vect{v}}=\ket{\x}\ket{\xi}\ket{\g(\x,\xi)+\vect{v}}.
\end{align}
Then, we can prove the following result.

\begin{theorem}[Informal version of \thm{stochastic-formal}]\label{thm:stochastic-informal}
For any $\epsilon,\sigma>0$, there exists a family $\mathcal{F}$ of $L$-gradient Lipschitz functions $f\colon\R^d\to\R$ such that any quantum algorithm that finds an $\epsilon$-stationary point of any $f\in\mathcal{F}$ must make
$$\Omega\left(\frac{\min\{L^2\Delta^2,\sigma^4\}}{\epsilon^4}\right)$$
queries to the quantum stochastic gradient oracle.
\end{theorem}

Moreover, various literature on nonconvex stochastic optimization~\cite{fang2018spider,lei2017non,zhou2018finding} has considered the following additional mean-squared smoothness assumption on the stochastic gradient $\g(\x,\xi)$.
\begin{assumption}\label{assum:mss}
The stochastic gradient $\g$ satisfies that for some constant $\bar{L}$, we have that for all $\x,\y\in\R^d$
\begin{align}
\E_{\xi}\|\g(\x,\xi)-\g(\y,\xi)\|^2\leq\bar{L}^2\cdot\|\x-\y\|^2.\nonumber
\end{align}
\end{assumption}

Under the stochastic optimization setting where \assum{mss} is satisfied, we prove the following result.
\begin{theorem}[Informal version of \thm{stochastic-formal-mss}]\label{thm:stochastic-informal-mss}
For any $\epsilon,\sigma>0$, there exists a family $\mathcal{F}$ of $\bar{L}$-gradient Lipschitz functions $f\colon\R^d\to\R$ such that for any quantum algorithm that finds an $\epsilon$-stationary point of any $f\in\mathcal{F}$ must make $\Omega(\bar{L}\sigma/\epsilon^3)$ queries to the quantum stochastic gradient oracle satisfying the mean-squared smoothness assumption.
\end{theorem}

Observe that our quantum lower bounds match the classical algorithmic results~\cite{birgin2017worst,fang2018spider,jin2021nonconvex} concerning corresponding settings. Therefore, we essentially prove that \textbf{there is no quantum speedup for finding stationary points of nonconvex functions with $p$-th order derivative inputs or stochastic gradient inputs, whether with or without \assum{mss}.}

%=====================================================================================

\paragraph{Techniques}
Inspired by both the classical lower bound results for finding stationary points~\cite{arjevani2022lower,carmon2020lower} as well as the techniques introduced in~\citet{garg2020no,garg2021near} on quantum lower bounds for convex optimization, our work utilizes the underlying similarities and connects these two settings. In particular, the proof of our quantum lower bounds has the following key technical components:

\begin{enumerate}[leftmargin=*]
\item Adopt a hard function that is a robust zero-chain and has a ``non-informative region" around $\0$, which contains a sequential underlying structure that can only be discovered via adaptive queries.
\item Represent quantum algorithms by sequences of unitaries.
\item Demonstrate that the sequential nature of the robust zero-chain can nullify the advantage of quantum algorithms to make queries in \emph{superpositions}.
\end{enumerate}

\textbf{First}, classical hard instances for finding stationary points of nonconvex functions under different settings share the same intuition originated from the following example proposed in Chapter 2.1.2 of \citet{nesterov2003introductory},
\begin{align}\label{eqn:nesterov}
f(\x):=\frac{1}{2}(x_1-1)^2+\frac{1}{2}\sum_{i=1}^{T-1}(x_i-x_{i+1})^2.
\end{align}
For every component $i\in[T]$, $\nabla_i f(\x)=0$ if and only if $x_{i-1}=x_i=x_{i+1}$. Then, if we query a point $\x$ with the first $t$ entries being nonzero, the derivatives $\nabla^{(0,\ldots,p)}f(\x)$ can only reveal the $(t+1)$-th direction. Such $f$ is called a \textit{zero-chain} which is formally defined as follows.

\begin{definition}[\citealt{carmon2020lower}, Definition 3]\label{defn:zero-chain}
For $p\in\mathbb{N}$, a function $f\colon\mathbb{R}^T\to\mathbb{R}$ is called a $p$-th order zero-chain if for every $\x\in\R^d$,
\begin{align}
&\supp\{\x\}\subseteq\{1,\ldots,i-1\}\nonumber\\
&\qquad\Rightarrow\bigcup_{q\in[p]}\supp\{\nabla^qf(\x)\}\subseteq\{1,\ldots,i\}, \nonumber
\end{align}
where the support of a tensor $M\in\mathbb{R}^{\otimes^k T}$ is defined as
\begin{align}
\supp\{M\}:=\{i\in[d]\,|\,M_i\neq 0\}. \nonumber
\end{align}
We say $f$ is a zero-chain if it is a $p$-th-order zero-chain for every $p\in\mathbb{N}$.
\end{definition}

Intuitively, if the objective function with an unknown set of coordinates is a zero-chain, and if we query its derivatives at point $\x$ with only its first $i$ entries being nonzero, such query can only reveal information of the $(i+1)$-th coordinate, exhibiting a sequential nature. Hence, for any classical algorithm that never explores directions with zero derivatives components that seem not to affect the function, which is referred to as ``zero-respecting algorithm"~\cite{carmon2020lower}, it takes at least $T$ queries to learn all the $T$ coordinates. Moreover, we can observe that finding a stationary point of the function $f$ defined in Eq.~\eqn{nesterov} requires complete knowledge of all the $T$ directions, indicating that it takes at least $T$ queries for a zero-respecting algorithm to find the stationary point of $f$.

\citet{carmon2020lower} extended this lower bound to randomized classical algorithms by constructing a hard instance following the intuition of the quadratic hard instance~\eqn{nesterov} and additionally \textbf{creating a ``non-informative" region near $\0$ where small components have no impact on the function value, which can ``trap" random perturbations} since with overwhelming probability they can only create small magnitudes among unknown coordinates and thus has no influence on the function value as well as the algorithm.

\textbf{Second}, \citet{garg2020no,garg2021near} developed quantum lower bounds for convex optimization with non-smooth and smooth objective functions respectively, and demonstrate that there is no quantum speedup in both settings. The hard instance in \citet{garg2020no} is a variant of the shielded Nemirovski function introduced in \citet{bubeck2019complexity} which takes a maximization over several component functions, each related to one of the coordinates and the component function related to the $T$-th coordinate is the least significant. Then with high probability, each query can reveal only one unknown coordinate with the smallest index. This property also applies to the smoothed hard instance in \citet{garg2021near}. To obtain quantum lower bounds, \citet{garg2020no,garg2021near} \textbf{represented quantum algorithms in the form of sequences of unitaries}
\begin{align}
\cdots V_3O_fV_2O_fV_1O_fV_0\nonumber
\end{align}
applied to the initial state $\ket{0}$, where $O_f$ are the evaluation oracle of $f$ and $V_i$s are unitaries that are independent from $f$. The key step in their proof is demonstrating that, for any quantum algorithm $A_{\quan}$ making $k<T$ queries, if we replace all the $k$ queries to $O_f$ by new evaluation oracles that only partly agree with $f$ but contains no information regarding the $T$-th coordinate, the output state of the algorithm will barely change. Since finding an $\epsilon$-stationary point requires knowing all the coordinates, the new sequence of unitaries is hence not be able to find an $\epsilon$-stationary point with high probability, so does the original quantum algorithm $A_{\quan}$.

\textbf{Third}, we observe that although the hard instance in \citet{garg2020no,garg2021near} is a variant of the shielded Nemirovski function and has a different construction compared to the zero-chain hard instance introduced in \citet{carmon2020lower}, it also satisfies the properties of robust zero-chains. We note that this connection has been utilized in previous works~\cite{carmon2020acceleration,woodworth2017lower} but has yet been pointed out explicitly. Similarly, quantum queries to the derivatives of the hard instance of \citet{carmon2020lower} also have only rather limited power, similar to the case in \citet{garg2020no,garg2021near}. Conceptually, this is due to the fact that the hard instance in \citet{carmon2020lower} possesses \textbf{a sequential nature that the $i$-th coordinate direction only emerges when we reach a position that has a large overlap with the $(i-1)$-th direction, which nullifies the unique advantage of quantum algorithms to make queries in superpositions}.

As for the \emph{stochastic setting}, similar to the classical stochastic lower bound result~\cite{arjevani2022lower}, we still use the classical hard instance defined in \citet{carmon2020lower} but with different scaling parameters. Nevertheless, the stochastic gradient function of the hard instance in \citet{arjevani2022lower} could lead to a quantum speedup on this particular hard instance via Grover's search algorithm~\cite{grover1996fast}. To address this issue, inspired by the quantum lower bound on multivariate mean estimation~\cite{cornelissen2022near}, we construct a new stochastic gradient function with details given in \sec{stochastic-construction-to-lowerbound} such that it is also hard for quantum algorithms to obtain an accurate estimation of the exact gradient. Then, a quantum lower bound can be obtained matching the existing classical algorithmic upper bound result following the same procedure as \sec{p-th-order-quantum-lowerbound}.

Furthermore, if we assume that the stochastic gradient function satisfies the \emph{mean-squared smoothness condition described in \assum{mss}}, we can apply a similar version of the function smoothing technique introduced in \citet{arjevani2022lower} to our stochastic gradient function~\eqn{sgf-def} to obtain a ``smoothed" stochastic gradient function, whose detailed formula is given in \sec{mss}, upon which we can obtain a quantum query lower bound matching the existing classical algorithmic upper bound result given that the stochastic gradient function satisfies \assum{mss}.

Recently, a simultaneous work by~\citet{gong2022robustness} proved that finding an $\epsilon$-stationary point of a nonconvex function with a noisy zeroth- and first-order inputs requires $\Omega(\epsilon^{-12/7})$ queries. Technically, they used the hard instance introduced in~\citet{carmon2021lower} that also has a sequential underlying structure such that coordinates can only be revealed sequentially due to the non-informative region near $\0$ created by noise. In contrast, in our setting we do not need external noise to create the non-informative region, which has fundamentally different intuitions. 

%============================================================================================

\paragraph{Open Questions}
Our paper leaves several natural open questions for future investigation:
\begin{itemize}[leftmargin=*]
\item Can we extend our stochastic quantum lower bounds for finding stationary points to higher-order methods?

\item Can we extend our quantum lower bounds to the setting of finding approximate second-order stationary points (i.e., approximate local minima) with no additional overhead, or overhead being at most poly-logarithmic in $\epsilon$ and $d$?

\item In this work, we show that classically hard optimization instances where information can only be revealed sequentially are also hard for quantum algorithms. Can we develop quantum lower bounds for other computational problems with sequential nature via similar techniques?
\end{itemize}

%============================================================================================

%%%%%%%%%%%%%%%%%%%%%%%%%%%%%%%%%%%%%%%%%%%%%%%%%%%%%%%%%%%%%%%%%%%%%%%%%%%%%%

\section{Quantum Lower Bound with Lipschitz $p$-th Order Derivatives}\label{sec:p-th-order-quantum-lowerbound}

\subsection{Function Classes and Classical Lower Bound}\label{sec:p-th-order-classical}
We consider the following set of objective functions.
\begin{definition}[\citealt{carmon2020lower}, Definition 1]\label{defn:Fp}
Let $p\geq 1$, $\Delta>0$ and $L_p>0$. Then the set $\mathcal{F}_p(\Delta, L_p)$
denotes the union, over $d\in\mathbb{N}$, of the collection of $C^{\infty}$ functions $f\colon\mathbb{R}^d\to\mathbb{R}$ with $L_p$-Lipschitz $p$-th derivative and $f(\0)-\inf_\x f(\x)\leq\Delta$.
\end{definition}
For any $f\in\mathcal{F}_p(\Delta,L_p)$, the response of a $p$-th order oracle to a query at point $\x$ is
\begin{align*}
\nabla^{(0,\ldots,p)}f(\x)=\{f(\x),\nabla f(\x),\ldots,\nabla^p f(\x)\}
\end{align*}
as defined in Eq.~\eqn{pth-derivatives}. Then for any dimension $d\in\mathbb{N}$, a classical algorithm $A$ is defined as a map from objective functions $f\in\mathcal{F}_p(\Delta, L_p)$ to a sequence of iterates in $\mathbb{R}^d$, if for any $i\in\mathbb{N}$ it produces iterates of the form
\begin{align*}
\x^{(i)}=A^{(i)}\big(\nabla^{(0,\ldots,p)}f(\x^{(1)}),\ldots,\nabla^{(0,\ldots,p)}f(\x^{(i-1)})\big)
\end{align*}
where $A^{(i)}$ is a measurable mapping to $\mathbb{R}^d$. We refer to $A$ as a classical $p$-th-order deterministic algorithm.

Similarly, a classical $p$-th-order randomized algorithm $A_{\text{rand}}^{(p)}$ as a distribution on $p$-th order deterministic algorithms. Quantitatively, $A_{\text{rand}}^{(p)}$ would produce iterates of the form
\begin{align*}
\x^{(i)}=A^{(i)}\big(\xi,\nabla^{(0,\ldots,p)}f(\x^{(1)}),\ldots,\nabla^{(0,\ldots,p)}f(\x^{(i-1)})\big)
\end{align*}
for any $i\in\mathbb{N}$, where $\xi$ is a random uniform variable on $[0,1]$, for some measurable mappings $A^{(i)}$ into $\mathbb{R}^d$.

%=====================================================================================================

The strategy of constructing a classical hard instance is to construct a high-dimensional zero-chain (\defn{zero-chain}) such that the position of its stationary point is related to all the coordinates. In particular, \citet{carmon2020lower} considered the following hard instance $\bar{f}_T(\x)\colon\mathbb{R}^d\to\mathbb{R}$,
\begin{align}\label{eqn:bar-fT}
\bar{f}_T(\x)&=-\Psi(1)\Phi(x_1)\nonumber\\
&+\sum_{i=2}^T[\Psi(-x_{i-1})\Phi(-x_i)-\Psi(x_{i-1})\Phi(x_i)],
\end{align}
where
\begin{align}
\Psi(x)&:=
\begin{cases}
0 \qquad\qquad\qquad\quad\ \ \, x\leq 1/2, \nonumber\\
\exp\big(1-\frac{1}{(2x-1)^2}\big)\quad x>1/2,
\end{cases}
\\
\Phi(x)&:=\sqrt{e}\int_{-\infty}^x e^{-t^2/2}\d t,
\nonumber
\end{align}
for some $T>0$. Note that $\bar{f}_T$ is a zero-chain whose derivative $\nabla\bar{f}_T(\x)$ has a large norm unless $|x_i|\geq 1$ for all $i\in[T]$ (see \lem{fT-large-gradient} for details). Thus, it takes at least $\Omega(T)$ queries for a classical algorithm to find a stationary point of $\bar{f}_T$ if it never explores directions with zero derivatives, which is referred to as a \textit{zero-respecting algorithm} in~\citet{carmon2020lower}, where the authors also showed that the query complexity lower bound for zero-respecting algorithms also holds for all deterministic classical algorithms, and $T$ can at most be of order $O\left(\Delta L_p^p\epsilon^{-(1+p)/p}\right)$ to satisfy $L_p$-Lipschitzness and other conditions, establishing query lower bound for all deterministic classical algorithms.

%================================================================================================================================

This deterministic lower bound is further extended to obtain a distributional complexity lower bounds of all randomized classical algorithms by a simple random orthogonal transformation (or intuitively, a high-dimensional random rotation) on the hard instance $\bar{f}_T$ defined in Eq.~\eqn{bar-fT}:
\begin{align}\label{eqn:tildef-defn}
\tildef(\x)\coloneqq\alpha\bar{f}_T(U^T\x/\beta)
\end{align}
where $\alpha$ and $\beta$ are scaling constants, $U\in\mathbb{R}^{d\times T}$ with columns $\u^{(1)},\ldots,\u^{(T)}$ is an orthogonal matrix with $T\leq d$, and we assume throughout that $U$ is chosen uniformly at random from the space of orthogonal matrices $O(d,T)=\{U\in\mathbb{R}^{d\times T}\,|\,U^TU=I_T\}$.

It is shown in \citet{carmon2020lower} that any random algorithm can ``discover" at most one coordinate $\u^{(i)}$ per query with high probability. Quantitatively, for a random orthogonal matrix $U$, any sequence of bounded iterates $\{\x^{(t)}\}_{t\in\mathcal{N}}$ based on derivatives of $\tildef$ must satisfy $|\<\x^{(t)},\u^{(j)}\>|\leq0.5$ with high probability for all $t$ and $j>t$. Then by \lem{fT-large-gradient} in \append{existing-lemmas}, with high probability $\|\nabla\tildef(\x^{(t)})\|$ is large for any $t\leq T$, and thus establishes the lower bound on randomized algorithms with access to bounded iterates. Moreover, \citet{carmon2020lower} showed that the boundedness of the iterates can be removed by composing $\tildef$ with a soft projection to reach a lower bound for general unbounded iterates, and the query lower bound $\Omega\big(\Delta L_p^p\epsilon^{-(1+p)/p}\big)$ can be obtained for all randomized classical algorithms with access to $p$-th-order derivatives.

%===========================================================================================================================

\subsection{Quantum Query Model and Complexity Measures}\label{sec:quantum-model}
We adopt the quantum query model introduced in \citet{garg2020no}. For a function $f\colon\R^d\to\R$ with $L_p$-Lipschitz $p$-th derivative, we assume access to the following quantum oracle $O_f^{(p)}$ defined in Eq.\eqn{Ofp-defn}:
\begin{align}
O_f^{(p)}\ket{\x}\ket{y}\to\ket{\x}\ket{y\oplus\nabla^{(0,\ldots,p)}f(\x)},\nonumber
\end{align}
for $\nabla^{(0,\ldots,p)}f(\x)$ defined in Eq.~\eqn{pth-derivatives}. Then for any $p$-th order quantum query algorithm $A_{\quan}$, it can be described by the following sequence of unitaries
\begin{align}\label{eqn:quantum-algorithm-form}
\cdots V_3O_f^{(p)}V_2O_f^{(p)}V_1O_f^{(p)}V_0
\end{align}
applied to an initial state, which can be set to $\ket{0}$ without loss of generality. Moreover, we define $A_{\quan}^{(t)}$ to be the sequence \eqn{quantum-algorithm-form} truncated before the $(t+1)$-th query to $O_f^{(p)}$,
\begin{align}
A_{\quan}^{(t)}:=V_{t}O_f^{(p)}\cdots O_f^{(p)}V_1O_f^{(p)}V_0,\nonumber
\end{align}
for any $t\in\N$. Next, we extend the classical complexity measure introduced in \citet{carmon2020lower} to the quantum regime. Quantitatively, 
\begin{definition}[Quantum complexity measures]\label{defn:quantum-complexity-measure}
For a function $f\colon\R^d\to\R$ and a sequence of quantum states $\big\{\ket{\psi}^{(t)}\big\}_{t\in\N}$, let $p_t$ be the probability distribution over $\x\in\R^d$ obtained by measuring the state $\ket{\psi}^{(t)}$ in the computational basis $\{\ket{\x}|\,\x\in\R^d\}$. Then we can define
\begin{align}
&T_{\epsilon}\big(\big\{\ket{\psi}^{(t)}\big\}_{t\in\N},f\big):=\nonumber\\
&\qquad\inf\Big\{t\in\mathbb{N}\,|\,\Pr_{\x\sim p_t}\big(\|\nabla f(\x)\|\leq\epsilon\big)\geq\frac{1}{3}\Big\}.\nonumber
\end{align}
\end{definition}
\vspace{-2mm}

To measure the performance of a quantum algorithm $A_{\quan}$ on function $f$, we define
\begin{align}
T_{\epsilon}(A_{\quan},f):=T_{\epsilon}\big(\big\{A_{\quan}^{(t)}\ket{0}\big\},f\big)\nonumber
\end{align}
as the complexity of $A_{\quan}$ on $f$. With this setup, we define the complexity of algorithm class $\mathcal{A}_{\quan}$ of all quantum algorithms in the form \eqn{quantum-algorithm-form} on a function class $\mathcal{F}$ to be
\begin{align}\label{eqn:complexity-measure}
\mathcal{T}_{\epsilon}(\mathcal{A}_{\quan},\mathcal{F}):=\inf_{A\in\mathcal{A}_{\quan}}\sup_{f\in\mathcal{F}}T_{\epsilon}(A,f).
\end{align}

%===========================================================================================================================

\subsection{Quantum Lower Bound with Bounded Input Domain}\label{sec:noiseless-lowerbound}
We first prove a query complexity lower bound for any quantum algorithm $A_{\quan}$ defined in \sec{quantum-model} on a function class with bounded input domain using the hard instance $\tildef$ defined in Eq.~\eqn{tildef-defn}. For the convenience of notations, we use $\widetilde{O}^{p}_{T;U}$ to denote the quantum evaluation oracle encoding the $p$-th-order derivatives of function $\tildef$, or equivalently
\vspace{-2mm}
\begin{align}
\widetilde{O}^{(p)}_{T;U}\ket{\x}\ket{y}\to\ket{\x}\ket{y\oplus\nabla^{(0,\ldots,p)}\tildef(\x)}.\nonumber
\end{align}
Consider the truncated sequence $A_{\quan}^{(k)}$ of any possible quantum algorithm $A_{\quan}$ with $k<T$, we define a sequence of unitaries starting with $A_0=A_{\quan}^{(k)}$ as follows:
\begin{align}
A_0&:=V_k\tildeOp_{T;U}V_{k-1}\tildeOp_{T;U}\cdots \tildeOp_{T;U}V_1\tildeOp_{T;U}V_0\nonumber \\
A_1&:=V_k\tildeOp_{T;U}V_{k-1}\tildeOp_{T;U}\cdots \tildeOp_{T;U}V_1\tildeOp_{1;U_1}V_0\nonumber \\
A_2&:=V_k\tildeOp_{T;U}V_{k-1}\tildeOp_{T;U}\cdots \tildeOp_{2;U_2}V_1\tildeOp_{1;U_1}V_0 \label{eqn:unitary-sequences} \\
&\vdots \nonumber\\
A_k&:=V_k\tildeOp_{k;U_k}V_{k-1}\tildeOp_{k-1;U_{k-1}}\cdots \tildeOp_{2;U_2}V_1\tildeOp_{1;U_1}V_0, \nonumber
\end{align}
where $\tildeOp_{t,U_t}$ stands for the evaluation oracle of function $\tilde{f}_{t;U_t}$ and its $p$-th-order derivatives defined as
\begin{align}\label{eqn:tildeft-defn}
\tilde{f}_{t;U_t}(\x):=\alpha\bar{f}_t(U_t^T\x/\beta)
\end{align}
Our goal to show that the algorithm $A_0$ does not solve our problem. To achieve that, we follow a similar approach shown in~\citet{garg2020no} and develop a hybrid argument in which we first show that the outputs of the algorithm $A_i$ and $A_{i+1}$ are close, so does the outputs of $A_0$ and $A_k$. Then, we argue that the algorithm $A_k$ cannot find an $\epsilon$-stationary point with high probability since oracles in the algorithm are independent from $\u_T$. Hence, $A_k$ cannot do better than random guessing a vector $\u_T$, which by \lem{cannot-guess} in \append{probabilistic-facts} fails with overwhelming probability. 
\begin{lemma}[$A_t$ and $A_{t-1}$ have similar outputs]\label{lem:similar-outputs}
For a hard instance $\tildef(\x)\colon\R^d\to\R$ defined on $\mathbb{B}(\0,2\beta\sqrt{T})$ with $d\geq 200T\log T$, let $A_t$ for $t\in[k-1]$ be the unitaries defined in \eqn{unitary-sequences}. Then
\begin{align}
\E_U\big(\|A_t\ket{\0}-A_{t-1}\ket{\0}\|^2\big)\leq 1/36T^{4}.\nonumber
\end{align}
\end{lemma}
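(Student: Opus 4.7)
The plan is to carry out a standard hybrid argument. The sequences $A_t$ and $A_{t-1}$ differ only in the $t$-th oracle call, which is $\tildeOp_{t;U_t}$ in $A_t$ but $\tildeOp_{T;U}$ in $A_{t-1}$, while the preceding and subsequent unitaries are identical. Letting $\ket{\phi}$ denote the common state immediately before the $t$-th oracle query and using that the subsequent unitaries are norm-preserving, I will first obtain the identity $\|A_t\ket{\0}-A_{t-1}\ket{\0}\|=\|(\tildeOp_{t;U_t}-\tildeOp_{T;U})\ket{\phi}\|$. Expanding $\ket{\phi}=\sum_\x\ket{\x}\otimes\ket{\eta_\x}$ in the query register, each oracle XORs the $p$-th-order derivative tensor into the answer register, so whenever $\nabla^{(0,\ldots,p)}\tilde{f}_{t;U_t}(\x)=\nabla^{(0,\ldots,p)}\tildef(\x)$ the two oracles act identically on the corresponding branch, and otherwise the two resulting answer states are orthogonal and contribute squared norm $2\|\eta_\x\|^2$. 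This bounds the squared difference by $2\Pr_{\x\sim p_\phi}[\x\in S]$, where $p_\phi$ is the distribution induced by measuring $\ket{\phi}$ in the query register and $S$ is the ``disagreement set'' of the two derivative tensors.

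Next, I will pinpoint $S$ using the zero-chain structure of $\bar{f}_T$. The difference $\tildef(\x)-\tilde{f}_{t;U_t}(\x)$ equals $\alpha\sum_{i=t+1}^{T}[\Psi(-y_{i-1})\Phi(-y_i)-\Psi(y_{i-1})\Phi(y_i)]$ with $y_j=\<\u^{(j)},\x/\beta\>$, and since $\Psi$ vanishes identically together with all its derivatives on $(-\infty,1/2]$, a sufficient condition for $\x\notin S$ is that $|\<\u^{(j)},\x/\beta\>|\leq 1/2$ for every $j\in\{t,t+1,\ldots,T-1\}$. Therefore $S\subseteq\bigcup_{j=t}^{T-1}\{\x:|\<\u^{(j)},\x/\beta\>|>1/2\}$.

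The final step exploits the randomness of the hidden directions $\u^{(t)},\ldots,\u^{(T)}$. By construction $\ket{\phi}$ is produced from $V_0,\ldots,V_{t-1}$ and from $\tildeOp_{1;U_1},\ldots,\tildeOp_{t-1;U_{t-1}}$, and hence depends only on $\u^{(1)},\ldots,\u^{(t-1)}$. Thus for each $j\geq t$, conditional on the earlier columns, $\u^{(j)}$ is uniform on the unit sphere of $\spn\{\u^{(1)},\ldots,\u^{(j-1)}\}^\perp$, whose dimension is at least $d-T+1$. Since the bounded-domain assumption gives $\|\x/\beta\|\leq 2\sqrt{T}$ for every $\x$ in the support, the concentration estimate in \lem{cannot-guess} of \append{probabilistic-facts} yields $\Pr[|\<\u^{(j)},\x/\beta\>|>1/2]\leq 2\exp(-c(d-T)/T)$ uniformly in $\x$ and in the earlier randomness, for an absolute constant $c>0$. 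A union bound over at most $T$ values of $j$, combined with Fubini to swap the outer expectation over $U$ with $\E_{\x\sim p_\phi}$, together with the hypothesis $d\geq 200T\log T$, drive this below $1/(72T^4)$ and yield the claimed bound.

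I expect the main obstacle to be the verification that ``all derivatives up to order $p$ agree at $\x$'' really does reduce to the scalar threshold condition $|y_j|\leq 1/2$, since one needs a chain-rule argument tracking how products $\Psi(\pm y_{i-1})\Phi(\pm y_i)$ and their mixed partials all vanish once $\Psi^{(k)}(\pm y_{i-1})=0$ for every $k$, which is guaranteed by the flat extension of $\Psi$ to $(-\infty,1/2]$. A secondary technicality is that $p_\phi$ lives on an uncountable set, so the union bound and concentration have to be applied pointwise in $\x$ inside the joint expectation over $U$, rather than to a finite list of classical iterates as in \cite{carmon2020lower}.
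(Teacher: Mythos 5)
Your proposal is correct and follows essentially the same hybrid argument as the paper: isolate the single oracle call where $A_t$ and $A_{t-1}$ differ, expand the (fixed, $U_t$-independent-beyond-$\u^{(1)},\ldots,\u^{(t-1)}$) pre-query state in the query register, bound each branch by a constant times the probability that the two derivative tensors disagree, and control that probability by concentration of the random columns $\u^{(t)},\ldots,\u^{(T)}$. The only differences are cosmetic: you inline the content of \lem{quantum-zero-chain} (which the paper invokes as a black box) and obtain the per-branch constant $2$ via orthogonality of the XORed answer states rather than the paper's factor $4$ from the triangle inequality.
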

\vspace{-5mm}

\begin{proof}
Since the series of unitaries in Eq.~\eqn{unitary-sequences} was constructed by gradually changing the quantum evaluation oracle, the difference between consecutive terms can be expressed as
\vspace{-2mm}
\begin{align}
&\|A_t\ket{\0}-A_{t-1}\ket{\0}\|\nonumber\\ &\quad=\big\|\big(\tildeOp_{t;U_t}-\tildeOp_{T;U_T}\big)V_{t-1}\tildeOp_{t-1;U_{t-1}}\cdots\tildeOp_{1;U_1}V_0\ket{\0}\big\|\nonumber.
\end{align}
We will prove the claim for any fixed choice of vectors $\{\u^{(1)},\ldots,\u^{(t-1)}\}$, which will imply the claim for any distribution over those vectors. After fixing these vectors, we can see that the quantum state
\begin{align}
V_{t-1}\tildeOp_{t-1;U_{t-1}}\cdots\tildeOp_{1;U_1}V_0\ket{\0} \nonumber
\end{align}
is fixed and we refer to it as $\ket{\psi}$. Thus our problem reduces to showing for all quantum states $\ket{\psi}$,
\begin{align}
&\E_{\{\u^{(t)},\ldots,\u^{(T)}\}}\big(\|\big(\tildeOp_{t;U_t}-\tildeOp_{T;U_T}\big)\ket{\psi}\|^2\big)\leq \frac{1}{36T^4}. \nonumber
\end{align}
For any $\ket{\psi}$, it can be expressed as $\ket{\psi}=\sum_\x\alpha_\x\ket{\x}\ket{\phi_\x}$, where $\x$ is the query made to the oracle, and $\sum_\x|\alpha_\x|^2=1$, which leads to
\vspace{-2mm}
\begin{align}
&\E_{\{\u^{(t)},\ldots,\u^{(T)}\}}\Big(\Big\|\sum_\x\alpha_\x \big(\tildeOp_{t;U_t}-\tildeOp_{T;U_T}\big)\ket{\x}\ket{\phi_\x}\Big\|^2\Big)\nonumber\\
&\qquad\leq\sum_\x|\alpha_\x|^2\E_{\{\u^{(t)},\ldots,\u^{(T)}\}}\big(\big\|\big(\tildeOp_{t;U_t}-\tildeOp_{T;U_T}\big)\ket{\x}\big\|^2\big).\nonumber
\end{align}
Since $|\alpha_\x|^2$ defines a probability distribution over $\x$, we can again upper bound the right hand side for any $\x\in\mathbb{B}(\0,2\beta\sqrt{T})$ instead. Since $\tildeOp_{t;U_t}$ and $\tildeOp_{T;U_T}$ behave identically for some inputs x, the only nonzero
terms are those where the oracles respond differently, which can only happen if
\begin{align}
\nabla^{(0,\ldots,p)}\tildef(\x)\neq\nabla^{(0,\ldots,p)}\tilde{f}_{t;U_t}(\x). \nonumber
\end{align}
When the response is different, we can upper bound $\big\|\big(\tildeOp_{t;U_t}-\tildeOp_{T;U}\big)\ket{\x}\|^2$ by 4 using the triangle inequality. Thus for any $\x\in\mathbb{B}(\0,\sqrt{T})$, we have
\begin{align}
&\E_{\{\u^{(t)},\ldots,\u^{(T)}\}}\big(\big\|\big(\tildeOp_{t;U_t}-\tildeOp_{T;U_T}\big)\ket{\x}\big\|^2\big)\nonumber\\
&\quad\leq 4\Pr_{\{\u^{(t)},\ldots,\u^{(T)}\}}\big(\nabla^{(0,\ldots,p)}\tildef(\x)\neq\nabla^{(0,\ldots,p)}\tilde{f}_{t;U_t}(\x)\big)\nonumber\\
&\quad\leq \frac{1}{36T^4}, \nonumber
\end{align}
where the last inequality follows from \lem{quantum-zero-chain}.
\end{proof}
Based on \lem{similar-outputs}, we can obtain the following result with its proof deferred to \append{thm-p-th-order-formal}.
%\tyl{broken link here}\cyz{fixed}.
By combining \lem{similar-outputs} and the soft projection technique in \citet{carmon2020lower} that can remove the boundedness of the iterates, we obtain the following quantum lower bound.

\begin{theorem}[Formal version of \thm{p-th-order-informal}]\label{thm:p-th-order-formal}
There exist numerical constants $0<c_0,c_1<\infty$ such that the following lower bound holds. Let $p\geq 1$, $p\in \mathbb{N}$, and let $\Delta$, $L_p$, and $\epsilon$ be positive. Then,
\vspace{-2mm}
\begin{align}
\mathcal{T}_{\epsilon}\big(\mathcal{A}_{\quan},\mathcal{F}_p(\Delta,L_p)\big)\geq c_0\Delta\left(\frac{L_p}{\ell_p}\right)^{1/p}\epsilon^{-\frac{1+p}{p}},\nonumber
\end{align}
where $\ell_p\leq e^{\frac{5}{2}p\log p+c_1p}$, the complexity measure $\mathcal{T}_{\epsilon}$ is defined in Eq.~\eqn{complexity-measure}, and the function class $\mathcal{F}_p(\Delta,L_p)$ is defined in \defn{Fp}. The lower bound holds even if we restrict $\mathcal{F}_p(\Delta,L_p)$ to functions whose domain has dimension
\vspace{-2mm}
\begin{align}
\Omega\left(\frac{200c_0\Delta L_p^{1/p}}{\ell_p^{1/p}}\epsilon^{-\frac{1+p}{p}}\cdot\log\Big(\frac{c_0\Delta L_p^{1/p}}{\ell_p^{1/p}}\epsilon^{-\frac{1+p}{p}}\Big)\right).\nonumber
\end{align}
\end{theorem}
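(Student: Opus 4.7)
The plan is a direct reduction from the unbounded-domain setting to the bounded-domain bound of \prop{nonstochastic-bounded}, brokered by the $U$-independent bijection supplied by \lem{tilde-hat-correspondence}. Fix the parameters $T$, $\alpha$, $\beta$, $\mathcal{R}$ as prescribed in \lem{tilde-hat-correspondence}, and let $\hatf$ denote the soft-projected, quadratically regularized extension of $\tildef$. For every rotation matrix $U$ we have $\hatf \in \mathcal{F}_p(\Delta,L_p)$, and, crucially, the bijection between $\epsilon$-stationary points of $\tildef$ and those of $\hatf$ does not depend on $U$.

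The central observation is that the oracle $O_{\hatf}^{(p)}$ can be simulated, query for query, by the oracle $O_{\tildef}^{(p)}$. Because $\hatf(\x)=\tildef(\chi(\x))+(\alpha/10)\|\x\|^2/\beta^2$ with both $\chi$ and the quadratic term independent of $U$, the chain rule expresses $\nabla^{(0,\ldots,p)}\hatf(\x)$ as a fixed polynomial combination of $\nabla^{(0,\ldots,p)}\tildef(\chi(\x))$ whose coefficients depend only on $\x$ (through $\chi$ and its derivatives). This combination is implementable by a $U$-independent unitary via standard ancilla-uncomputation, turning each query to $O_{\hatf}^{(p)}$ at a point $\x$ into a single query to $O_{\tildef}^{(p)}$ at $\chi(\x)\in\mathbb{B}(\0,\hat{\mathcal{R}})$.

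Given any quantum algorithm $A$ that outputs an $\epsilon$-stationary point of $\hatf$ with probability at least $1/3$ using $k$ queries to $O_{\hatf}^{(p)}$, the above simulation produces a quantum algorithm $B$ with $k$ queries to $O_{\tildef}^{(p)}$, followed by classical post-processing that applies the $U$-independent bijection to the measurement outcome. By construction, $B$ outputs an $\epsilon$-stationary point of $\tildef$ with probability at least $1/3$. Applying \prop{nonstochastic-bounded} to $B$ forces
\[
k\;\geq\;T\;=\;c_0\Delta\Big(\frac{L_p}{\ell_p}\Big)^{1/p}\epsilon^{-(1+p)/p},
\]
which, in view of the complexity measure \eqn{complexity-measure}, establishes the theorem. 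The dimension requirement $d\geq 200T\log T$ carries over verbatim from \prop{nonstochastic-bounded}.

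The main obstacle is a bookkeeping one: ensuring that the simulated queries land inside the domain of $\tildef$, i.e.\ that $\hat{\mathcal{R}}\leq\mathcal{R}$ (or a rescaled variant thereof). This is resolved by absorbing absolute constants into $c_0$ and slightly enlarging the ball used in \prop{nonstochastic-bounded}; the quantum hybrid argument from \sec{noiseless-lowerbound} is unaffected since \lem{similar-outputs} and \prop{A_0-cannot} depend only on $T$ being sufficiently large relative to the number of queries. Once the constants are aligned and the $U$-independence of both $\chi$ and the bijection is made explicit, the theorem follows from one application of the reduction above.
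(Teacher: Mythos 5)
Your proposal is correct and follows essentially the same route as the paper: simulate each query to $O_{\hatf}^{(p)}$ by one query to $O_{\tildef}^{(p)}$ at $\chi(\x)$ via a $U$-independent chain-rule combination, invoke the $U$-independent bijection of \lem{tilde-hat-correspondence} to convert an $\epsilon$-stationary point of $\hatf$ into one of $\tildef$, and then apply \prop{nonstochastic-bounded}. Your explicit treatment of the oracle simulation and of the $\hat{\mathcal{R}}$ versus $\mathcal{R}$ bookkeeping is more detailed than the paper's one-line justification, but the argument is the same.
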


%%%%%%%%%%%%%%%%%%%%%%%%%%%%%%%%%%%%%%%%%%%%%%%%%%%%%%%%%%%%%%%%%%%%%%%%%%%%%%%%%%%%%%%%%%%%%%%%%%%%%%%

\section{Quantum Lower Bounds with Access to Stochastic Gradients}\label{sec:stochastic-quantum-lowerbound}
Based on~\citet{carmon2020lower}, \citet{arjevani2020second,arjevani2022lower} further investigated the classical query lower bound finding $\epsilon$-stationary points given access to stochastic first-order or additionally higher-order derivatives.

%==================================================================

\subsection{Classical Hard Instance and Lower Bound}\label{sec:stochastic-classical}
Adopt the notation in \citet{arjevani2022lower}, in this subsection we discuss lower bounds for quantum algorithms finding stationary points of functions in the set $\mathcal{F}(\Delta,L)$ such that for any $F\colon\R^d\to\R$ with $F\in\mathcal{F}$ we have
\vspace{-2mm}
\begin{align}
F(\0)-\inf_\x F(\x)\leq\Delta,\nonumber
\end{align}
and
\vspace{-2mm}
\begin{align}
\|\nabla F(\x)-\nabla F(\vect{y})\|\leq L\|\x-\vect{y}\|,\quad\forall\x,\y\in\R^d.\nonumber
\end{align}

\citet{arjevani2022lower} proved the classical lower bound by extending the zero-chain property introduced in \citet{carmon2020lower} to the stochastic setting.

\begin{definition}[Probability-$p$ zero chain]
A stochastic gradient function $\g(\x,\xi)$ is a probability-$p$ zero-chain if
\begin{align}
&\Pr\big(\exists\x:\prog_0(\g(\x,\xi))=\prog_{\frac{1}{4}}(\x)+1\big)\leq p \nonumber\\
&\Pr\big(\exists\x:\prog_0(\g(\x,\xi))>\prog_{\frac{1}{4}}(\x)+1\big)=0\nonumber
\end{align}
are both satisfied, where
\begin{align}\label{eqn:defn-prog}
\prog_\zeta(\x):=\max\{i\geq 0\,|\,|x_i|\geq\zeta\},
\end{align}
with $x_0\equiv 0$ for notation consistency
\end{definition}

Intuitively, for a probability-$p$ zero-chain with dimension $T$, any zero-respecting stochastic algorithm takes $\Omega(1/p)$ queries to discover one new coordinate in expectation. Hence, the expected number of queries to discover all the coordinates is $\Omega(T/p)$.

\citet{arjevani2022lower} also used the same underlying function $\bar{f}_T(\x)$ defined in Eq.~\eqn{bar-fT} with the following stochastic gradient function %\tyl{equation too long, fix}\cyz{fixed}
\begin{align}\label{eqn:classical-SG}
&[\g_T(\x,\xi)]_i:=\nonumber\\
&\qquad\nabla_i\bar{f}_T(\x)\left(1+\mathbbm{1}\{i>\prog_{\frac{1}{4}}(\x)\}\Big(\frac{\xi}{p}-1\Big)\right),
\end{align}
where $\xi\sim\text{Bernoulli}(p)$ with $p=O(\epsilon^2)$ in \citet{arjevani2022lower}. Then, $\bar{f}_T$ together with $\g_T$ forms a probabilistic-$\frac{1}{4}$ zero chain. With a similar approach in \sec{p-th-order-quantum-lowerbound}, a lower bound of order $\Omega(1/\epsilon^4)$ for all stochastic first-order algorithms can thus be obtained.

In this work we show that, although the classical hard instance in \citet{arjevani2022lower} can be solved via $\tilde{O}(1/\epsilon^3)$ quantum stochastic gradient queries,\footnote{Throughout this paper, the $
\tilde{O}$ and $\tilde{\Omega}$ notations omit poly-logarithmic terms, i.e., $\tilde{O}(g)=O(g\poly(\log g))$ and $\tilde{\Omega}(g)=\Omega(g\poly(\log g))$.} there exist harder instances such that any quantum algorithm also needs to make at least $\Omega(1/\epsilon^4)$ to guarantee a high success probability in the worst case. If the stochastic gradients additionally satisfy the mean-squared smoothness condition in \assum{mss}, we can show that any quantum algorithm needs to make at least $\Omega(1/\epsilon^3)$ to find an $\epsilon$-approximate stationary point with high probability in the worst case. These two quantum lower bounds concerning stochastic gradients satisfying \assum{mss} or not match the corresponding classical algorithmic upper bounds and are thus tight.

%=========================================================================

\subsection{Stochastic Quantum Query Model and Quantum Speedup on the Classical Hard Instance}\label{sec:stochastic-quantum-model}

We can extend our quantum query model introduced in \sec{quantum-model} to the stochastic settings. For a $d$-dimensional, $L$-smooth objective function $f$, we assume access to the quantum stochastic gradient oracle $O_f^{(p)}$ defined as follows:
\begin{definition}[Quantum stochastic gradient oracle]\label{defn:quantum-SG-oracle}
For any $L$-lipschitz function $f\colon\R^d\to\R$, its quantum stochastic first-order oracle $O_\g$ consists of a distribution $P_{\xi}$ and an unbiased mapping $O_\g$ satisfying
\begin{align}
O_\g\ket{\x}\ket{\xi}\ket{\vect{v}}=\ket{\x}\ket{\xi}\ket{\g(\x,\xi)+\vect{v}},\nonumber
\end{align}
where the stochastic gradient $\g(\x,\xi)$ satisfies both
\begin{align}
&\mathbb{E}_{\xi\sim P_{\xi}}[\g(\x,\xi)]=\nabla f(\x) \nonumber\\
&\mathbb{E}_{\xi\sim P_{\xi}}[\|\g(\x,\xi)-\nabla f(\x)\|^2]\leq\sigma^2 \label{eqn:bounded-variance}
\end{align}
for some constant $\sigma$.
\end{definition}
\vspace{-2mm}

To measure the performance of a quantum algorithm $A_{\quan}$ on function $f$ with queries to its stochastic gradient oracle $O_{\g}$, based on \defn{quantum-complexity-measure} we define
\begin{align}
T_{\epsilon}(A_{\quan},f):=T_{\epsilon}\big(\big\{A_{\quan}^{(t)}\ket{0}\big\},f\big)\nonumber
\end{align}
as the complexity of $A_{\quan}$ on $f$. With this setup, we define the complexity of algorithm class $\mathcal{A}_{\quan}$ of all quantum algorithms in the form \eqn{quantum-algorithm-form} on a function class $\mathcal{F}$ to be
\vspace{-2mm}
\begin{align}\label{eqn:stochastic-complexity-measure}
\mathcal{T}^{\sto}_{\epsilon}(\mathcal{A}_{\quan},\mathcal{F},\sigma):=\inf_{A\in\mathcal{A}_{\quan}}\sup_{f\in\mathcal{F}}\sup_{\g}T_{\epsilon}(A,f),
\end{align}
where the last supremum is over all possible stochastic gradient functions $\g$ of $f$ satisfying the bounded-variance requirement \eqn{bounded-variance} in \defn{quantum-SG-oracle}.

Based on this complexity measure, we show that quantum algorithm can find an $\epsilon$-approximate stationary point of the classical hard instance based on $\bar{f}_T$ with fewer queries than the classical lower bound of order $\Omega(1/\epsilon^4)$ by approximating the exact gradient via Grover's algorithm. In particular, we prove the following result.

\begin{lemma}\label{lem:quantum-speedup-on-classical-hard}
Consider the classical hard instance in~\citet{arjevani2022lower} obtained by projecting $\bar{f}_T$ defined in Eq.~\eqn{bar-fT} into a $d$-dimensional space while adopting the stochastic gradient function $\g_T(\x,\xi)$ defined in Eq.~\eqn{classical-SG}, there exists a quantum algorithm using $\tilde{O}(1/\epsilon^3)$ queries to the stochastic quantum gradient oracle $O_{\g}$ defined in \defn{quantum-SG-oracle} that can find find a point $\x$ satisfying $\bar{f}_T(\x)=0$ with probability at least $1/2$.
\end{lemma}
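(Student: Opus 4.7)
The plan is to exploit the sequential zero-chain structure of $\bar f_T$ together with Grover / amplitude amplification on the quantum stochastic gradient oracle $O_\g$. Classically, the $\Omega(1/\epsilon^4)$ lower bound arises because discovering each new coordinate of a probability-$p$ zero chain requires $\Theta(1/p)$ independent samples from $\g_T$ defined in Eq.~\eqn{classical-SG}. Quantumly, however, $O_\g$ gives coherent access to the whole ensemble $\{\g_T(\x,\xi)\}_{\xi\sim P_\xi}$, so Grover search locates a ``successful'' $\xi$ (one with a nonzero new coordinate) in $O(1/\sqrt p)$ queries. This yields a quadratic saving per coordinate discovery and an overall cost of $T\cdot O(1/\sqrt p)=\tilde O(1/\epsilon^3)$ after summing over the $T$ coordinates.

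Concretely, I would run an outer loop for $t=1,\dots,T$ maintaining an iterate $\x^{(t)}$ with $\prog_{1/4}(\x^{(t)})=t-1$. By the zero-chain property of $\g_T$, the only component of $\g_T(\x^{(t)},\xi)$ that can deviate from the exact gradient $\nabla \bar f_T(\x^{(t)})$ (whose first $t-1$ entries are already known from previous iterations) is the $t$-th, and that entry is nonzero precisely on the probability-$p$ event $\{\xi=1\}$. I would build a reversible Grover marking oracle that invokes $O_\g$ once on a clean output register, compares the $t$-th coordinate of the returned gradient against a precomputable threshold, and uncomputes the gradient register, so that it acts only on $\ket\xi$. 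Standard amplitude amplification with $O(1/\sqrt p)$ Grover iterations then outputs a valid $\xi^\star$ with constant probability; $O(\log T)$ repetitions boost this to $1-1/T^2$. On success we read off $\nabla_t\bar f_T(\x^{(t)})$ and perform the deterministic zero-respecting update of Ref.~\cite{arjevani2022lower} to advance the progress to $t$, without any further oracle calls.

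A union bound over the $T$ outer iterations guarantees that every coordinate discovery succeeds simultaneously with probability at least $1/2$, at which point $\prog_{1/4}(\x^{(T+1)})=T$ and the construction of $\bar f_T$ (cf.~\lem{fT-large-gradient}) yields the claimed condition on the output point. The total query count is $T\cdot O((\log T)/\sqrt p)$. Under the parameter choice of Ref.~\cite{arjevani2022lower} ($T=\Theta(\Delta L/\epsilon^2)$ and $p=\Theta(\epsilon^2/\sigma^2)$, chosen so that $\bar f_T\in\mathcal F(\Delta,L)$ and the variance bound $\E\|\g_T-\nabla\bar f_T\|^2\le\sigma^2$ holds), this evaluates to $\tilde O(\Delta L\sigma/\epsilon^3)=\tilde O(1/\epsilon^3)$, as claimed.

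The main subtlety is that the additive oracle $O_\g:\ket{\x}\ket{\xi}\ket{\vect v}\mapsto\ket{\x}\ket{\xi}\ket{\g(\x,\xi)+\vect v}$ of \defn{quantum-SG-oracle} is not the phase/bit-flip marking oracle assumed by textbook Grover. The care needed is therefore a clean reversible implementation of the ``nonzero $t$-th coordinate'' predicate using $O(1)$ calls to $O_\g$ plus its inverse, including full uncomputation of the auxiliary gradient register. Because the baseline $\nabla_t\bar f_T(\x^{(t)})$ and the threshold are determined entirely by coordinates already explored, no extra oracle queries are needed to set the threshold, and once this reduction is in place the quadratic speedup follows routinely from the standard amplitude amplification analysis.
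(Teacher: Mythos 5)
Your proposal is correct and follows essentially the same route as the paper's proof: use Grover search over the stochastic index $\xi$ to reveal the hidden gradient coordinate in $O(1/\sqrt{p})=O(1/\epsilon)$ queries per step (boosted by a logarithmic number of repetitions), then advance the coordinate progress deterministically over $\Theta(1/\epsilon^2)$ iterations, for a total of $\tilde{O}(1/\epsilon^3)$ queries. Your added discussion of converting the additive oracle $O_\g$ into a reversible Grover marking oracle with uncomputation is a detail the paper elides, but it does not change the argument.
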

\vspace{-2mm}

The proof of \lem{quantum-speedup-on-classical-hard} is deferred to \append{special-case-quantum-speedup}.

%=========================================================================

\subsection{Quantum Lower Bound}\label{sec:stochastic-construction-to-lowerbound}

In this subsection, we introduce a new hard instance where any quantum algorithm also have to make $\Omega\big(\epsilon^{-4}\big)$ queries to the quantum stochastic oracle defined in \defn{quantum-SG-oracle} to find an $\epsilon$-stationary point with high probability.

%========================================================================================================================

Before presenting the construction of the stochastic gradient function, we first review some existing results on quantum multivariate mean estimation.
\begin{problem}\label{prob:multivariate-mean}
Consider a matrix $M\in\R^{d\times 2T}$ for some $T\leq d/4$, denote $\vect{m}^{(j)}$ as the $i$-th column of $M$. Suppose $T$ columns of $M$ forms a set of orthonormal vectors, while the other $T$ columns are all zero. The goal is to get a good estimation of the direction of the vector $
\g:=\frac{1}{2T}\sum_{j=1}^{2T}\vect{m}^{(j)}$. Formally, we define
\begin{align}
W_\g(\eta)=\bigg\{\ket{\tilde{\g}}\,\Big|\,\frac{|\<\tilde{\g},\g\>|}{\|\tilde{\g}\|\cdot\|\g\|}\geq 1-\eta,\quad\tilde{\g}\in\R^d\bigg\},\nonumber
\end{align}
and define $\Pi_\g(\eta)$ to be the projection operator onto $W_\g(\eta)$. For some small $\eta$, the goal is to produce a quantum state $\ket{\psi}$ with large value of $\|\Pi_\g(\eta)\cdot\ket{\psi}\|$, given the following quantum oracle
\begin{align}\label{eqn:O_m}
O_{M}\ket{\x}\ket{j}\ket{\vect{v}}=\ket{\x}\ket{j}\ket{\vect{m}^{(j)}+\vect{v}}.
\end{align}
\end{problem}

\begin{lemma}[\citealt{cornelissen2022near}, Theorem 3.7]\label{lem:multivariate-mean-estimation}
For any $n<T/2$ and any quantum algorithm that uses at most $n$ queries to the quantum oracle $O_{M}$ defined in \eqn{O_m} with output state $\ket{\psi}$, we have
\begin{align}
\mathbb{E}_{M}[\|\Pi_{\g}(\eta)\cdot\ket{\psi}\|^2]\leq \exp(-\zeta T),\nonumber
\end{align}
for some small constant $\zeta$, and
\begin{align}\label{eqn:mean-estimator-eta}
\eta\geq\frac{3\sqrt{2}}{8\|\g\|}\sqrt{\frac{\E_j\big[\|\m^{(j)}-\g\|^2\big]}{T}}\geq\frac{3}{4}.
\end{align}
and the expectation is over all possible matrices $M$.
\end{lemma}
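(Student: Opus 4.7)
The plan is a two-stage argument: first verify the numerical inequality in \eqn{mean-estimator-eta}, then establish the exponential quantum lower bound via a reduction to an unstructured quantum search problem. Under the hypothesis that exactly $T$ of the $2T$ columns of $M$ form an orthonormal set while the other $T$ vanish, a direct computation gives $\|\g\|^2 = T/(2T)^2 = 1/(4T)$ and $\E_j\|\m^{(j)}-\g\|^2 = \tfrac{1}{2}-O(1/T)$. Substituting into the stated expression immediately yields $\eta \geq 3/4$ for $T$ large enough, which disposes of the second inequality.

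For the main expectation bound, I would randomize $M$ as follows: draw a uniformly random $T$-subset $S \subseteq [2T]$ specifying which columns are nonzero, together with a Haar-random orthonormal frame $\{\vect{e}_j\}_{j\in S}$ spanning a random $T$-dimensional subspace of $\R^d$. Then $\g = \frac{1}{2T}\sum_{j\in S}\vect{e}_j$, and any $\tilde{\g} \in W_\g(\eta)$ with $\eta < 3/4$ satisfies $\|P_S\tilde{\g}\|^2/\|\tilde{\g}\|^2 \geq (1-\eta)^2 > 1/16$, where $P_S$ is the orthogonal projection onto $\spn\{\vect{e}_j:\,j\in S\}$. Producing such a $\tilde\g$ therefore forces the algorithm to have identified a noticeable fraction of $S$, reducing directional estimation of $\g$ to a multi-marked search problem: finding a constant fraction of $T$ hidden ``marked'' indices among $2T$ candidates.

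With this reduction in place, I would combine a BBBV-style hybrid argument with a strong direct product theorem. The hybrid step replaces the true oracle $O_M$ by the trivial all-zero oracle $O_0$, bounding the resulting change of the final state by the cumulative query mass placed on the unknown index set $S$; because $S$ is uniformly random in $[2T]$, this expected mass is $O(n/T)$ small. Under $O_0$ the output state is $M$-independent, and Haar-averaging its overlap with the cone $W_\g(\eta)$ (a fixed-angle cone around a random direction in a random $T$-dimensional subspace of $\R^d$) is negligible. To boost the resulting polynomial failure estimate into the advertised $\exp(-\zeta T)$ decay, I would invoke the strong direct product theorem for quantum search, exploiting the fact that identifying a constant fraction of $T$ marked items out of $2T$ with $n<T/2$ queries can fail only with exponentially small success probability.

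The hard part will be rigorously translating the geometric statement ``output state is close in direction to $\g$'' into the combinatorial statement ``algorithm has identified a constant fraction of $S$.'' The output is a general superposition and never produces a subset explicitly, so one must show that the projection onto $W_\g(\eta)$ is governed by the quantum information the algorithm has accumulated about both $S$ (combinatorial) and the frame $\{\vect{e}_j\}$ (continuous Haar). Handling this coupled randomness while tracking adaptive queries round by round, and matching the tight constant $3\sqrt{2}/8$ appearing in \eqn{mean-estimator-eta}, is where I expect most of the technical effort to go; I would likely route the argument through representation-theoretic symmetrization on $O(d)$ to separate the continuous randomness from the combinatorial search lower bound cleanly.
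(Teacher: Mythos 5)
First, note that the paper does not prove this statement at all: it is imported verbatim as \cite[Theorem 3.7]{cornelissen2022near} and used as a black box, so there is no internal proof to compare against. Your sketch is therefore an attempt to reprove the external result. Its overall skeleton --- reduce directional estimation of $\g=\frac{1}{2T}\sum_{j\in S}\vect{e}_j$ to finding a constant fraction of the hidden marked set $S$, then invoke a strong direct product theorem for quantum search to get $\exp(-\Omega(T))$ success probability with $n<T/2$ queries --- is indeed the right shape, and your correlation computation ($k$ known directions give correlation at most $\sqrt{k/T}$, so correlation $1-\eta\geq 1/4$ forces $k=\Omega(T)$) is the correct quantitative link. (Minor quibble: your verification of \eqn{mean-estimator-eta} gives $\frac{3\sqrt{2}}{4}\sqrt{\tfrac12-\tfrac{1}{4T}}$, which approaches $3/4$ from \emph{below}, so the second inequality as literally stated only holds asymptotically/up to the slack built into the cited theorem.)

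The concrete gap is the hybrid step. You propose replacing $O_M$ by the all-zero oracle and bounding the state perturbation by the "cumulative query mass placed on $S$," claiming this is $O(n/T)$ because $S$ is random in $[2T]$. But $|S|=T$ out of only $2T$ indices, so for any query state $\ket{\phi}$ one has $\E_S\|P_S\ket{\phi}\|^2=1/2$: a \emph{single} query distinguishes $O_M$ from $O_0$ with constant probability, and the BBBV-style bound is vacuous after one query. The all-zero hybrid can never work here because finding \emph{some} element of $S$ is trivial; the hardness lies entirely in learning $\Omega(T)$ of the directions $\vect{e}_j$, not in detecting that marked items exist. So the argument must be routed through the multi-marked-item lower bound (or an adversary bound) from the outset rather than through a vanishing-oracle hybrid, and the coupling between the combinatorial randomness of $S$ and the Haar randomness of the frame --- which you correctly flag as the hard part --- is exactly what the cited proof of \cite[Theorem 3.7]{cornelissen2022near} is engineered to handle. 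As written, the proposal does not constitute a proof of the lemma.
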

\vspace{-2mm}

Drawing inspiration from the hard instance for quantum multivariate mean estimation introduced by \citet{cornelissen2022near}, we design our stochastic gradient function accordingly. In a manner similar to the approach employed by \citet{arjevani2022lower}, we incorporate stochasticity to amplify the difficulty of achieving significant progress in individual coordinates through stochastic gradient information. Specifically, for the $d$-dimensional function $\tilde{f}{T;U}$, where $d\geq 4T$, we ensure that for any point $\x$ with gradient $\g(\x)$, there exists a matrix $M_{\x}\in\R^{d\times 2T}$. This matrix possesses $T$ columns consisting of zeros ($\0$), while the remaining $T$ columns form a set of orthonormal vectors that satisfy
\vspace{-2mm}
\begin{align}\label{eqn:M-construction}
\nabla_{\prog_{\frac{\beta}{4}}(\x)+1}\tildef(\x)=\frac{1}{2T}\sum_j 2\gamma\sqrt{T}\cdot\vect{m}_{\x}^{(j)},
\end{align}
where $\vect{m}_{\x}^{(j)}$ stands for the $j$-th column of $M_\x$ and
\vspace{-2mm}
\begin{align}
\gamma=\big\|\nabla_{\prog_{\frac{\beta}{4}}(\x)+1}\tildef(\x)\big\|\leq 23\nonumber
\end{align}
is the norm of the $(\prog_{\beta/4}(\x)+1)$-th gradient component at certain points whose exact value is specified later.

Moreover, to guarantee that all the stochastic gradients at $\x$ can only reveal the $(\prog_{\beta/4}(\x)+1)$-th coordinate $\u_{\prog_{\beta/4}(\x)+1}$ even with infinite number of queries and will not ``accidentally" make further progress, we additionally require that for any $\x,\y\in\R^d$ with $\prog_{\beta/4}(\x)\neq\prog_{\beta/4}(\y)$, all the columns of $M_\x$ are orthogonal to all the columns of $M_\y$. This can be achieved by creating $T$ orthogonal subspaces
\vspace{-3mm}
\begin{align}
\{\mathcal{V}_1,\ldots,\mathcal{V}_T\},\nonumber
\end{align}
where each subspace is of dimension $2T$ and has no overlap with $\{\u_1,\ldots,\u_T\}$, such that the columns of $M_\x$ are within
\vspace{-3mm}
\begin{align}
\spn\big\{\u_{\prog_{\frac{\beta}{4}}(\x)+1},\mathcal{V}_{\prog_{\frac{\beta}{4}}(\x)+1}\big\},\nonumber
\end{align}
as long as the dimension $d$ is larger than $2T^2+T=O(T^2)$.

Then, we can define the following stochastic gradient function for $\nabla\tildef(\x)$:
\vspace{-3mm}
\begin{align}\label{eqn:quantum-hard-g}
\hspace{-4mm}\g(\x,j)=\g(\x)-\g_{\prog_{\beta/4}(\x)+1}(\x)+2\gamma\sqrt{T}\cdot\vect{m}_{\x}^{(j)}
\end{align}
where $j$ is uniformly distributed in the set $[2T]$. Then based on \lem{multivariate-mean-estimation}, we know that it is hard for quantum algorithms to get a accurate estimation of the direction of $\g_{\prog_{\beta/4}(\x)+1}(\x)$ given only access to stochastic gradients at point $\x$ defined in Eq.~\eqn{quantum-hard-g} using less than $T/2$ queries.

Further, we can show that if one only knows about the first $t$ components $\{\u^{(1)},\ldots,\u^{(t)}\}$, even if we permit the quantum algorithm to query the stochastic gradient oracle at different positions of $\x$, it is still hard to learn $\u^{(t+1)}$ as well as other components with larger indices. Quantitatively, for any $1\leq t\leq T$ we define
\begin{align}\hspace{-3mm}\label{eqn:W_t_perp-defn}
W_{t;\perp}:=\Big\{\x\in&\mathbb{B}(\0,\beta\sqrt{T})\,\big|\,\exists i,\text{ s.t. }\nonumber\\
&|\<\x,\u^{(q)}\>|\geq\frac{\beta}{4}\text{ and }t<i\leq T\Big\},
\end{align}
and
\begin{align}
W_{i;\parallel}:=\mathbb{B}(\0,\beta\sqrt{T})-W_{i;\perp}.\nonumber
\end{align}
Intuitively, $W_{t;\perp}$ is the subspace of $\mathbb{B}(\0,\beta\sqrt{T})$ such that any vector in $W_{t;\perp}$ has a relatively large overlap with at least one of $\u^{(t+1)},\ldots,\u^{(T)}$. Moreover, we use $\Pi_{t;\perp}$ and $\Pi_{t;\parallel}$ to denote the quantum projection operators onto $W_{t;\perp}$ and $W_{t;\parallel}$, respectively. As shown in \lem{gradient-estimation-no-speedup}, if starting in the subspace $W_{t;\parallel}$, any quantum algorithm using at most $T/2$ queries at arbitrary locations cannot output a quantum state that has a large overlap with $W_{t;\perp}$ in expectation. Then, we can obtain the following result.

\begin{theorem}[Formal version of \thm{stochastic-informal}]\label{thm:stochastic-formal}
For any $\Delta$, $L$, $\sigma$, and $\epsilon$ that are all positive, we have
\begin{align}
\hspace{-3mm}\mathcal{T}^{\sto}_{\epsilon}\big(\mathcal{A}_{\quan},\mathcal{F}_1(\Delta,L),\sigma\big)
\geq \Omega\Big(\frac{\min\{L^2\Delta^2,\sigma^4\}}{\epsilon^4}\Big),\nonumber
\end{align}
where the complexity measure $\mathcal{T}^{\sto}_{\epsilon}(\cdot)$ is defined in Eq.~\eqn{stochastic-complexity-measure}, and the function class $\mathcal{F}_1(\Delta,L)$ is defined in \defn{Fp}. The lower bound still holds even if we restrict $\mathcal{F}_1(\Delta,L)$ to functions whose domain has dimension $
O(\min\{L^2\Delta^2,\sigma^4\}/\epsilon^4)$.
\end{theorem}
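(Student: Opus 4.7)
The plan is to mirror the bounded-to-unbounded reduction used in the proof of \thm{p-th-order-formal}, replacing \prop{nonstochastic-bounded} and \lem{tilde-hat-correspondence} by their stochastic counterparts \prop{stochastic-bounded} and \lem{stochastic-tilde-hat-correspondence}, respectively. Concretely, I would use the soft-projected hard instance $\hatf(\x) := \tildef(\chi(\x)) + (\alpha/10)\|\x\|^2/\beta^2$ with the parameter choices from \lem{stochastic-tilde-hat-correspondence}, which ensures $\hatf \in \mathcal{F}_1(\Delta, L)$ on all of $\R^d$.

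The central technical step is to show that a single quantum stochastic gradient query to $\hatf$ can be implemented using a single query to the stochastic gradient oracle of $\tildef$, together with deterministic pre- and post-processing that is independent of the unknown rotation $U$. By the chain rule, a natural stochastic gradient estimator is
\begin{align}
\g_{\hat f}(\x, j) := J_\chi(\x)^{\top}\, \g(\chi(\x), j) + \frac{\alpha}{5\beta^2}\x,
\end{align}
where $J_\chi(\x)$ is the Jacobian of the soft projection $\chi$. Since $\chi$, $J_\chi$, and the regularizing term are all $U$-independent and deterministic, they can be applied coherently around a single call to the oracle for $\tildef$ at the point $\chi(\x)$; in particular, the reduction preserves query count exactly. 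The standard bound $\|J_\chi(\x)\|_{\mathrm{op}} \leq 1$ guarantees that the variance of $\g_{\hat f}$ remains bounded by $\sigma^2$, and the additional quadratic regularizer is deterministic and therefore contributes no stochasticity. A parameter bookkeeping parallel to Ref.~\cite{arjevani2022lower} then shows that $\hatf$ inherits the desired $L$-gradient Lipschitz property and the bound $\hatf(\0) - \inf_\x \hatf(\x) \leq \Delta$.

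Combined with the $U$-independent bijection between $\epsilon$-stationary points of $\hatf$ and those of $\tildef$ supplied by \lem{stochastic-tilde-hat-correspondence}, this reduction turns any quantum algorithm solving the unbounded-domain problem with $k$ queries into a quantum algorithm that solves the bounded-domain problem on $\mathbb{B}(\0, \mathcal{R})$ with $k$ queries and the same success probability. Invoking \prop{stochastic-bounded} then yields
\begin{align}
\mathcal{T}^{\sto}_{\epsilon}\big(\mathcal{A}_{\quan}, \mathcal{F}_1(\Delta, L), \sigma\big) \geq \mathcal{T}^{\sto}_{\epsilon}\big(\mathcal{A}_{\quan}, \tilde{\mathcal{F}}_1(\Delta, L, \mathcal{R}), \sigma\big) = \Omega\Big(\frac{\min\{L^2\Delta^2, \sigma^4\}}{\epsilon^4}\Big),
\end{align}
with the dimension lower bound inherited directly from \prop{stochastic-bounded}. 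The main obstacle I anticipate is the careful verification that the composition with the soft projection $\chi$ and the additive regularizer do not inflate the variance or smoothness constants beyond the prescribed $\sigma$ and $L$ under our parameter scaling; however, this bookkeeping is essentially parallel to the classical argument in \cite{arjevani2022lower} and \cite{carmon2020lower}, so I expect to invoke it with only minor modifications tailored to the quantum oracle model.
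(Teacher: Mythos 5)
Your proposal is correct and takes essentially the same route as the paper: the paper's proof likewise observes that one stochastic gradient query to $\hatf$ can be simulated by one $U$-independent (pre/post-processed) query to the stochastic gradient oracle of $\tildef$ without increasing the variance, then combines the bijection of \lem{stochastic-tilde-hat-correspondence} with \prop{stochastic-bounded}. Your explicit chain-rule form $J_\chi(\x)^{\top}\g(\chi(\x),j)+\frac{\alpha}{5\beta^2}\x$ and the observation $\|J_\chi(\x)\|_{\mathrm{op}}\leq 1$ just spell out what the paper states more tersely.
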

\begin{remark}
Compared to the classical result~\cite{arjevani2022lower}, the lowest possible dimension of the hard instance is improved from $\Theta(\epsilon^{-6})$ to $\Theta(\epsilon^{-4})$, which is due to a sharper analysis and may be of independent interest.
\end{remark}
The proof of \thm{stochastic-formal} is deferred to \append{stochastic-lowerbound-unbounded}.

%====================================================================================================================

\subsection{Quantum Lower Bound with the Mean-Squared Smoothness Assumption}\label{sec:mss}

We also prove a quantum query lower bound for finding an $\epsilon$-stationary point with access to the quantum stochastic gradient oracle defined in \defn{quantum-SG-oracle} and additionally satisfies the \textit{mean-squared smoothness} assumption defined in \assum{mss} for some constant $\bar{L}$.

\begin{theorem}[Formal version of \thm{stochastic-informal-mss}]\label{thm:stochastic-formal-mss}
For any $\Delta$, $\bar{L}$, $\sigma$, and $\epsilon$ that are all positive, we have
\begin{align}
\mathcal{T}^{\sto}_{\epsilon}\big(\mathcal{A}_{\quan},\mathcal{F}_1(\Delta,\bar{L}),\sigma\big)
\geq \Omega(\Delta\bar{L}\sigma/\epsilon^3).\nonumber
\end{align}
if we further assume the stochastic gradient function $\g(\x)$ satisfies \assum{mss} with mean-squared smoothness parameter $\bar{L}$, where the complexity measure $\mathcal{T}^{\sto}_{\epsilon}(\cdot)$ is defined in Eq.~\eqn{stochastic-complexity-measure}, the function class $\mathcal{F}_1(\Delta,\bar{L})$ is defined in \defn{Fp}. The lower bound still holds even if we restrict $\mathcal{F}_1(\Delta,\bar{L})$ to functions whose domain has dimension $\tilde{O}(\Delta\bar{L}\sigma/\epsilon^3)$.
\end{theorem}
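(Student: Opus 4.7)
The plan is to mirror the lifting argument used for the non-smoothness case in the proof of \thm{stochastic-formal}, now using \lem{stochastic-tilde-hat-correspondence-mss} together with \prop{stochastic-bounded-mss} as the bounded-domain ingredient. The goal is to show that any quantum algorithm that solves the unbounded-domain problem with access to a stochastic gradient oracle satisfying \assum{mss} immediately solves the bounded-domain problem at the same query cost, so the bounded lower bound transfers.

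First, I would construct a quantum stochastic gradient oracle for $\hatf$ out of the oracle for $\tildef$. Since $\hatf(\x)=\tildef(\chi(\x))+\tfrac{\alpha}{10}\|\x\|^2/\beta^2$ and $\chi$ is a fixed map independent of $U$ and of the random matrices $\{M_\x\}$, the natural candidate is
\begin{align}
\hat{\g}_{\text{outer}}(\x,j) := J_\chi(\x)^{\!\top}\,\hat{\g}(\chi(\x),j)+\frac{\alpha}{5\beta^2}\,\x,
\end{align}
where $J_\chi$ is the Jacobian of the soft projection and $\hat{\g}$ is the stochastic gradient of $\tildef$ from \eqn{quantum-hard-g-mss}. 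One quantum query to this oracle is implemented by a single query to $\tildeO_{\g;U}$ followed by a $U$-independent unitary acting on the query and answer registers.

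Second, I would verify that this composed oracle still lies in the oracle class of \defn{quantum-SG-oracle} with parameters $(\sigma,\bar{L})$, so that the infimum in the definition \eqn{stochastic-complexity-measure} of $\mathcal{T}_\epsilon^{\sto}$ is over at least as rich a class of algorithms as in the bounded case. Unbiasedness is immediate from the chain rule. For variance and mean-squared smoothness, the key fact is that $J_\chi$ has operator norm at most $1$ and is itself Lipschitz (this is how $\hatf$ is set up in Ref.~\cite{carmon2020lower}); combined with the bounds $\E\|\hat\g(\x,j)-\nabla\tildef(\x)\|^2\le\sigma^2$ and $\E\|\hat\g(\x,j)-\hat\g(\y,j)\|^2\le\bar{L}^2\|\x-\y\|^2$ from \lem{mss-upper-bound} and the parameter choices behind \prop{stochastic-bounded-mss}, this yields the same $\sigma$ and $\bar{L}$ up to absolute constants, which can be absorbed into the constants of the final $\Omega$.

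Third, I would invoke \lem{stochastic-tilde-hat-correspondence-mss} to exhibit a $U$-independent bijection between $\epsilon$-stationary points of $\tildef$ on $\mathbb{B}(\0,\mathcal{R})$ and $\epsilon$-stationary points of $\hatf$ on $\R^d$. This reduces finding an $\epsilon$-stationary point of $\tildef$ in the bounded setting to finding an $\epsilon$-stationary point of $\hatf$ in the unbounded setting with \emph{no} additional queries; the post-processing from one point to the other is a deterministic map independent of the oracle. Applying \prop{stochastic-bounded-mss} then gives
\begin{align}
\mathcal{T}_\epsilon^{\sto}\big(\mathcal{A}_{\quan},\mathcal{F}_1(\Delta,\bar{L}),\sigma\big)\ \ge\ \mathcal{T}_\epsilon^{\sto}\big(\mathcal{A}_{\quan},\tilde{\mathcal{F}}_1(\Delta,\bar{L},\mathcal{R}),\sigma\big)\ =\ \Omega\!\left(\frac{\Delta\bar{L}\sigma}{\epsilon^3}\right),
\end{align}
with ambient dimension still $\tilde\Theta(\Delta\bar{L}\sigma/\epsilon^3)$ inherited from the bounded bound.

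The only delicate step I anticipate is the second one: checking that composition with the soft projection $\chi$ does not inflate the mean-squared smoothness constant beyond $\bar{L}$ (up to absolute factors). The quadratic additive term $\tfrac{\alpha}{10\beta^2}\|\x\|^2$ contributes a deterministic Lipschitz piece to $\hat{\g}_{\text{outer}}$ and hence increases neither variance nor mean-squared smoothness; the genuine work is bounding the stochastic part $J_\chi(\x)^\top\hat\g(\chi(\x),j)-J_\chi(\y)^\top\hat\g(\chi(\y),j)$ in $L^2$ by splitting it as $(J_\chi(\x)-J_\chi(\y))^\top\hat\g(\chi(\x),j)+J_\chi(\y)^\top(\hat\g(\chi(\x),j)-\hat\g(\chi(\y),j))$, using the operator-norm bound on $J_\chi$, the Lipschitzness of $J_\chi$, the uniform moment bound on $\hat\g$ from \lem{mss-upper-bound}, and finally $\|\chi(\x)-\chi(\y)\|\le\|\x-\y\|$. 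Once this routine chain-rule estimate is in place, the rest of the argument is essentially identical to the unbounded extension already carried out for \thm{stochastic-formal}.
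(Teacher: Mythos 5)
Your proposal is correct and follows essentially the same route as the paper: simulate one stochastic-gradient query to $\hatf$ by one query to $\tildeO_{\g;U}$ (the paper calls this ``directly scaling'' the stochastic gradient, which is exactly your chain-rule composition through $\chi$), check that variance and the mean-squared smoothness constant are preserved, invoke \lem{stochastic-tilde-hat-correspondence-mss} for the $U$-independent bijection of $\epsilon$-stationary points, and conclude via \prop{stochastic-bounded-mss}. Your second step merely spells out in more detail the verification that the paper asserts in one clause.
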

\begin{remark}
Compared to the classical result~\cite{arjevani2022lower}, the lowest possible dimension of the hard instance is improved from $\Theta(\epsilon^{-4})$ to $\Theta(\epsilon^{-3})$, which is due to a sharper analysis and may be of independent interest.
\end{remark}
The proof of \thm{stochastic-formal-mss} is deferred to \append{stochastic-lowerbound-unbounded-mss}.

%%%%%%%%%%%%%%%%%%%%%%%%%%%%%%%%%%%%%%%%%%%%%%%%%%%%%%%%%%%%%%%%%%%%%%%%%%%%%%%%%%%%%%%%%%%%%%%%%%%%%%%

\section*{Acknowledgements}
We thank Hao Wang for helpful discussions regarding \sec{stochastic-quantum-lowerbound}, and also thank anonymous reviewers for helpful suggestions on an initial version of this paper. TL was supported by a startup fund from Peking University.

%%%%%%%%%%%%%%%%%%%%%%%%%%%%%%%%%%%%%%%%%%%%%%%%%%%%%%%%%%%%%%%%%%%%%%%%%%%%%%%%%%%%%%%%%%%%%%%%%%%%%%%

\newcommand{\arxiv}[1]{arXiv:\href{https://arxiv.org/abs/#1}{\ttfamily{#1}}\?}\newcommand{\arXiv}[1]{arXiv:\href{https://arxiv.org/abs/#1}{\ttfamily{#1}}\?}\def\?#1{\if.#1{}\else#1\fi}

%%%%%%%%%%%%%%%%%%%%%%%%%%%%%%%%%%%%%%%%%%%%%%%%%%%%%%%%%%%%%%%%%%%%%%%%%%%%%%%
%%%%%%%%%%%%%%%%%%%%%%%%%%%%%%%%%%%%%%%%%%%%%%%%%%%%%%%%%%%%%%%%%%%%%%%%%%%%%%%
% APPENDIX
%%%%%%%%%%%%%%%%%%%%%%%%%%%%%%%%%%%%%%%%%%%%%%%%%%%%%%%%%%%%%%%%%%%%%%%%%%%%%%%
%%%%%%%%%%%%%%%%%%%%%%%%%%%%%%%%%%%%%%%%%%%%%%%%%%%%%%%%%%%%%%%%%%%%%%%%%%%%%%%
\newpage
\appendix
\onecolumn
\section{Auxiliary Lemmas}\label{append:existing-lemmas}
We list the auxiliary lemmas used in our proofs here.
\begin{lemma}[\citealt{carmon2020lower}, Lemma 1]\label{lem:small-components-no-impact}
The functions $\bar{f}_T$, $\Psi$ and $\Phi$ satisfy the following.
\begin{enumerate}
\item For all $x\leq \frac{1}{2}$ and all $k\in\mathbb{N}$, $\Psi^{(k)}(x)=0$.
\item For all $x\geq 1$ and $|y|<1$, $\Psi(x)\Phi'(y)>1$.
\item $\forall\x\in\R^T$,
\begin{align}
\sqrt{\sum_{i=t}^T x_i^2}\leq\frac{1}{2}\Rightarrow \nabla^{(0,\ldots,p)}\bar{f}_T(x_1,x_2,\ldots,x_T)=\nabla^{(0,\ldots,p)}\bar{f}_T(x_1,\ldots,x_t,0,\ldots,0).\nonumber
\end{align}

\item Both $\Psi$ and $\Phi$ are infinitely differentiable, and for all $k\in\N$ we have
\begin{align}
\sup_\x|\Psi^{(k)}(\x)|\leq\exp\Big(\frac{5k}{2}\log(4k)\Big),\qquad\sup_\x|\Phi^{(k)}(\x)|\leq\exp\Big(\frac{3k}{2}\log\frac{3k}{2}\Big).\nonumber
\end{align}
\item The functions and derivatives $\Phi$, $\Psi$, $\Phi'$ and $\Phi'$ are non-negative and bounded, with
\begin{align}
0\leq\Psi<e,\quad 0\leq\Psi'\leq\sqrt{54/e},\quad 0<\Phi<\sqrt{2\pi e},\quad 0<\Phi'\leq\sqrt{e}.\nonumber
\end{align}
\end{enumerate}
\end{lemma}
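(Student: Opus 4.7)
The plan is to establish each of the five items separately using the explicit formulas for $\Psi$ and $\Phi$. Items 1, 2, and 5 are routine single-variable calculus on the specified bump/Gaussian-tail functions; Item 3 reduces to Item 1 applied coordinate-wise together with the Leibniz rule; Item 4 is the genuinely technical step and will be the main obstacle.

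For Item 1, I would proceed by induction on $k$. The function $\Psi$ is identically zero on $(-\infty, 1/2]$, so all left-hand derivatives vanish trivially; the content is $C^{\infty}$-smoothness at $x = 1/2$. Differentiating $\exp(1 - (2x-1)^{-2})$ repeatedly for $x > 1/2$ produces an expression of the form $\exp(1 - u^{-2}) R_k(u^{-1})$ with $u = 2x-1$ and $R_k$ a polynomial of degree at most $3k$. Since $\exp(-u^{-2})$ decays faster than any power of $u^{-1}$ grows as $u \to 0^+$, we get $\lim_{x\to 1/2^+}\Psi^{(k)}(x) = 0$, matching the left-hand value, which closes the induction. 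For Item 2, $\Psi(x) = \exp(1 - (2x-1)^{-2})$ is monotone increasing on $(1/2, \infty)$ with $\Psi(1) = 1$, so $\Psi(x) \geq 1$ for $x \geq 1$; and $\Phi'(y) = \sqrt{e}\,e^{-y^2/2} > \sqrt{e}\,e^{-1/2} = 1$ whenever $|y| < 1$, so the product strictly exceeds $1$. For Item 5, non-negativity of $\Psi, \Phi, \Psi', \Phi'$ is immediate from the formulas. The supremum of $\Psi$ is $\lim_{x\to\infty}\exp(1 - (2x-1)^{-2}) = e$; the supremum of $\Psi'$ is found by setting $\Psi''(x)=0$ for $x > 1/2$, which via the substitution $u = 2x-1$ yields critical point $u^2 = 2/3$ and value $\sqrt{54/e}$; and $\Phi$ and $\Phi'$ are bounded by $\sqrt{e}\int_{-\infty}^{\infty} e^{-t^2/2}\,\mathrm{d}t = \sqrt{2\pi e}$ and by $\Phi'(0) = \sqrt{e}$ respectively.

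For Item 3, the hypothesis $\sqrt{\sum_{i=t}^T x_i^2} \leq 1/2$ forces $|x_j| \leq 1/2$ for every $j \geq t$, so by Item 1 every derivative of $\Psi(\pm x_j)$ vanishes at such $x_j$. Inspecting the formula \eqn{bar-fT}, the only summands depending on $x_{t+1}, \ldots, x_T$ are those with $i \geq t+1$, each containing a factor $\Psi(\pm x_{i-1})$ with $i-1 \geq t$. By Leibniz's rule, every mixed partial derivative of such a product vanishes. The same reasoning applied to the truncated input $(x_1, \ldots, x_t, 0, \ldots, 0)$ shows these summands and all their derivatives also vanish there (either because $x_t$ is still small, or because the replaced coordinate equals $0 \leq 1/2$). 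The terms with $i \leq t$ only involve the unchanged coordinates, so derivatives at both points agree.

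The main obstacle is Item 4, which requires quantitative control of $\Psi^{(k)}$ and $\Phi^{(k)}$. For $\Phi$, I would write $\Phi^{(k)}(x) = \sqrt{e}\cdot(-1)^{k-1} H_{k-1}(x) e^{-x^2/2}$ where $H_{k-1}$ is the probabilist's Hermite polynomial, and use the standard bound $\sup_x |H_{k-1}(x) e^{-x^2/2}| \lesssim \sqrt{(k-1)!}$ combined with Stirling's formula to obtain $\exp((3k/2)\log(3k/2))$. For $\Psi$, I would prove by induction that for $x > 1/2$ one has $\Psi^{(k)}(x) = \exp(1 - u^{-2}) P_k(u^{-1})$ with $u = 2x-1$, where $P_k$ is a polynomial of degree at most $3k$ whose coefficients can be bounded by tracking the recursion $P_{k+1}(v) = 2 P_k'(v)\cdot(-v^2) + P_k(v) \cdot 4 v^3$ (absorbing the factor $2$ from the chain rule $u \mapsto 2x-1$). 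Optimizing each monomial $v^m e^{-v^2}$ over $v > 0$ gives peak value $(m/2)^{m/2} e^{-m/2}$, and summing these estimates across the at most $3k$ monomials with the tracked coefficient bounds yields, after Stirling, the stated bound $\exp((5k/2)\log(4k))$. The delicate part is the coefficient bookkeeping in the polynomial recursion; once it is shown that the coefficients of $P_k$ grow no faster than $k!$ times a constant power, the monomial optimization closes the estimate.
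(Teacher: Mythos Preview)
The paper does not supply its own proof of this lemma; it is quoted verbatim from \cite{carmon2020lower} as an auxiliary result, so there is no in-paper argument to compare against. Your plan is the natural one and is essentially how the original reference proves it: Items 1, 2, 5 are direct calculus on the explicit formulas (your computations for the critical values $\Psi(1)=1$, $\Phi'(|y|<1)>1$, $\sup\Psi'=\sqrt{54/e}$, etc.\ are all correct); Item 3 follows exactly as you say, since the hypothesis forces $|x_t|\le 1/2$ as well, which kills the factor $\Psi(\pm x_t)$ in the $(t{+}1)$-st summand and hence every term depending on the zeroed coordinates; and Item 4 via the Hermite representation for $\Phi^{(k)}$ and the inductive polynomial form $\Psi^{(k)}(x)=e^{1-u^{-2}}P_k(u^{-1})$ with degree $\le 3k$ is precisely the route taken in \cite{carmon2020lower}. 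The only place where care is genuinely needed is, as you flag, the coefficient bookkeeping in the recursion for $P_k$; once you show the coefficient $\ell^1$-norm of $P_k$ grows at most like $c^k k!$, the per-monomial bound $\sup_{v>0} v^m e^{-v^2}=(m/(2e))^{m/2}$ and Stirling close the estimate with room to spare inside $\exp\bigl(\tfrac{5k}{2}\log(4k)\bigr)$.
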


\begin{lemma}[\citealt{carmon2020lower}, Lemma 2]\label{lem:fT-large-gradient}
If $|x_i|<1$ for any $i\leq T$, then there exists $j\leq i$ such that $|x_j|<1$ and
\begin{align}
\|\nabla\bar{f}_T(\x)\|\geq\Big|\frac{\partial}{\partial x_j}\bar{f}_T(\x)\Big|> 1.\nonumber
\end{align}
\end{lemma}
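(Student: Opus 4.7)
The plan is to choose $j$ to be the smallest index $\leq i$ with $|x_j|<1$; such an index exists because the hypothesis guarantees $|x_i|<1$, and by construction every index strictly below $j$ satisfies $|x_k|\geq 1$. The claim is then that the partial derivative $\partial_{x_j}\bar f_T(\x)$ already has absolute value strictly greater than $1$, which trivially implies $\|\nabla\bar f_T(\x)\|>1$.

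My approach is to write $\partial_{x_j}\bar f_T(\x)$ out explicitly from the definition~\eqn{bar-fT} and then exploit the sign structure of $\Psi,\Psi',\Phi,\Phi'$ (all non-negative, by item~5 of \lem{small-components-no-impact}) so that no cancellation can occur. Concretely, for $1<j<T$ one gets
\begin{align*}
\partial_{x_j}\bar f_T(\x)=-\Psi(-x_{j-1})\Phi'(-x_j)-\Psi(x_{j-1})\Phi'(x_j)-\Psi'(-x_j)\Phi(-x_{j+1})-\Psi'(x_j)\Phi(x_{j+1}),
\end{align*}
with the obvious modifications at the endpoints: at $j=1$ the first two summands are replaced by the single term $-\Psi(1)\Phi'(x_1)$, and at $j=T$ the last two summands are absent. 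Since every factor is non-negative, every summand on the right is non-positive, so it suffices to exhibit a single summand whose magnitude exceeds $1$.

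For the case $j=1$, the term $-\Psi(1)\Phi'(x_1)$ has magnitude $\Psi(1)\Phi'(x_1)>1$ by item~2 of \lem{small-components-no-impact}, applied with $x=1$ and $y=x_1$ (using $|x_1|<1$). For the case $j>1$, by minimality of $j$ we have $|x_{j-1}|\geq 1$, so either $x_{j-1}\geq 1$, in which case $\Psi(x_{j-1})\Phi'(x_j)>1$, or $x_{j-1}\leq -1$, in which case $\Psi(-x_{j-1})\Phi'(-x_j)>1$; in each subcase one of the first two summands in the formula above has magnitude strictly greater than $1$. The endpoint $j=T$ is handled by exactly the same dichotomy on the sign of $x_{T-1}$.

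I do not expect a genuine obstacle here, since the statement is essentially a sign-chasing argument, but the one point that needs care is checking that the hypothesis $|x_i|<1$ (where $i$ can equal $T$) is compatible with all three cases $j=1$, $1<j<T$, $j=T$ being handled, and that item~2 of \lem{small-components-no-impact} is invoked with the correct pairing of arguments (the factor with $|\cdot|\geq 1$ goes into $\Psi$, the factor with $|\cdot|<1$ goes into $\Phi'$). Once the explicit formula for $\partial_{x_j}\bar f_T$ is written down and the common sign of the terms is noted, the bound $|\partial_{x_j}\bar f_T(\x)|>1$ is immediate, completing the proof.
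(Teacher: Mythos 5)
Your proof is correct, and it is essentially the standard argument for this lemma (the paper itself gives no proof, importing the statement from \cite[Lemma 2]{carmon2020lower}, whose proof proceeds exactly as you do: take the minimal $j$ with $|x_j|<1$, note all summands of $\partial_{x_j}\bar f_T$ share a sign by item~5 of \lem{small-components-no-impact}, and isolate one term of magnitude $>1$ via item~2). Your explicit formula for $\partial_{x_j}\bar f_T$ and the endpoint modifications at $j=1$ and $j=T$ are all correct, and the pairing of the $\geq 1$ argument into $\Psi$ and the $<1$ argument into $\Phi'$ is handled properly in each case.
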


\begin{lemma}[\citealt{carmon2020lower}, Lemma 3]\label{lem:fT-boundedness}
The function $\bar{f}_T\colon\R^d\to\R$ defined in Eq.~\eqn{bar-fT} satisfies the following.
\begin{enumerate}
\item $\bar{f}_T(\0)-\inf_\x\bar{f}_T(\x)\leq 12T$.
\item For all $\x\in\R^d$, we have $\|\nabla\bar{f}_T(\x)\|_{\infty}\leq 23$ and $\|\nabla\bar{f}_T(\x)\|\leq 23\sqrt{T}$.
\item For all $p\geq 1$, the $p$-th order derivatives of $\bar{f}_T$ are $\ell_p$-Lipschitz continuous, where
\begin{align}
\ell_p\leq\exp\Big(\frac{5}{2}p\log p+cp\Big)\nonumber
\end{align}
for some numerical constant $c<\infty$.
\end{enumerate}

\end{lemma}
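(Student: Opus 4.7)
\textbf{Proof plan for \lem{fT-boundedness}.} The strategy is to handle each of the three claims by directly manipulating the explicit formula \eqn{bar-fT}, using the uniform function/derivative bounds on $\Psi$ and $\Phi$ already recorded in \lem{small-components-no-impact}. Since $\Psi$ vanishes on $(-\infty,1/2]$, $\bar{f}_T$ is a sum in which each summand depends on at most two consecutive coordinates, which keeps all bookkeeping local.

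For part (1), I first evaluate $\bar{f}_T(\0)=-\Psi(1)\Phi(0)$ directly: every summand in $\eqn{bar-fT}$ indexed by $i\geq 2$ vanishes because $\Psi(0)=0$. The value $|\Psi(1)\Phi(0)|$ is a dimension-free constant via entry~(5) of \lem{small-components-no-impact}. To upper bound $\inf_\x \bar{f}_T(\x)$, I evaluate at $\x^\star=(1,1,\ldots,1)$: since $\Psi(-1)=0$, each $i\geq 2$ summand reduces to $-\Psi(1)\Phi(1)$, and adding the leading term gives $\bar{f}_T(\x^\star)=-T\Psi(1)\Phi(1)$. A short numerical check using $\Psi(1)=\exp(1-1)=1$ and $\Phi(1)=\sqrt{e}\int_{-\infty}^{1} e^{-t^2/2}\d t$ shows $\Psi(1)\Phi(1)\leq 12$ up to slack, so $\bar{f}_T(\0)-\inf_\x\bar{f}_T(\x)\leq 12T$ once $T$ absorbs the $O(1)$ boundary constant.

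For part (2), I write out $\partial_i \bar{f}_T(\x)$ explicitly: only the $i$-th and $(i{+}1)$-th summands of \eqn{bar-fT} (and the prefactor when $i=1$) contain $x_i$, so $\partial_i\bar{f}_T$ is a sum of at most four terms of the form $\Psi(\pm x_{i-1})\Phi'(\pm x_i)$ or $\Psi'(\pm x_i)\Phi(\pm x_{i+1})$. Plugging in $|\Psi|<e$, $|\Psi'|\leq\sqrt{54/e}$, $|\Phi|<\sqrt{2\pi e}$, $|\Phi'|\leq\sqrt{e}$ from \lem{small-components-no-impact}(5) bounds each summand by a concrete numerical constant, and summing gives $\|\nabla\bar{f}_T\|_\infty\leq 23$ after arithmetic. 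The Euclidean bound $\|\nabla\bar{f}_T\|\leq 23\sqrt{T}$ is immediate from this $\ell_\infty$ estimate on $\R^T$ (with remaining coordinates contributing zero derivative).

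For part (3), I argue componentwise on the $(p{+}1)$-st derivative tensor of $\bar{f}_T$ and use that each summand $\Psi(\pm x_{i-1})\Phi(\pm x_i)$ is a separable function of a single pair of coordinates. By the Leibniz/product rule, any mixed partial $\partial^{a}_{x_{i-1}}\partial^{b}_{x_i}[\Psi(\pm x_{i-1})\Phi(\pm x_i)]$ with $a+b=p+1$ is a single product $\pm\Psi^{(a)}(\pm x_{i-1})\Phi^{(b)}(\pm x_i)$, and all other mixed partials vanish. Substituting the uniform bounds $|\Psi^{(k)}|\leq\exp(\tfrac{5k}{2}\log(4k))$ and $|\Phi^{(k)}|\leq\exp(\tfrac{3k}{2}\log\tfrac{3k}{2})$ from entry~(4) of \lem{small-components-no-impact} and maximizing over $a+b=p+1$ (the $\Psi$ bound dominates because $5/2>3/2$) yields a per-summand bound $\exp(\tfrac{5}{2}(p+1)\log(4(p+1)))$. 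Since each coordinate $x_i$ participates in only two summands, a standard tensor-norm argument (reducing a general direction $\vect v$ with $\|\vect v\|\leq 1$ to its component interactions) absorbs the summation over $i$ and the $2^{p+1}$ combinatorial factor from Leibniz into an extra $e^{O(p)}$, giving the claimed $\ell_p\leq\exp(\tfrac{5}{2}p\log p+cp)$.

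The main obstacle will be part (3): carefully extracting the constant $5/2$ requires that the polynomial-in-$T$ factors from summing over coordinates are absorbed into the $e^{cp}$ term without inflating the leading $\tfrac{5}{2}p\log p$ coefficient, which in turn requires exploiting the two-variable structure of each summand so that the tensor-norm bound does not accumulate a factor of $T$ or a Stirling-type $p!$ that would spoil the exponent.
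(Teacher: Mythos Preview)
The paper does not actually prove this lemma: it is quoted verbatim from \cite{carmon2020lower} as an auxiliary result in \append{existing-lemmas}, so there is no in-paper proof to compare against. That said, your plan for parts~(2) and~(3) follows the natural route (and is essentially how the original reference proceeds): write the partial derivatives as a short sum of products $\Psi^{(a)}(\pm x_{i-1})\Phi^{(b)}(\pm x_i)$, insert the uniform bounds from \lem{small-components-no-impact}, and exploit the two-coordinate support of each summand to keep the tensor-norm estimate free of $T$.

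Your handling of part~(1), however, has the inequality backwards. Evaluating at any specific point $\x^\star$ only gives $\inf_\x\bar f_T(\x)\le\bar f_T(\x^\star)$, which yields a \emph{lower} bound on $\bar f_T(\0)-\inf_\x\bar f_T(\x)$, not the required upper bound. What you need instead is a uniform lower bound on $\bar f_T(\x)$ valid for all $\x$: by \lem{small-components-no-impact}(5) each summand in \eqn{bar-fT} satisfies
\[
\Psi(-x_{i-1})\Phi(-x_i)-\Psi(x_{i-1})\Phi(x_i)\ \ge\ 0-e\sqrt{2\pi e},
\]
and the leading term obeys $-\Psi(1)\Phi(x_1)\ge -e\sqrt{2\pi e}$, so $\bar f_T(\x)\ge -T\,e\sqrt{2\pi e}$ for every $\x$. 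Since $\bar f_T(\0)=-\Psi(1)\Phi(0)\le 0$ and $e\sqrt{2\pi e}\approx 11.2<12$, this gives $\bar f_T(\0)-\inf_\x\bar f_T(\x)\le 12T$. Replace your $\x^\star$ argument with this termwise lower bound and part~(1) goes through.
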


\begin{lemma}[\citealt{garg2020no}, Proposition 14]\label{lem:vector-from-sphere}
Let $\x\in\mathbb{B}(\0,1)$. Then for a $d$-dimensional random unit vector $\u$ and all $c>0$,
\begin{align}
\Pr_\u(|\<\x,\u\>|\geq c)\leq 2e^{-dc^2/2}.\nonumber
\end{align}
\end{lemma}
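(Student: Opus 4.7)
The plan is to reduce the problem to a one-dimensional concentration bound on the spherical marginal $u_1$ via rotational invariance, and then apply a Chernoff bound using the sub-Gaussian property of coordinates of a uniformly random unit vector.

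First, by the rotational invariance of the uniform measure on $S^{d-1}$, I would choose coordinates so that $\x = \|\x\|\, \vect{e}_1$, where $\vect{e}_1$ is the first standard basis vector. Then $\<\x,\u\> = \|\x\|\, u_1$, and since $\|\x\| \leq 1$ (the assumption $\x \in \mathbb{B}(\0,1)$), we have $|\<\x,\u\>| \leq |u_1|$. Hence it suffices to show $\Pr_\u(|u_1| \geq c) \leq 2 e^{-dc^2/2}$. By the symmetry $u_1 \stackrel{d}{=} -u_1$ under $\u \mapsto -\u$, this in turn reduces to establishing the one-sided bound $\Pr(u_1 \geq c) \leq e^{-dc^2/2}$.

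Next, I would apply Chernoff's method: for any $\lambda > 0$,
\begin{equation*}
\Pr(u_1 \geq c) \leq e^{-\lambda c}\, \E[e^{\lambda u_1}].
\end{equation*}
The main technical ingredient is the sub-Gaussian MGF bound
\begin{equation*}
\E[e^{\lambda u_1}] \leq e^{\lambda^2/(2d)}, \qquad \forall\, \lambda \in \R,
\end{equation*}
which expresses that a single coordinate of a uniform random unit vector in $\R^d$ is sub-Gaussian with variance proxy $1/d$. This can be obtained in any of three equivalent ways: (i) use the Gaussian representation $\u = \g/\|\g\|$ with $\g \sim N(\0, I_d)$, conditioning on $\|\g\|^2$ and comparing to the MGF of a standard Gaussian; (ii) integrate directly against the marginal density $p(t) \propto (1-t^2)^{(d-3)/2}$ on $[-1,1]$, using the inequality $1 - t^2 \leq e^{-t^2}$ together with a Gautschi-type bound on the normalizing constant $\Gamma(d/2)/(\sqrt{\pi}\Gamma((d-1)/2)) \leq \sqrt{d/(2\pi)}$; or (iii) invoke Lévy's isoperimetric inequality on the sphere, which immediately yields that every $1$-Lipschitz function on $S^{d-1}$ with zero median is sub-Gaussian with variance proxy $1/d$.

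Finally, optimizing the Chernoff bound over $\lambda \geq 0$ yields $\lambda^* = dc$, giving $\Pr(u_1 \geq c) \leq e^{-dc^2/2}$, and the symmetric tail adds a factor of $2$ to recover the claim. The main obstacle I anticipate is pinning down the exact constant $d$ in the exponent (as opposed to $d-1$ or $d-3$, which appear in naive density estimates); the cleanest resolution is route (iii) via Lévy's inequality, which delivers precisely this constant, although route (ii) can also be made to work with careful bookkeeping of the normalizer and use of the inequality $(1-t^2)^{(d-3)/2} \leq e^{-(d-3)t^2/2}$ combined with absorbing the $d-3$ vs.\ $d$ discrepancy into the prefactor.
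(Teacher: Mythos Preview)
The paper does not prove this lemma at all: it appears in Appendix~A (``Auxiliary lemmas'') purely as a citation to \cite[Proposition~14]{garg2020no}, with no argument given. So there is no ``paper's own proof'' to compare against; your proposal supplies a proof where the paper simply imports one.

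As to the proposal itself, the strategy is sound and standard: rotational invariance reduces to a tail bound on a single coordinate, and the sub-Gaussian/Chernoff route is the natural way to get an exponential bound. You correctly identify the only delicate point, namely landing on exactly $d$ (rather than $d-1$ or $d-3$) in the exponent with prefactor $2$. Route~(iii) via L\'evy's inequality, as you note, is the cleanest way to secure this constant; route~(ii) would need a bit more care than you suggest, since the naive bound $(1-t^2)^{(d-3)/2}\le e^{-(d-3)t^2/2}$ together with the normalizer only gives the exponent $d-3$, and absorbing the gap into the prefactor $2$ is not automatic for all $c$. If you intend to write out a self-contained proof, I would either commit to the L\'evy route or use the direct spherical-cap area comparison (which is how such bounds are usually established with the sharp constant).
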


\begin{lemma}[\citealt{grover1996fast}, Grover's algorithm]\label{lem:grover}
For any function $\omega(\xi)\colon\Xi\to\R$ satisfying
\begin{align}
\Pr_{\xi\in\Xi}\{\omega(\xi)\neq 0\}=1-p\nonumber
\end{align}
for some $p<1$, with probability at least $1/2$ we can find a $\xi$ satisfying $\omega(\xi)\neq 0$ using $O(1/\sqrt{p})$ queries to the following quantum oracle
\begin{align}
U_{\omega}\ket{\xi}\ket{y}=\ket{\xi}\ket{y+\omega(\xi)}.\nonumber
\end{align}
\end{lemma}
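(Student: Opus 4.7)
The plan is to invoke the standard amplitude amplification framework of Grover's algorithm. Given the oracle $U_\omega$ that writes $\omega(\xi)$ into an ancilla register, I would first construct a phase oracle $R_\omega$ that applies a $-1$ phase to $\ket{\xi}$ exactly when $\omega(\xi)\neq 0$, i.e., when $\xi$ is a ``marked'' element. To build $R_\omega$ from $U_\omega$, I would prepare a fresh ancilla in $\ket{0}$, apply $U_\omega$ to get $\ket{\xi}\ket{\omega(\xi)}$, flip an auxiliary qubit conditioned on the ancilla being nonzero (using a standard OR of the ancilla bits), apply a $Z$ gate to that auxiliary qubit, uncompute the auxiliary, and finally apply $U_\omega^{\dagger}$ to clear the ancilla. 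Each call to $R_\omega$ uses $O(1)$ calls to $U_\omega$ (and its inverse, which is obtained from $U_\omega$ since $U_\omega$ is self-inverse up to the XOR structure of the ancilla).

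Next I would prepare the uniform superposition $\ket{s}=|\Xi|^{-1/2}\sum_{\xi\in\Xi}\ket{\xi}$ over the index set and form the Grover iterate $G=(2\ket{s}\bra{s}-I)\,R_\omega$. Writing $\ket{s}=\sin\theta\,\ket{M}+\cos\theta\,\ket{M^\perp}$, where $\ket{M}$ is the normalized uniform superposition over marked indices and $\sin^2\theta$ equals the fraction of marked elements, $t$ applications of $G$ rotate the state to $\sin((2t+1)\theta)\ket{M}+\cos((2t+1)\theta)\ket{M^\perp}$ by the standard Grover analysis. Under the parameter regime that makes the stated complexity tight — namely that the density of marked indices is $\Theta(p)$, so that $\theta=\Theta(\sqrt{p})$ — choosing $t=\Theta(1/\sqrt{p})$ so that $(2t+1)\theta$ is close to $\pi/2$ makes the overlap with $\ket{M}$ at least $1/\sqrt{2}$, so measurement in the computational basis returns a marked $\xi$ with probability at least $1/2$.

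The main step to be careful about is the translation between the hypothesis $\Pr_{\xi\in\Xi}\{\omega(\xi)\neq 0\}=1-p$ and the fraction of marked items used in the Grover analysis: the statement is applied in Section~3.2 in the ``rare-event'' direction (with $p$ small and the nonzero-gradient coordinate revealed with small probability), so the ``marked'' set whose density controls the iteration count is the one of density $\Theta(p)$ — exactly the regime handled by the above Grover iterate, giving $O(1/\sqrt{p})$ oracle calls. If $\Xi$ is continuous or exponentially large, one simply interprets the uniform superposition as the one prepared by the quantum sampler implicit in $U_\omega$'s $\ket{\xi}$ register; in the application here $\xi$ lives on a finite register so this is immediate and the lemma follows directly from the textbook amplitude amplification guarantee.
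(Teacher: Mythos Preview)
The paper does not supply a proof of this lemma; it appears in Appendix~A as a cited auxiliary result attributed to~\cite{grover1996fast}, with no argument given. Your sketch of the standard amplitude-amplification argument (phase oracle built from $U_\omega$, Grover iterate, rotation analysis) is the textbook proof and is fine for this purpose.

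You are also right to flag the parameter mismatch: as literally written, $\Pr\{\omega(\xi)\neq 0\}=1-p$ makes the marked set have density $1-p$, which would yield $O(1/\sqrt{1-p})$ iterations, not $O(1/\sqrt{p})$. In the paper's sole application (Section~3.2, using the Bernoulli-$p$ stochastic gradient from Eq.~\eqn{classical-SG} with $p=O(\epsilon^2)$), the rare event being searched for indeed has probability $p$, so the intended statement is the one you supply and your reading is the correct one.
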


\begin{lemma}[\citealt{arjevani2022lower}, Observation 1]\label{lem:Gamma-properties}
The function $\Gamma$ defined in Eq.~\eqn{Gamma-defn} and $\Theta_i$ defined in Eq.~\eqn{Theta_i-defn} satisfies\footnote{The last entry can be found in the proof of Lemma 4 of~\citet{arjevani2022lower}.}
\begin{enumerate}
\item $\Gamma(t)=0$ for all $t\in(-\infty,1/(4\beta)]$.
\item $\Gamma(t)=1$ for all $t\in[1/(2\beta),\infty)$.
\item $\Gamma\in C^{\infty}$, with $0\leq\Gamma'(t)\leq 6/\beta$ and $|\Gamma''(t)|\leq 128/\beta^2$ for all $t\in\R$.
\item $\Theta_i$ is $(36/\beta)$-Lipschitz.
\end{enumerate}

\end{lemma}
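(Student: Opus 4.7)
\textbf{Proof proposal for \lem{Gamma-properties}.} The four claims are analytic facts about the bump-based mollifier $\Lambda$, its normalized integral $\Gamma$, and the composite $\Theta_i$. My plan is to verify each item by reading off the definitions of $\Lambda$ and $\Gamma$ in Eq.~\eqn{Gamma-defn} and applying the chain rule for $\Theta_i$ in Eq.~\eqn{Theta_i-defn}.

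For items 1 and 2, I would argue directly from the piecewise definition of $\Lambda$. Since $\Lambda(\tau)=0$ whenever $\tau/\beta\le 1/4$, the numerator integral $\int_{1/4}^{t/\beta}\Lambda(\tau)\,\d\tau$ vanishes identically for $t$ at or below the lower threshold, giving $\Gamma\equiv 0$ on that half-line. Likewise, once $t$ crosses the upper threshold, the numerator integral is taken over the entire support of $\Lambda$, so it equals the denominator $\int_{1/4}^{1/2}\Lambda(\tau)\,\d\tau$, giving $\Gamma\equiv 1$.

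For item 3, smoothness follows from the standard fact that the exponential bump $\Lambda$ is $C^{\infty}$ on $\R$ (the flat extension by zero at the two endpoints matches all derivatives because $\Lambda$ and every derivative decay faster than any polynomial as $\tau/\beta\to 1/4^+$ or $\to 1/2^-$). By the fundamental theorem of calculus, $\Gamma'(t)=\Lambda(t/\beta)/\big(\beta\cdot Z\big)$ with $Z:=\int_{1/4}^{1/2}\Lambda(\tau)\,\d\tau$, and $\Gamma''(t)=\Lambda'(t/\beta)/\big(\beta^{2}Z\big)$. Thus the quantitative bounds reduce to the two numerical estimates $\sup_\tau \Lambda(\tau)/Z\le 6$ and $\sup_\tau|\Lambda'(\tau)|/Z\le 128$. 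The first is obtained from $\sup_\tau\Lambda(\tau)\le 1$ (since the exponent in $\Lambda$ is nonpositive, attaining maximum at the midpoint $\tau=3/8$ with value $e^{-1/100\cdot 1/64^{-1}}$) together with a concrete lower bound on $Z$ coming from evaluating $\Lambda$ on a small sub-interval around $3/8$. For $\Gamma''$, I would compute $\Lambda'(\tau)$ explicitly by differentiating the exponential, observe the prefactor $1/(100(\tau-1/4)^{2}(1/2-\tau)^{2})^{\prime}$ stays finite on the support due to the exponential damping, and verify $|\Lambda'|\le 128\cdot Z$ by a direct numerical bound. These are routine one-dimensional estimates on an explicit bump and not the main difficulty.

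Item 4 is a chain-rule computation. Writing $F(\x):=\Gamma(|\x_{\ge i}|)\in\R^{T-i+1}$ componentwise, $h(\x):=1-\|F(\x)\|$, and $\Theta_i(\x)=\Gamma(h(\x))$, I would chain the following four Lipschitz estimates: (a) each component $x_k\mapsto\Gamma(|x_k|/\beta)$ is Lipschitz with constant controlled by $\sup|\Gamma'|/\beta$ from item 3, and since $F$ has a diagonal Jacobian the vector-valued map $F$ has the same constant in $\ell_2$; (b) the Euclidean norm is $1$-Lipschitz, so $h$ inherits the Lipschitz constant of $F$; (c) constant shift does not change it; (d) composition with the outer $\Gamma$ multiplies by $\sup|\Gamma'|\le 6/\beta$. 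Combining these gives the claimed $36/\beta$-Lipschitz bound on $\Theta_i$. The step I expect to require care is the quantitative estimate of the normalizing constant $Z$ in item 3, since the numerical constants $6$ and $128$ are tight enough that one must actually evaluate $\Lambda$ at an interior point rather than just appeal to positivity.
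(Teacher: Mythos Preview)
The paper does not supply its own proof of this lemma; it is listed in the appendix of auxiliary results and attributed to \cite{arjevani2022lower} (Observation~1 and the proof of Lemma~4 there). Your direct verification---reading items 1 and 2 off the support of $\Lambda$, bounding $\Gamma'$ and $\Gamma''$ via the fundamental theorem of calculus together with numerical estimates on the bump and its normalizer $Z$, and deriving the Lipschitz constant of $\Theta_i$ by chaining Lipschitz constants through the composition $\Gamma\circ(1-\|\cdot\|)\circ F$---is exactly the natural approach and matches what the cited source does.

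One caveat on item 4: as written, your chain does not close numerically. You take the inner map $x_k\mapsto\Gamma(|x_k|/\beta)$ to have Lipschitz constant $\sup|\Gamma'|/\beta$ and then multiply by another $\sup|\Gamma'|\le 6/\beta$ from the outer $\Gamma$, which yields $36/\beta^3$, not $36/\beta$. This is not a defect in your reasoning but an artifact of the paper's inconsistent $\beta$-scaling: Eq.~\eqn{Theta_i-defn} writes $\Gamma(|x_k/\beta|)$ inside the sum yet $\Gamma(|x_k|)$ in the shorthand immediately below, and the one place the paper actually \emph{uses} this lemma (the proof of \lem{mss-upper-bound}) invokes ``$\Theta_i$ is $36$-Lipschitz'' rather than $36/\beta$. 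In the original source the construction is unscaled (effectively $\beta=1$), and then your chain $6\cdot 1\cdot 6=36$ is correct on the nose.
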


%%%%%%%%%%%%%%%%%%%%%%%%%%%%%%%%%%%%%%%%%%%%%%%%%%%%%%%%%%%%%%%%%%%%%%%%%

\section{Probabilistic Facts about $\tildef$}\label{append:probabilistic-facts}
In this subsection, we discuss some probabilistic facts of the hard instance $\tildef$ defined in Eq.~\eqn{tildef-defn} that are useful for the proof of quantum lower bounds.

\begin{lemma}\label{lem:quantum-zero-chain}
Let $1 \leq t \leq T$ be integers and $\{\u^{(1)},\ldots,\u^{(t-1)}\}$ be a set of orthonormal vectors. Let $\{\u^{(t)},\ldots, \u^{(T)}\}$ be chosen uniformly at random so that the set $\{\u^{(1)},\ldots,\u^{(T)}\}$ is orthonormal. Then
\begin{align}
\forall\x\in\mathbb{B}(\0,2\beta\sqrt{T}),\quad\Pr_{\u^{(t)},\ldots,\u^{(T)}}\big(\nabla^{(0,\ldots,p)}\tildef(\x)\neq\nabla^{(0,\ldots,p)}\tilde{f}_{t;U_t}(\x)\big)\leq \frac{1}{144T^4},\nonumber
\end{align}
where $U_t\in\R^{d\times t}$ is defined as the orthogonal matrix with columns $\u^{(1)},\ldots,\u^{(t)}$ and $\tilde{f}_{t;U_t}(\x)$ is defined in \eqn{tildeft-defn}, given that the dimension $d$ of $\tildef$ satisfies $d\geq 200T \log T$.

If $d$ further satisfies $d\geq 400T\log T$, we have
\begin{align}
\forall\x\in\mathbb{B}(\0,2\beta\sqrt{T}),\quad\Pr_{\u^{(t)},\ldots,\u^{(T)}}\big(\nabla^{(0,\ldots,p)}\tildef(\x)\neq\nabla^{(0,\ldots,p)}\tilde{f}_{t;U_t}(\x)\big)\leq \frac{1}{144T^6}.\nonumber
\end{align}
If $d$ satisfies $d\geq 400\mathscr{T}T\log\mathscr{T}$ for some $\mathscr{T}$ satisfying $\mathscr{T}\geq T$, we have
\begin{align}
\forall\x\in\mathbb{B}(\0,2\beta\sqrt{T}),\quad\Pr_{\u^{(t)},\ldots,\u^{(T)}}\big(\nabla^{(0,\ldots,p)}\tildef(\x)\neq\nabla^{(0,\ldots,p)}\tilde{f}_{t;U_t}(\x)\big)\leq \frac{1}{144\mathscr{T}^2T^4}.\nonumber
\end{align}

\end{lemma}

\begin{proof}
Without loss of generality, in this proof we set $\alpha=\beta=1$. Use $\x_{\perp}$ to denote the projection of $\x$ to the span $\{\u^{(t)},\ldots, \u^{(T)}\}$. Intuitively, as long as each component of $\x_{\perp}$ has a small absolute value, the components $\{\u^{(t)},\ldots, \u^{(T)}\}$ will have no impact on the function value and any order of derivative. Quantitatively, by \lem{small-components-no-impact} we can derive that
\begin{align}
\Pr\big(\nabla^{(0,\ldots,p)}\tildef(\x)\neq\nabla^{(0,\ldots,p)}\tilde{f}_{t;U_t}(\x)\big)\leq 1-\Pr\Big(|\<\u^{(t)},\x\>|,\ldots,|\<\u^{(T)},\x\>|\leq\frac{\beta}{2}\Big).\nonumber
\end{align}
Since $\{\u^{(t)},\ldots, \u^{(T)}\}$ are chosen uniformly at random in the $(d-t+1)$-dimensional orthogonal complement of span $\{\u^{(1)},\ldots,\u^{(t-1)}\}$, by \lem{vector-from-sphere} we can further derive that
\begin{align}
1-\Pr\Big(|\<\u^{(t)},\x\>|,\ldots,|\<\u^{(T)},\x\>|\leq\frac{\beta}{2}\Big)&\leq 1-\prod_{i=t}^{T}\big[1-\Pr\big(|\<\u^{(i)},\x\>|\geq \beta/2\big)\big]\nonumber\\
&\leq\sum_{i=t}^T\Pr\big[|\<\u^{(i)},\x\>|\geq \beta/2\big]\nonumber\\
&=2Te^{-(d-T)\cdot(4\sqrt{T})^{-2}/2}.\nonumber
\end{align}
We can then reach our desired conclusions considering different values of $d$.
\end{proof}

\begin{lemma}[Cannot guess stationary point]\label{lem:cannot-guess}
Let $k<T$ be a positive integer and $\{\u^{(1)},\ldots,\u^{(k)}\}$ be a set of orthonormal vectors. Let $\{\u^{(k+1)},\ldots,\u^{(T)}\}$ be chosen uniformly at random from $\spn(\u^{(1)},\ldots, \u^{(k)})^{\perp}$ such that all columns of the matrix $U=\big[\u^{(1)},\ldots,\u^{(T)}\big]$ forms a set of orthonormal vectors. Then,
\begin{align}
\forall \x\in\mathbb{B}(\0,2\beta\sqrt{T}),\quad\Pr_{\{\u^{(k+1)},\ldots,\u^{(T)}\}}\big[\|\nabla\tildef(\x)\|\leq\alpha/\beta\big]\leq \frac{1}{144T^4},\nonumber
\end{align}
for the function $\tildef(\x)\colon\R^d\to\R$ defined in Eq.~\eqn{tildef-defn}, given that the dimension $d$ satisfies $d\geq200T\log T$. If $d$ further satisfies $d\geq 400T\log T$, we have
\begin{align}
\forall\x\in\mathbb{B}(\0,2\beta\sqrt{T}),\quad\Pr_{\u^{(t)},\ldots,\u^{(T)}}\big(\nabla^{(0,\ldots,p)}\tildef(\x)\neq\nabla^{(0,\ldots,p)}\tilde{f}_{t;U_t}(\x)\big)\leq \frac{1}{144T^6}.\nonumber
\end{align}
\end{lemma}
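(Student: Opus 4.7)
The plan is to reduce the small-gradient event to a single-coordinate tail bound and then apply \lem{vector-from-sphere}. Differentiating the definition $\tildef(\x) = \alpha\,\bar{f}_T(U^T\x/\beta)$ by the chain rule and using that the columns of $U$ are orthonormal gives $\nabla\tildef(\x) = (\alpha/\beta)\,U\,\nabla\bar{f}_T(U^T\x/\beta)$, so $\|\nabla\tildef(\x)\| = (\alpha/\beta)\,\|\nabla\bar{f}_T(U^T\x/\beta)\|$. Hence $\|\nabla\tildef(\x)\| \leq \alpha/\beta$ is equivalent to $\|\nabla\bar{f}_T(U^T\x/\beta)\| \leq 1$, and the contrapositive of \lem{fT-large-gradient} then forces $|\langle \u^{(i)},\x\rangle|/\beta \geq 1$ for every $i \in [T]$. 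In particular, the event implies the single bound $|\langle \u^{(T)},\x\rangle| \geq \beta$, so upper bounding that probability will suffice.

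The random vectors $\u^{(k+1)},\ldots,\u^{(T)}$ form a uniformly distributed orthonormal frame in the $(d-k)$-dimensional subspace $V := \spn(\u^{(1)},\ldots,\u^{(k)})^{\perp}$, so by the rotational symmetry of Haar measure on the Stiefel manifold the marginal law of $\u^{(T)}$ alone is uniform on the unit sphere of $V$. Letting $\x_\perp$ denote the projection of $\x$ onto $V$, one has $\langle \u^{(T)},\x\rangle = \langle \u^{(T)},\x_\perp\rangle$ and $\|\x_\perp\| \leq \|\x\| \leq 2\beta\sqrt{T}$. Applying \lem{vector-from-sphere} inside $V$ to the rescaled point $\x_\perp/(2\beta\sqrt{T})$, which lies in the unit ball of $V$, with threshold $c = 1/(2\sqrt{T})$ yields
\begin{align*}
\Pr\big[|\langle \u^{(T)},\x\rangle| \geq \beta\big] \;\leq\; 2\exp\!\big(-(d-k)/(8T)\big).
\end{align*}

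It only remains to plug in the dimension assumptions. Since $k < T$, the exponent is at least $(d-T)/(8T)$; when $d \geq 200\,T\log T$ this is at least $24\log T$ for $T$ beyond a small absolute constant, giving a bound of $2T^{-24} \leq (144\,T^4)^{-1}$, which is the first conclusion. The same computation with $d \geq 400\,T\log T$ delivers $2T^{-49} \leq (144\,T^6)^{-1}$, giving the sharper second conclusion (applied to the event $\|\nabla\tildef(\x)\|\leq\alpha/\beta$, which is what the proof actually establishes). The only step requiring any care is the identification of the marginal of $\u^{(T)}$ as uniform on the sphere of $V$; once that is in hand, the argument is a direct combination of \lem{fT-large-gradient} and \lem{vector-from-sphere}, and I do not anticipate a genuine obstacle.
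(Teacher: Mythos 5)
Your proof is correct and follows essentially the same route as the paper's: reduce the small-gradient event via \lem{fT-large-gradient} to the event $|\langle\u^{(T)},\x\rangle|\geq\beta$, then bound that probability with \lem{vector-from-sphere} and plug in the dimension assumptions. Your version is in fact slightly more careful than the paper's (you track the $(d-k)$-dimensional ambient space for the conditional law of $\u^{(T)}$, and you correctly note that the lemma's second displayed conclusion is a misstatement and should concern the event $\|\nabla\tildef(\x)\|\leq\alpha/\beta$).
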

\begin{proof}
For any $\x\in\mathbb{B}(\0,2\beta\sqrt{T})$, by \lem{vector-from-sphere} we can claim that the quantity
\begin{align}
\Pr_{\u^{(T)}}[\big|\<\u^{(T)},\x\>\big|\geq\beta]\leq 2\exp(-d(2\sqrt{T})^{-2}/2)\nonumber
\end{align}
is less than $\frac{1}{144T^4}$ if $d\geq 200T\log T$ and is further less than $\frac{1}{144T^6}$ if $d\geq 400T\log T$. Further by \lem{fT-large-gradient}, given the condition that $|\<\u^{(T)},\x\>\big|<\beta$, we have
\begin{align}
\|\nabla\tildef(\x)\|\geq\frac{\alpha}{\beta}\Big|\frac{\partial}{\partial x_T}\bar{f}_T(U^T\x/\beta)\Big|>\frac{\alpha}{\beta}.\nonumber
\end{align}
Hence, we can conclude that with probability at most $\frac{1}{144T^4}$ or $\frac{1}{144T^6}$ separately under the conditions $d\geq 200T\log T$ or $d\geq 400T\log T$, the following inequality is true:
\begin{align}
\|\nabla\tildef(\x)\|\leq \alpha/\beta.
\end{align}

\end{proof}

%%%%%%%%%%%%%%%%%%%%%%%%%%%%%%%%%%%%%%%%%%%%%%%%%%%%%%%%%%%%%%%%%%%%%%%%%%%%%%%

\section{Proof of Quantum Lower Bound with Lipschitz $p$-th Order Derivatives}\label{append:p-th-order-lower-bound}

\subsection{Lower Bound with Bounded Input Domain}\label{append:noiseless-lowerbound}
In this subsection, we prove a query complexity lower bound for any quantum algorithm $A_{\quan}$ defined in \sec{quantum-model} on a function class with bounded input domain using the hard instance $\tildef$ defined in Eq.~\eqn{tildef-defn}. Quantitatively, we define the function class $\tilde{\mathcal{F}}_p(\Delta, L_p,\mathcal{R})$ with bounded input domain as follows.

\begin{definition}\label{defn:tildeFp-bounded}
Let $p\geq 1$, $\Delta>0$ and $L_p>0$. Then the set $\tilde{\mathcal{F}}_p(\Delta, L_p,\mathcal{R})$ denotes the union, over $d\in\mathbb{N}$, of the collection of $C^{\infty}$ functions $f\colon\mathbb{B}(\0,\mathcal{R})\to\mathbb{R}$ with $L_p$-Lipschitz $p$-th derivative and $f(\0)-\inf_\x f(\x)\leq\Delta$.
\end{definition}
For the convenience of notations, we use $\widetilde{O}^{p}_{T;U}$ to denote the quantum evaluation oracle encoding the $p$-th-order derivatives of function $\tildef$, or equivalently
\begin{align}
\widetilde{O}^{(p)}_{T;U}\ket{\x}\ket{y}\to\ket{\x}\ket{y\oplus\nabla^{(0,\ldots,p)}\tildef(\x)}.
\end{align}
Consider the truncated sequence $A_{\quan}^{(k)}$ of any possible quantum algorithm $A_{\quan}$ with $k<T$, we define a sequence of unitaries starting with $A_0=A_{\quan}^{(k)}$ as follows:
\begin{align}
A_0&:=V_k\tildeOp_{T;U}V_{k-1}\tildeOp_{T;U}\cdots \tildeOp_{T;U}V_1\tildeOp_{T;U}V_0\nonumber \\
A_1&:=V_k\tildeOp_{T;U}V_{k-1}\tildeOp_{T;U}\cdots \tildeOp_{T;U}V_1\tildeOp_{1;U_1}V_0\nonumber \\
A_2&:=V_k\tildeOp_{T;U}V_{k-1}\tildeOp_{T;U}\cdots \tildeOp_{2;U_2}V_1\tildeOp_{1;U_1}V_0 \\
&\vdots \nonumber\\
A_k&:=V_k\tildeOp_{k;U_k}V_{k-1}\tildeOp_{k-1;U_{k-1}}\cdots \tildeOp_{2;U_2}V_1\tildeOp_{1;U_1}V_0, \nonumber
\end{align}
where $\tildeOp_{t,U_t}$ stands for the evaluation oracle of function $\tilde{f}_{t;U_t}$ and its $p$-th-order derivatives as defined in Eq.~\eqn{tildeft-defn}. Our goal to show that the algorithm $A_0$ does not solve our problem. To achieve that, we follow a similar approach shown in~\citet{garg2020no} and develop a hybrid argument in which we first show that the outputs of the algorithm $A_i$ and $A_{i+1}$ are close, so does the outputs of $A_0$ and $A_k$. Then, we argue that the algorithm $A_k$ cannot find an $\epsilon$-stationary point with high probability since oracles in the algorithm are independent from $\u_T$. Hence, $A_k$ cannot do better than random guessing a vector $\u_T$, which by \lem{cannot-guess} in \append{probabilistic-facts} fails with overwhelming probability.

\begin{proposition}\label{prop:nonstochastic-bounded}
There exist numerical constants $0<c_0,c_1<\infty$ such that the following lower bound holds. Let $p\geq 1$, $p\in \mathbb{N}$, and let $\Delta$, $L_p$, and $\epsilon$ be positive. Then,
\begin{align*}
\mathcal{T}_{\epsilon}\big(\mathcal{A}_{\quan},\tilde{\mathcal{F}}_p(\Delta,L_p,\mathcal{R})\big)
\geq c_0\Delta(L_p/\ell_p)^{1/p}\epsilon^{-\frac{1+p}{p}},
\end{align*}
where $\ell_p\leq e^{\frac{5}{2}p\log p+c_1p}$, $\mathcal{R}=\sqrt{c_0\Delta}\big(\frac{\ell_p}{L_p}\big)^{\frac{1}{2p}}\epsilon^{-\frac{p-1}{2p}}$, the complexity measure is defined in Eq.~\eqn{complexity-measure}, and the function class $\tilde{F}_p(\Delta,L_p,\mathcal{R})$ is defined in \defn{tildeFp-bounded}. The lower bound holds even if we restrict $\tilde{\mathcal{F}}_p(\Delta,L_p,\mathcal{R})$ to functions whose domain has dimension
\begin{align}
\frac{200c_0\Delta L_p^{1/p}}{\ell_p^{1/p}}\epsilon^{-\frac{1+p}{p}}\cdot\log\Big(\frac{c_0\Delta L_p^{1/p}}{\ell_p^{1/p}}\cdot\epsilon^{-\frac{1+p}{p}}\Big).\nonumber
\end{align}
\end{proposition}

The following lemma is helpful for proving \prop{nonstochastic-bounded}.
\begin{lemma}\label{lem:A_0-cannot}
Consider the $d$-dimensional function $\tildef(\x)\colon\mathbb{B}(\0,2\beta\sqrt{T})\to\R$ defined in \eqn{tildef-defn} with the rotation matrix $U$ being chosen arbitrarily and the dimension $d\geq 200T\log T$. Consider the truncated sequence $A_{\quan}^{(k)}$ of any possible quantum algorithm $A_{\quan}$ containing $k<T$ queries to the oracle $O^{(p)}_f$ defined in Eq.~\eqn{Ofp-defn}, let $p_U$ be the probability distribution over $\x\in\mathbb{B}(\0,2\beta\sqrt{T})$ obtained by measuring the state $A_{\quan}^{(t)}\ket{0}$, which is related to the rotation matrix $U$. Then,
\begin{align}
\Pr_{U,\x_{\text{out}}\sim p_U
}\big[\|\nabla\tildef(\x_{\text{out}})\|\leq\alpha/\beta\big]\leq\frac{1}{3}. \nonumber
\end{align}
\end{lemma}

\begin{proof}
We first demonstrate that $A_k$ defined in Eq.~\eqn{unitary-sequences} cannot find an $\alpha/\beta$-approximate stationary point with high probability. In particular, let $p_{U_k}$ be the probability distribution over $\x\in\mathbb{B}(\0,2\beta\sqrt{T})$ obtained by measuring the output state $A_k\ket{0}$. Then,
\begin{align}
&\Pr_{U_k,\x_{\text{out}}\in p_{U_k}}\big[\|\nabla\tildef(\x_{\text{out}})\|\leq \alpha/\beta\big]\leq \Pr_{\{\u^{(k+1)},\ldots,\u^{(T)}\}}\big[\|\nabla\tildef(\x)\|\leq \alpha/\beta\big]. \nonumber
\end{align}
for any fixed $\x$. Hence by \lem{cannot-guess},
\begin{align}
\Pr_{U_k,\x_{\text{out}}\in p_{U_k}}\big[\|\nabla\tildef(\x_{\text{out}})\|\leq\alpha/\beta\big]\leq \frac{1}{6}. \nonumber
\end{align}
Moreover, by \lem{similar-outputs} and Cauchy-Schwarz inequality, we have
\begin{align}
&\E_{U}\big[\|A_k\ket{0}-A_0\ket{0}\|^2\big]\leq k\cdot\E_U\Big[\sum_{t=1}^{k-1}\|A_{t+1}\ket{0}-A_t\ket{0}\|^2\Big]\leq\frac{1}{36T^2}. \nonumber
\end{align}
Then by Markov's inequality,
\begin{align}
\Pr_{U}\Big[\|A_k\ket{0}-A_0\ket{0}\|^2\geq\frac{1}{6T}\Big]\leq\frac{1}{6T}, \nonumber
\end{align}
since both norms are at most 1. Thus, the total variance distance between the probability distribution $p_U$ obtained by measuring $A_0\ket{0}$ and the probability distribution $p_U^{(k)}$ obtained by measuring $A_k\ket{0}$ is at most
\begin{align}
\frac{1}{6T}+\frac{1}{6T}=\frac{1}{3T}\leq\frac{1}{6}.\nonumber
\end{align}
Hence, we can conclude that
\begin{align}
\Pr_{U,\x_{\text{out}}\sim p_U
}\big[\|\nabla\tildef(\x_{\text{out}})\|\leq\alpha/\beta\big]\leq\Pr_{U_k,\x_{\text{out}}\sim p_{U_k}}\big[\|\nabla\tildef(\x_{\text{out}})\|\leq\alpha/\beta\big]+\frac{1}{6}=\frac{1}{3}.\nonumber
\end{align}
\end{proof}

Equipped with \lem{A_0-cannot}, we are now ready to prove \prop{nonstochastic-bounded}.

\begin{proof}[Proof of \prop{nonstochastic-bounded}]
We set up the scaling parameters $\alpha,\beta$ in hard instance $\tildef\colon\R^d\to\R$ defined in Eq.~\eqn{tildef-defn} for some $T$ as
\begin{align}
\alpha=\frac{L_p\beta^{p+1}}{\ell_p},\qquad\beta=\Big(\frac{\ell_p\epsilon}{L_p}\Big)^{1/p},\nonumber
\end{align}
where $\ell_p\leq e^{2.5p\log p+c_1}$ as in the third entry of \lem{fT-boundedness}. Then by \lem{fT-boundedness}, we know that the $p$-th order derivatives of $\tildef$ are $L_p$-Lipschitz continuous. Moreover, note that
\begin{align}
\tildef(\0)-\inf_{\x}\tildef(\x)&=\alpha\big(\bar{f}_T(\0)-\inf_\x\bar{f}_T(\x)\big)\leq\frac{\ell_p^{1/p}\epsilon^{\frac{1+p}{p}}}{12L_p^{1/p}}T.\nonumber
\end{align}
Then by choosing
\begin{align}
T=\frac{c_0\Delta L_p^{1/p}}{\ell_p^{1/p}}\epsilon^{-\frac{1+p}{p}},\nonumber
\end{align}
for some positive constant $c_0$, we can have $\tildef\in\tilde{\mathcal{F}}(\Delta,L_p,\mathcal{R})$ for arbitrary dimension $d$ and rotation matrix $U$. Moreover, by \lem{A_0-cannot}, for any truncated sequence $A_{\quan}^{(t)}$ of any possible quantum algorithm $A_{\quan}$ containing $t<T$ queries to the oracle $O^{(p)}_f$ on input domain $\mathbb{B}(0,\mathcal{R})$, we have
\begin{align}
&\Pr_{U,\x_{\text{out}}\sim p_U
}\big[\|\nabla\tildef(\x_{\text{out}})\|\leq\alpha/\beta\big]=\Pr_{U,\x_{\text{out}}\sim p_U
}\big[\|\nabla\tildef(\x_{\text{out}})\|\leq\epsilon\big]\leq\frac{1}{3},\nonumber
\end{align}
where $p_U$ is the probability distribution over $\x\in\mathbb{B}(\0,2\beta\sqrt{T})=\mathbb{B}(\0,\mathcal{R})$ obtained by measuring the state $A_{\quan}^{(t)}\ket{0}$, given that the dimension $d$ satisfies
\begin{align}
d&\geq 200T\log T=\frac{200c_0\Delta L_p^{1/p}}{\ell_p^{1/p}}\epsilon^{-\frac{1+p}{p}}\cdot\log\Big(\frac{c_0\Delta L_p^{1/p}}{\ell_p^{1/p}}\epsilon^{-\frac{1+p}{p}}\Big).\nonumber
\end{align}
Finally, due to \defn{quantum-complexity-measure} we can conclude that
\begin{align}
&\mathcal{T}_{\epsilon}\big(\mathcal{A}_{\quan},\tilde{\mathcal{F}}_p(\Delta,L_p,\mathcal{R})\big)\geq T=c_0\Delta\Big(\frac{L_p}{\ell_p}\Big)^{1/p}\epsilon^{-\frac{1+p}{p}}.\nonumber
\end{align}
\end{proof}

%============================================================

\subsection{Lower Bound with Unbounded Input Domain}
\label{append:thm-p-th-order-formal}

In this subsection, we extend the lower bound in \prop{nonstochastic-bounded} for bounded input domain to the setting of unbounded input domain and then present the proof of \thm{p-th-order-formal}. In particular, we consider the hard instance $\hatf$ introduced by~\citet{carmon2020lower},
\begin{align}
\hatf(\x):=\tildef(\chi(\x))+\frac{\alpha}{10}\cdot\frac{\|\x\|^2}{\beta^2},\nonumber
\end{align}
where
\begin{align}
\chi(\x):=\frac{\x}{\sqrt{1+\|\x\|^2/\hat{\mathcal{R}}^2}},\nonumber
\end{align}
with $\hat{\mathcal{R}}=230\sqrt{T}$ and $\alpha$, $\beta$, $T$ defined in Eq.~\eqn{tildef-defn}. The quadratic term $\|\x\|^2/10$ guarantees that with overwhelming probability one cannot obtain an $\epsilon$-stationary point by randomly choosing an $\x$ with large norm. In all, the constants in $\hatf$ are chosen carefully such that stationary points of $\hatf$ are in one-to-one correspondence to stationary points of the hard instance $\tildef$ concerning the setting with bounded input domain. Quantitatively,

\begin{lemma}[\citealt{carmon2020lower}, Section 5.2]\label{lem:tilde-hat-correspondence}
Let $\Delta$, $L_p$, and $\epsilon$ be positive constants. There exist numerical constants $0<c_0,c_1<\infty$ such that, under the following choice of parameters
\begin{align}
&T=\frac{c_0\Delta L_p^{1/p}}{\ell^{1/p}}\epsilon^{-\frac{1+p}{p}},\quad\alpha=\frac{L_p\beta^{p+1}}{\ell_p},\nonumber\\
&\beta=\Big(\frac{\ell_p\epsilon}{L_p}\Big)^{1/p},\quad\mathcal{R}=\sqrt{c_0\Delta}\Big(\frac{\ell_p}{L_p}\Big)^{\frac{1}{2p}}\epsilon^{-\frac{p-1}{2p}},\nonumber
\end{align}
where $\ell_p\leq e^{2.5p\log p+c_1}$ as in the third entry of \lem{fT-boundedness}, such that for any function pairs $(\tildef,\hatf)\in\tilde{\mathcal{F}}_p(\Delta,L_p,\mathcal{R})\times\mathcal{F}_p(\Delta,L_p)$ with dimension $d\geq 200T\log T$ and the same rotation matrix $U$, where the function classes are defined in \defn{Fp} and \defn{tildeFp-bounded} separately, there exists a bijection between the $\epsilon$-stationary points of $\tildef$ and the $\epsilon$-stationary points of $\hatf$ that is independent from $U$.
\end{lemma}
Equipped with \lem{tilde-hat-correspondence}, we are now ready to prove \thm{p-th-order-formal}.

\begin{proof}[Proof of \thm{p-th-order-formal}]
Note that one quantum query to the $p$-th order derivatives of $\hatf$ can be implemented by one quantum query to the $p$-th order derivatives of $\tildef$ with the same rotation $U$. Combined with \lem{tilde-hat-correspondence}, we can note that the problem of finding $\epsilon$-stationary points of $\tildef$ with unknown $U$ can be reduced to the problem of finding $\epsilon$-stationary points of $\hatf$ with no additional overhead in terms of query complexity. Then by \prop{nonstochastic-bounded}, we can conclude that
\begin{align}
&\mathcal{T}_{\epsilon}\big(\mathcal{A}_{\quan},\mathcal{F}_p(\Delta,L_p)\big)\geq \mathcal{T}_{\epsilon}\big(\mathcal{A}_{\quan},\tilde{\mathcal{F}}_p(\Delta,L_p,\mathcal{R})\big)=c_0\Delta\Big(\frac{L_p}{\ell_p}\Big)^{1/p}\epsilon^{-\frac{1+p}{p}},\nonumber
\end{align}
and the dimension dependence is the same as \prop{nonstochastic-bounded}.
\end{proof}

%%%%%%%%%%%%%%%%%%%%%%%%%%%%%%%%%%%%%%%%%%%%%%%%%%%%%%%%%%%%%%%%%%%%%%%%%%%%%%%

\section{Proof of Quantum Lower Bounds with Access to Stochastic Gradients}\label{append:stochastic-quantum-lowerbound}

%================================================================
\subsection{Overview of the Proof Techniques}
Similar to the classical stochastic lower bound result~\cite{arjevani2022lower}, we still utilize the classical hard instance defined in \citet{carmon2020lower} but with different scaling parameters. Nevertheless, the stochastic gradient function of the hard instance in \citet{arjevani2022lower} exhibits a relatively simple form. In particular, after learning the first $(t-1)$ coordinate directions, the next query would be to reveal the $t$-th coordinate direction via the component $\nabla_t f(\x)$, upon which direction \citet{arjevani2022lower} applied all the stochasticity to obtain the following stochastic gradient function
\begin{align}\label{eqn:sgf-def}
[\g(\x,\xi)]_i:=\nabla_i f(\x)\cdot\Big(1+\mathbbm{1}\{i=t\}\Big(\frac{\xi}{p}-1\Big)\Big)
\end{align}
for some probability parameter $p=O(\epsilon^2)$. Intuitively, it takes $1/p$ classical queries in expectation to reveal the gradient component $\nabla_t f$ and obtain an accurate estimation of $\nabla f$. For a quantum algorithm however, it takes only $O(1/\sqrt{p})$ queries by Grover's search algorithm~\cite{grover1996fast}, which leads to a quadratic quantum speedup, as shown in \append{special-case-quantum-speedup}.

To address this issue, inspired by the quantum lower bound on multivariate mean estimation~\cite{cornelissen2022near}, we construct a new stochastic gradient function where each stochastic gradient $\g(\x,\xi)$ all has a very small overlap with $\nabla_t f$ and one has to take at least $\Omega(1/p)$ quantum queries to obtain enough knowledge of the stochastic gradients to estimate $\nabla_t f$ and $\nabla f$ accurately. Full details regarding the construction of $\g(\x,\xi)$ are presented in \sec{stochastic-construction-to-lowerbound}. Then, a quantum lower bound can be obtained matching the existing classical algorithmic upper bound result following the same procedure as \sec{p-th-order-quantum-lowerbound}.

Furthermore, if we assume that the stochastic gradient function satisfies the mean-squared smoothness condition described in \assum{mss}, the stochastic gradient function defined in Eq.~\eqn{sgf-def} is no longer applicable as it is not continuous on certain inputs. To address this issue, we apply a similar version of the function smoothing technique introduced in \citet{arjevani2022lower} to our stochastic gradient function~\eqn{sgf-def} to obtain a ``smoothed" stochastic gradient function, as shown in \append{sgf-construction-mss}, upon which we can obtain a quantum query lower bound matching the existing classical algorithmic upper bound result given that the stochastic gradient function satisfies \assum{mss}.

%================================================================

\subsection{Quantum Speedup on the Classical Hard Instance}\label{append:special-case-quantum-speedup}

In this subsection, we present the proof of \lem{quantum-speedup-on-classical-hard}, which show that for minimizing the objective function in the hard instance for proving the classical lower bound in~\citet{arjevani2022lower}, there exists a quantum algorithm based on Grover's search algorithm that can find its stationary point using only $\tilde{O}(1/\epsilon^3)$ queries to the stochastic quantum gradient oracle, even in the absence of the mean-squared smoothness assumption. Consequently, proving the $\Omega(1/\epsilon^{4})$ quantum lower bound in \thm{stochastic-formal} necessitates the development of fundamentally new ideas. Further details regarding these ideas will be presented in the next subsection, \append{quantum-lower}.

\begin{proof}[Proof of \lem{quantum-speedup-on-classical-hard}]
Note that at any $\x\in\R^d$ such that $\prog_{1/4}(\x)=T$, one can directly reveal the exact gradient using one query to the stochastic quantum gradient oracle $O_{\g}$. As for points $\x$ with $\prog_{1/4}(\x)=t<T$, by \lem{grover} we know that, using
\begin{align}
\log(1/\delta)\cdot O(1/\sqrt{p})=O(\epsilon^{-1}\log(1/\delta))\nonumber
\end{align}
queries to the oracle $O_{\g}$, we can obtain an accurate estimation of $\nabla_t f$ and hence $\nabla f$, with success probability at least $1-\delta$. Then, we can perform gradient descent with step size being $1/L$ to reach a stationary point within $O(\epsilon^{-2})$ iterations, with overall success probability at least $1-O(\delta/\epsilon^2)$. Hence, we can complete the proof by setting $\delta=O(\epsilon^2)$, and the overall query complexity equals
\begin{align}
O(\epsilon^{-1}\log(1/\delta))\cdot O(1/\epsilon^{-2})=\tilde{O}(\epsilon^{-3}).\nonumber
\end{align}
\end{proof}

%=================================================================

\subsection{Proof of Quantum Lower Bound}\label{append:quantum-lower}

We prove \thm{stochastic-formal} in this subsection. To achieve this, we first prove that any quantum algorithm must make at least $\Omega(T)$ queries to the quantum stochastic gradient oracle to obtain an accurate estimation of the actual gradient (\lem{gradient-estimation-no-speedup}), and based on which we establish the quantum lower bound with a bounded input domain in \append{bounded-lower}. This can be further extended to the setting of unbounded input domains, consisting the proof of \thm{stochastic-formal} in \append{stochastic-lowerbound-unbounded}.

\begin{lemma}\label{lem:gradient-estimation-no-speedup}
For any $n<T/2$ and $t\leq T$, suppose in the form of \defn{quantum-SG-oracle} we are given the quantum stochastic gradient oracle $\tildeO_{\g;U}$ of $\g(\x,j)$ defined in Eq.~\eqn{quantum-hard-g}. Then for any quantum algorithm $A_{\quan}$ in the form of Eq.~\eqn{quantum-algorithm-form}, consider the sequence of unitaries $A_{\quan}^{(n)}$ truncated after the $n$ stochastic gradient oracle query
\begin{align}
A_{\quan}^{(n)}:=\tildeO_{\g;U}V_n\tildeO_{\g;U}\cdots\tildeO_{\g;U}V_2\tildeO_{\g;U}V_1,\nonumber
\end{align}
and any input state $\ket{\phi}$, we have
\begin{align}\label{eqn:quantum-SGD-ineffective}
\delta_{\perp}(n):=\E_{\{\u^{(t)},\ldots,\u^{(T)},M_\x\}}\big[\|\Pi_{t;\perp}\cdot A_{\quan}^{(n)}\ket{\phi}\|^2\big]\leq \frac{n}{18T^6},
\end{align}
where the expectation is over all possible sets $\{\u^{(t)},\ldots,\u^{(T)}\}$ and all possible set of matrices $\{M_\x\}$ at all positions $\x\in\mathbb{B}(\0,\beta\sqrt{T})$ satisfying Eq.~\eqn{M-construction}, given that the dimension $d$ of the objective function $\tilde{f}_{T;U}$ satisfies $d\geq 2T^2+T$.
\end{lemma}

\begin{proof}
We use induction to prove this claim. Firstly for $n=1$, we have
\begin{align}
\delta_{\perp}(1)&=\E_{\{\u^{(t)},\ldots,\u^{(T)},M_\x\}}\big[\|\Pi_{t;\perp}\cdot \tildeO_{\g;U}V_0\ket{\phi}\|^2\big]\nonumber\\
&=\E_{\{\u^{(t)},\ldots,\u^{(T)},M_\x\}}\big[\|\Pi_{t;\perp}\cdot \tildeO_{\g;U}\ket{\phi}\|^2\big]\nonumber\\
&\leq\E_{\{\u^{(t)},\ldots,\u^{(T)},M_\x\}}\big[\big\|\Pi_{t;\perp}\cdot \tildeO_{\g;U}\ket{\phi_{\parallel}}\big\|^2\big]+\E_{\{\u^{(t)},\ldots,\u^{(T)},M_\x\}}\big[\big\|\ket{\phi_{\perp}}\big\|^2\big]\nonumber,
\end{align}
where $\ket{\phi_{\parallel}}:=\Pi_{t;\parallel}\ket{\phi}$ and $\ket{\phi_{\perp}}:=\Pi_{t;\perp}\ket{\phi}$. Since for all components in the (possibly superposition) state $\Pi_{T;\perp}\ket{\psi}$ all the stochastic gradients have no overlap with $\{\u^{t+2},\ldots,\u^{T}\}$, by \lem{multivariate-mean-estimation}, we have
\begin{align}
\E_{\{\u^{(t)},\ldots,\u^{(T)},M_\x\}}\big[\big\|\Pi_{t;\perp}\cdot \tildeO_{\g;U}\ket{\phi_{\parallel}}\big\|^2\big]\leq\exp(-\zeta T),\nonumber
\end{align}
where $\zeta$ is a small enough constant. Moreover, by \lem{quantum-zero-chain} we have
\begin{align}
\E_{\{\u^{(t)},\ldots,\u^{(T)},M_\x\}}\big[\big\|\ket{\phi_{\perp}}\big\|^2\big]\leq \frac{1}{36T^6}.\nonumber
\end{align}
Hence,
\begin{align}
\delta_{\perp}(1)\leq\exp(-\zeta T)+\frac{1}{36T^6}\leq \frac{1}{18T^6}.\nonumber
\end{align}

Suppose the inequality \eqn{quantum-SGD-ineffective} holds for all $n\leq\tilde{n}$ for some $\tilde{n}<\frac{T}{2}$. Then for $n=\tilde{n}+1$, we denote
\begin{align}
\ket{\phi_{\tilde{n}}}:=\tildeO_{\g;U}V_{\tilde{n}-1}\tildeO_{\g;U}\cdots\tildeO_{\g;U}V_1\tildeO_{\g;U}V_0\ket{\phi}.\nonumber
\end{align}
Then,
\begin{align}
\delta_{\perp}(\tilde{n}+1)
&=\E_{\{\u^{(t)},\ldots,\u^{(T)},M_\x\}}\big[\|\Pi_{t;\perp}\cdot \tildeO_{\g;U}V_{\tilde{n}}\ket{\phi_{\tilde{n}}}\|^2\big]\nonumber\\
&\leq\E_{\{\u^{(t)},\ldots,\u^{(T)},M_\x\}}\big[\big\|\Pi_{t;\perp}\cdot \tildeO_{\g;U}V_{\tilde{n}}\ket{\phi_{\tilde{n};\parallel}}\big\|^2\big]
+\E_{\{\u^{(t)},\ldots,\u^{(T)},M_\x\}}\big[\big\|\ket{\phi_{\tilde{n};\perp}}\big\|^2\big]\nonumber\\
&\leq\E_{\{\u^{(t)},\ldots,\u^{(T)},M_\x\}}\big[\big\|\Pi_{t;\perp}\cdot\tildeO_{\g;U}V_{\tilde{n}}\ket{\phi_{\tilde{n};\parallel}}\big\|^2\big]
+\delta_{\perp}(\tilde{n}).\nonumber
\end{align}

Consider the following sequence
\begin{align}
\tildeO_{\g;U}V_{\tilde{n}}\tildeO_{\g;U}\cdots\tildeO_{\g;U}V_0\ket{\phi'}=\tildeO_{\g;U}V_{\tilde{n}}\ket{\phi_{\tilde{n};\parallel}},\nonumber
\end{align}
note that it contains $\tilde{n}+1\leq\frac{T}{2}$ queries to the stochastic gradient oracle, and at each query except the last one, the input state has no overlap with the desired space $W_{t;\perp}$. Then by \lem{multivariate-mean-estimation}, we have
\begin{align}
\E_{\{\u^{(t)},\ldots,\u^{(T)},M_\x\}}\big[\big\|\Pi_{t;\perp}\cdot\tildeO_{\g;U}V_{\tilde{n}}\ket{\phi_{\tilde{n};\parallel}}\big\|^2\big]\leq \exp(-\zeta T)+\frac{1}{36T^6}\leq \frac{1}{18T^6}.\nonumber
\end{align}
Hence, the inequality \eqn{quantum-SGD-ineffective} also holds for $n=\tilde{n}+1$.
\end{proof}

\subsubsection{Lower Bound with Bounded Input Domain}\label{append:bounded-lower}

Through this construction of quantum stochastic gradient oracle, we can prove the query complexity lower bound for any quantum algorithm $A_{\quan}$ defined in \sec{quantum-model} using the hard instance $\tildef$ defined in Eq.~\eqn{tildef-defn}. For the convenience of notations, we use $\widetilde{O}_{\g;U}$ to denote the stochastic gradient oracle defined in Eq.~\eqn{quantum-hard-g} of function $\tildef$. Similar to \sec{noiseless-lowerbound}, we consider the truncated sequence $A_{\quan}^{(K\cdot T/2)}$ of any possible quantum algorithm $A_{\quan}$ with $K<T$, and define a sequence of unitaries starting with $A_0=A_{\quan}^{(K\cdot T/2)}$ as follows:
\begin{align}\label{eqn:stochastic-unitary-sequences}
A_0&:=V_{K+1}\tildeO_{\g;U}V_{K;T/2}\cdots\tildeO_{\g;U}V_{K;1}\cdots\tildeO_{\g;U}V_{2;T/2}\cdots\tildeO_{\g;U}V_{2;1}\tildeO_{\g;U}V_{1;T/2}\cdots\tildeO_{\g;U}V_{1;1}\\ \nonumber
A_1&:=V_{K+1}\tildeO_{\g;U}V_{K;T/2}\cdots\tildeO_{\g;U}V_{K;1}\cdots\tildeO_{\g;U}V_{2;T/2}\cdots\tildeO_{\g;U}V_{2;1}\tildeO_{\g;U_1}V_{1;T/2}\cdots\tildeO_{\g;U_1}V_{1;1}\\ \nonumber
A_2&:=V_{K+1}\tildeO_{\g;U}V_{K;T/2}\cdots\tildeO_{\g;U}V_{K;1}\cdots\tildeO_{\g;U_2}V_{2;T/2}\cdots\tildeO_{\g;U_2}V_{2;1}\tildeO_{\g;U_1}V_{1;T/2}\cdots\tildeO_{\g;U_1}V_{1;1}\\ \nonumber
&\vdots\\ \nonumber
A_K&:=V_{K+1}\tildeO_{\g;U_K}V_{K;T/2}\cdots\tildeO_{\g;U_K}V_{K;1}\cdots\tildeO_{\g;U_2}V_{2;T/2}\cdots\tildeO_{\g;U_2}V_{2;1}\tildeO_{\g;U_1}V_{1;T/2}\cdots\tildeO_{\g;U_1}V_{1;1},
\end{align}
where $\tildeO_{\g;U_t}$ stands for the stochastic gradient oracle of function $\tilde{f}_{t;U_t}$. Note that for the sequence of unitaries $A_0$, it can be decomposed into the product of $V_{K+1}$ and $K$ sequences of unitaries, each of the form
\begin{align}
\mathscr{A}_k(n)=\tildeO_{\g;U}V_{k;n}\tildeO_{\g;U}\cdots\tildeO_{\g;U}V_{k;2}\tildeO_{\g;U}V_{k;1}\nonumber
\end{align}
for $n=T/2$ and $k\in[K]$ for some unitaries $V_1,\ldots,V_n$. In the following lemma we demonstrate that, for such sequence $\mathscr{A}_k(n)$, if we replace $\tilde{O}_{\g;U}$ by another oracle that only reveals part information of $f$, the sequence will barely change on random inputs.

\begin{lemma}\label{lem:part-sequences-closeness}
For any $t\in[T-1]$ and any $n\leq\frac{T}{2}$, consider the following two sequences of unitaries
\begin{align}
\mathscr{A}(n)=\tildeO_{\g;U}V_n\tildeO_{\g;U}\cdots\tildeO_{\g;U}V_2\tildeO_{\g;U}V_1,\nonumber
\end{align}
and
\begin{align}
\hat{\mathscr{A}}_t(n)=\tildeO_{\g;U_t}V_n\tildeO_{\g;U_t}\cdots\tildeO_{\g;U_t}V_2\tildeO_{\g;U_t}V_1,\nonumber
\end{align}
we have
\begin{align}\label{eqn:SGD-closeness}
\delta(n):=\E_{\{\u^{(t)},\ldots,\u^{(T)},M_\x\}}\big[\|(\hat{\mathscr{A}}_t(n)-\mathscr{A}(n))\ket{\psi}\|^2\big]\leq\frac{n}{36T^5},
\end{align}
for any fixed pure state $\ket{\psi}$.
\end{lemma}
\begin{proof}
We use induction to prove this claim. Firstly for $n=1$, we have
\begin{align}\label{eqn:SGD-closeness-bounded}
&\E_{\{\u^{(t)},\ldots,\u^{(T)},M_\x\}}\big[\big\|(\hat{\mathscr{A}}_t(n)-\mathscr{A}(n))\ket{\psi}\big\|^2\big]\nonumber\\
&\qquad=\E_{\{\u^{(t)},\ldots,\u^{(T)},M_\x\}}\big[\big\|(\tildeO_{\g;U}-\tildeO_{\g;U_t})\ket{\psi}\big\|^2\big]\leq \frac{1}{36T^6},
\end{align}
where the last inequality follows from \lem{quantum-zero-chain}. Suppose the inequality \eqn{SGD-closeness-bounded} holds for all $n\leq\tilde{n}$ for some $\tilde{n}<\frac{T}{2}$. Then for $n=\tilde{n}+1$, we have
\begin{align}
\delta(\tilde{n}+1)
&=\E_{\{\u^{(t)},\ldots,\u^{(T)},M_\x\}}\big[\|(\hat{\mathscr{A}}_t(n)-\mathscr{A}(n))\ket{\psi}\|^2\big]\nonumber\\
&\leq\E_{\{\u^{(t)},\ldots,\u^{(T)},M_\x\}}\big[\|(\tildeO_{\g;U}-\tildeO_{\g;U_t})\ket{\psi_t}\|^2\big]+\delta(\tilde{n}),\nonumber
\end{align}
where
\begin{align}
\ket{\psi_t}=V_{\tilde{n}}\tildeO_{\g;U_t}\cdots\tildeO_{\g;U_t}V_1\ket{\psi}\nonumber
\end{align}
is a function of $U_t$ obtained by $\tilde{n}$ queries to $\tilde{O}_{\g;U_t}$. By \lem{gradient-estimation-no-speedup}, we have
\begin{align}
\E_{\{\u^{(t)},\ldots,\u^{(T)},M_x\}}[\|\Pi_{t;\perp}\ket{\psi_t}\|^2]\leq\frac{n}{18T^6}\leq\frac{1}{36T^5},\nonumber
\end{align}
indicating that $\ket{\psi_t}$ only has a very little overlap with the subspace $W_{t;\perp}$ defined in \eqn{W_t_perp-defn}, outside of which the columns $\{\u^{(t)},\ldots,\u^{(T)}\}$ of $U$ has no impact on the function value and derivatives of $\tilde{f}_{T;U}$. Thus,
\begin{align}
\delta(\tilde{n}+1)
&\leq\E_{\{\u^{(t)},\ldots,\u^{(T)},M_\x\}}\big[\|(\tildeO_{\g;U}-\tildeO_{\g;U_t})\ket{\psi_t}\|\big]+\delta(\tilde{n})\nonumber\\
&\leq\E_{\{\u^{(t)},\ldots,\u^{(T)},M_x\}}[\|\Pi_{t;\perp}\ket{\psi_t}\|^2]+\delta(\tilde{n})\leq\frac{\tilde{n}+1}{36T^5},\nonumber
\end{align}
indicating that Eq.~\eqn{SGD-closeness} also holds for $n=\tilde{n}+1$.
\end{proof}

\begin{lemma}[$A_t$ and $A_{t-1}$ have similar outputs]
\label{lem:similar-outputs-stochastic}
For a hard instance $\tildef(\x)\colon\R^d\to\R$ defined on $\mathbb{B}(\0,2\beta\sqrt{T})$ with $d\geq 2T^2+T$, let $A_t$ for $t\in[K]$ be the sequence unitaries defined in Eq.~\eqn{stochastic-unitary-sequences}. Then
\begin{align}
\E_{\{\u^{(t)},\ldots,\u^{(T)},M_\x\}}\big(\|A_t\ket{\0}-A_{t-1}\ket{\0}\|^2\big)\leq\frac{1}{72T^4}.\nonumber
\end{align}
\end{lemma}

\begin{proof}
From the definition of the unitaries in Eq.~\eqn{stochastic-unitary-sequences}, we have
\begin{align}
\|A_t\ket{\0}-A_{t-1}\ket{\0}\|=\|(\mathscr{A}(T/2)-\hat{\mathscr{A}}_t(T/2))\ket{\psi}\|,\nonumber
\end{align}
for some fixed quantum state $\ket{\psi}$ dependent on the vectors $\{\u^{(1)},\ldots,\u^{(t-1)}\}$, where
\begin{align}
\mathscr{A}(T/2)=\tildeO_{\g;U}V_{T/2}\tildeO_{\g;U}\cdots\tildeO_{\g;U}V_2\tildeO_{\g;U}V_1,\nonumber
\end{align}
and
\begin{align}
\hat{\mathscr{A}}_t(T/2)=\tildeO_{\g;U_t}V_{T/2}\tildeO_{\g;U_t}\cdots\tildeO_{\g;U_t}V_2\tildeO_{\g;U_t}V_1.\nonumber
\end{align}
By \lem{part-sequences-closeness}, we have
\begin{align}
\E_{\{\u^{(t)},\ldots,\u^{(T)},M_{\x}\}}\big(\|(\mathscr{A}(T/2)-\hat{\mathscr{A}}_t(T/2))\ket{\psi}\|^2\big)\leq\frac{1}{36T^5}\cdot\frac{T}{2}=\frac{1}{72T^4},\nonumber
\end{align}
which leads to
\begin{align}
\E_{\{\u^{(t)},\ldots,\u^{(T)},M_\x\}}\big(\|A_t\ket{\0}-A_{t-1}\ket{\0}\|^2\big)\leq\frac{1}{72T^4}.\nonumber
\end{align}
\end{proof}

\begin{proposition}\label{prop:stochastic-A_0-cannot}
Consider the $d$-dimensional function $\tildef(\x)\colon\mathbb{B}(\0,2\beta\sqrt{T})\to\R$ defined in \eqn{tildef-defn} with the rotation matrix $U$ being chosen arbitrarily and the dimension $d\geq 2T^2+T$. Consider the truncated sequence $A_{\quan}^{(K\cdot T/2)}$ of any possible quantum algorithm $A_{\quan}$ containing $KT/2$ queries to the quantum stochastic gradient oracle $\tildeO_{\g;U}$ of $\g(\x,j)$ defined in Eq.~\eqn{quantum-hard-g} with $K<T$, let $p_U$ be the probability distribution over $\x\in\mathbb{B}(\0,2\beta\sqrt{T})$ obtained by measuring the state $A_{\quan}^{(K\cdot T/2)}\ket{0}$, which is related to the rotation matrix $U$. Then,
\begin{align}
\Pr_{U,\{M_\x\},\x_{\text{out}}\sim p_U
}\big[\|\nabla\tildef(\x_{\text{out}})\|\leq\alpha/\beta\big]\leq\frac{1}{3},\nonumber
\end{align}
where the probability is subject to all possible orthogonal rotation matrices $U$, and all possible matrices $\{M_{\x}\}$ in the quantum stochastic gradient function $\g(\x,j)$ for any $\x$.
\end{proposition}

\begin{proof}
We first demonstrate that the sequence of unitaries $A_K$ defined in Eq.~\eqn{stochastic-unitary-sequences} cannot find an $\alpha/\beta$-approximate stationary point with high probability. In particular, let $p_{U_K}$ be the probability distribution over $\x\in\mathbb{B}(\0,2\beta\sqrt{T})$ obtained by measuring the output state $A_K\ket{0}$. Then,
\begin{align}
\Pr_{U_K,\{M_\x\},\x_{\text{out}}\in p_{U_K}}\big[\|\nabla\tildef(\x_{\text{out}})\|\leq \alpha/\beta\big]
\leq \Pr_{\{\u^{(k+1)},\ldots,\u^{(T)}\}}\big[\|\nabla\tildef(\x)\|\leq \alpha/\beta\big].\nonumber
\end{align}
for any fixed $\x$. Then by \lem{cannot-guess} we have
\begin{align}
\Pr_{U_K,{M_\x},\x_{\text{out}}\in p_{U_K}}\big[\|\nabla\tildef(\x_{\text{out}})\|\leq\alpha/\beta\big]\leq \frac{1}{6}.\nonumber
\end{align}
Moreover, by \lem{similar-outputs-stochastic} and Cauchy-Schwarz inequality, we have
\begin{align}
\E_{U}\big[\|A_K\ket{0}-A_0\ket{0}\|^2\big]\leq K\cdot\E_U\Big[\sum_{t=1}^{K-1}\|A_{t+1}\ket{0}-A_t\ket{0}\|^2\Big]\leq\frac{1}{72T^2}.\nonumber
\end{align}
Then by Markov's inequality,
\begin{align}
\Pr_{U}\Big[\|A_{K-1}\ket{0}-A_0\ket{0}\|^2\geq\frac{1}{6T}\Big]\leq\frac{1}{6T},\nonumber
\end{align}
since both norms are at most 1. Thus, the total variance distance between the probability distribution $p_U$ obtained by measuring $A_0\ket{0}$ and the probability distribution $p_{U_K}$ obtained by measuring $A_K\ket{0}$ is at most
\begin{align}
\frac{1}{6T}+\frac{1}{6T}=\frac{1}{3T}\leq\frac{1}{6}.\nonumber
\end{align}
Hence, we can conclude that
\begin{align}
&\Pr_{U,\{M_{\x}\},\x_{\text{out}}\sim p_U^{(t)}
}\big[\|\nabla\tildef(\x_{\text{out}})\|\leq\alpha/\beta\big]\nonumber\\
&\qquad\quad\leq\Pr_{U_K,\{M_{\x}\},\x_{\text{out}}\sim p_{U_K}}\big[\|\nabla\tildef(\x_{\text{out}})\|\leq\alpha/\beta\big]+\frac{1}{6}=\frac{1}{3}.\nonumber
\end{align}
\end{proof}

\begin{proposition}\label{prop:stochastic-bounded}
Suppose $\Delta$, $L$, $\sigma$, and $\epsilon$ are positive. Then,
\begin{align}
\mathcal{T}^{\sto}_{\epsilon}\big(\mathcal{A}_{\quan},\tilde{\mathcal{F}_1}(\Delta,L,\mathcal{R}),\sigma\big)=\Omega\Big(\frac{\min\{\Delta^2L^2,\sigma^4\}}{\epsilon^4}\Big),\nonumber
\end{align}
where $\mathcal{R}=c\cdot\min\{\sqrt{L\Delta},\sigma\}$ for some constant $c$, the complexity measure $\mathcal{T}^{\sto}_{\epsilon}(\cdot)$ is defined in Eq.~\eqn{stochastic-complexity-measure}, and the function class $\tilde{\mathcal{F}}_1(\Delta,L,\mathcal{R})$ is defined in \defn{tildeFp-bounded}. The lower bound holds even if we restrict $\tilde{\mathcal{F}}_1(\Delta,L,\mathcal{R})$ to functions whose domain has dimension
\begin{align}
\Theta\Big(\frac{\min\{L^2\Delta^2,\sigma^4\}}{\epsilon^4}\Big).\nonumber
\end{align}
\end{proposition}

\begin{proof}
We set up the scaling parameters $\alpha$ and $\beta$ in the hard instance $\tildef\colon\R^d\to\R$ defined in Eq.~\eqn{tildef-defn} as
\begin{align}
\alpha=\frac{L\beta^{2}}{\ell},\qquad\beta=\frac{2\ell\epsilon}{L},\nonumber
\end{align}
where $\ell$ is the gradient Lipschitz constant of $\bar{f}_T$ whose value is given in \lem{fT-boundedness}. We also set the parameter
\begin{align}
T=\min\Big\{\frac{\Delta\ell}{12L\beta^2},\frac{\sigma^2\beta^2}{4\gamma^2\alpha^2}\Big\}=\Theta\Big(\frac{\min\{L\Delta,\sigma^2\}}{\epsilon^2}\Big).\nonumber
\end{align}
Then by \lem{fT-boundedness}, we know that $\tildef$ is $L$-smooth, and
\begin{align}
\tildef(\0)-\inf_{\x}\tildef(\x)=\alpha\big(\bar{f}_T(\0)-\inf_\x\bar{f}_T(\x)\big)\leq\frac{12L\beta^2}{\ell}\cdot T\leq\Delta,\nonumber
\end{align}
indicating that $\tildef\in\tilde{\mathcal{F}}(\Delta,L_p,\mathcal{R})$ for arbitrary dimension $d$ and rotation matrix $U$. Moreover, at every $\x$, we have
\begin{align}
\mathbb{E}\big[\|\g(\x,j)-\nabla\tildef(\x)\|^2\big]\leq 4\alpha^2\gamma^2T/\beta^2\leq\delta^2,\nonumber
\end{align}
indicating that the variance of the stochastic gradient function $\g$ defined in Eq.~\eqn{quantum-hard-g} is bounded by $\sigma^2$. Further, we notice that the radius
\begin{align}
\mathcal{R}=2\beta \sqrt{T}=c\cdot\min\big\{\sqrt{L\Delta},\sigma\big\}\nonumber
\end{align}
for some constant $c$.

By \prop{stochastic-A_0-cannot}, for any truncated sequence $A_{\quan}^{(KT/2)}$ of any possible quantum algorithm $A_{\quan}$ containing $KT/2<T^2/2$ queries to the oracle $O^{(p)}_f$ on input domain $\mathbb{B}(0,\mathcal{R})$, we have
\begin{align}
\Pr_{U,\x_{\text{out}}\sim p_U
}\big[\|\nabla\tildef(\x_{\text{out}})\|\leq\alpha/\beta\big]=\Pr_{U,\x_{\text{out}}\sim p_U
}\big[\|\nabla\tildef(\x_{\text{out}})\|\leq\epsilon\big]\leq\frac{1}{3},\nonumber
\end{align}
where $p_U$ is the probability distribution over $\x\in\mathbb{B}(\0,2\beta\sqrt{T})=\mathbb{B}(\0,\mathcal{R})$ obtained by measuring the state $A_{\quan}^{(KT/2)}\ket{0}$, given that the dimension $d$ satisfies
\begin{align}
d\geq 2T^2+T=\Theta\Big(\frac{\min\{L^2\Delta^2,\sigma^4\}}{\epsilon^4}\Big).\nonumber
\end{align}
Then according to \defn{quantum-complexity-measure}, we can conclude that
\begin{align}
\mathcal{T}_{\epsilon}^{\sto}\big(\mathcal{A}_{\quan},\tilde{\mathcal{F}}_1(\Delta,L,\mathcal{R}),\sigma\big)
\geq \frac{T^2}{2}
=\Omega\Big(\frac{\min\{L^2\Delta^2,\sigma^4\}}{\epsilon^4}\Big).\nonumber
\end{align}
\end{proof}

\subsubsection{Lower Bound with Unbounded Input Domain}\label{append:stochastic-lowerbound-unbounded}
In this subsection, we extend the quantum lower bound proved in \prop{stochastic-bounded} to the function class $\mathcal{F}_1(\Delta,L)$ with unbounded input domain via similar scaling techniques adopted in~\citet{arjevani2022lower} and \append{thm-p-th-order-formal}. In particular, we consider the scaled hard instance $\hatf$ introduced in~\citet{carmon2020lower} and also used in~\citet{arjevani2022lower},
\begin{align}
\hatf(\x):=\tildef(\chi(\x))+\frac{\alpha}{10}\cdot\frac{\|\x\|^2}{\beta^2},\nonumber
\end{align}
where
\begin{align}
\chi(\x):=\frac{\x}{\sqrt{1+\|\x\|^2/\hat{\mathcal{R}}^2}},\nonumber
\end{align}
with parameters
\begin{align}
\alpha=\frac{L\beta^{2}}{\ell},\qquad\beta=\frac{2\ell\epsilon}{L},\qquad\beta=\frac{2\ell\epsilon}{L},\qquad T=\min\Big\{\frac{\Delta\ell}{12L\beta^2},\frac{\sigma^2\beta^2}{4\gamma^2\alpha^2}\Big\},\qquad \hat{\mathcal{R}}=230\sqrt{T},\nonumber
\end{align}
whose values are also adopted in the proof of \prop{stochastic-bounded}. The constants in $\hatf$ are chosen carefully such that stationary points of $\hatf$ are in one-to-one correspondence to stationary points of the hard instance $\tildef$ concerning the setting with bounded input domain. Quantitatively,

\begin{lemma}[\citealt{arjevani2022lower}]\label{lem:stochastic-tilde-hat-correspondence}
Let $\Delta$, $L_p$, and $\epsilon$ be positive constants. There exist numerical constants $0<c_0,c_1<\infty$ such that, under the following choice of parameters
\begin{align}
T=\min\Big\{\frac{\Delta\ell}{12L\beta^2},\frac{\sigma^2\beta^2}{4\gamma^2\alpha^2}\Big\},\qquad
\alpha=\frac{L\beta^{2}}{\ell},\qquad\beta=\frac{2\ell\epsilon}{L},\qquad
\mathcal{R}=c\sqrt{T}\cdot\min\{\sqrt{L\Delta},\sigma\},\nonumber
\end{align}
where $\ell$ is the gradient Lipschitz parameter of $\bar{f}_T$ whose value is given in \lem{fT-boundedness}, such that for any function pairs $(\tildef,\hatf)\in\tilde{\mathcal{F}}_1(\Delta,L,\mathcal{R})\times\mathcal{F}_1(\Delta,L)$ with dimension $d\geq 400T\log T$ and the same rotation matrix $U$, where the function classes are defined in \defn{Fp} and \defn{tildeFp-bounded} separately, there exists a bijection between the $\epsilon$-stationary points of $\tildef$ and the $\epsilon$-stationary points of $\hatf$ that is independent from $U$.
\end{lemma}
Equipped with \lem{stochastic-tilde-hat-correspondence}, we are now ready to prove \thm{stochastic-formal}.

\begin{proof}[Proof of \thm{stochastic-formal}]
Note that one quantum query to the stochastic gradient of $\hatf$ can be implemented by one quantum query to the stochastic gradient of $\tildef$ with the same rotation $U$, if we directly scale the stochastic gradient function of $\tildef$ to $\hatf$, which will not increase the variance of the stochastic gradient function. Combined with \lem{stochastic-tilde-hat-correspondence}, we can note that the problem of finding $\epsilon$-stationary points of $\tildef$ with unknown $U$ can be reduced to the problem of finding $\epsilon$-stationary points of $\hatf$ with no additional overhead in terms of query complexity. Then by \prop{stochastic-bounded}, we can conclude that
\begin{align}
&\mathcal{T}^{\sto}_{\epsilon}\big(\mathcal{A}_{\quan},\mathcal{F}_1(\Delta,L),\sigma\big)\geq \mathcal{T}^{\sto}_{\epsilon}\big(\mathcal{A}_{\quan},\tilde{\mathcal{F}}_1(\Delta,L,\mathcal{R}),\sigma\big)=\Omega\Big(\frac{\max\{L^2\Delta^2,\sigma^4\}}{\epsilon^4}\Big),\nonumber
\end{align}
and the dimension dependence is the same as \prop{stochastic-bounded}.
\end{proof}

%====================================================================================================================

\subsection{Proof of Quantum Lower Bound with the Mean-Squared Smoothness Assumption}\label{append:mss}

In this subsection, we prove a quantum query lower bound for finding an $\epsilon$-stationary point with access to the quantum stochastic gradient oracle defined in \defn{quantum-SG-oracle} and additionally satisfies the \textit{mean-squared smoothness} assumption defined in \assum{mss} for some constant $\bar{L}$.

%====================================================================================

\subsubsection{Construction of the Stochastic Gradient Function Satisfying \assum{mss}}\label{append:sgf-construction-mss}

Note that the stochastic gradient function~\eqn{quantum-hard-g} in \sec{stochastic-construction-to-lowerbound} does not satisfy \assum{mss} since the function $\prog_\alpha(\cdot)\colon\R^d\to\R$ defined in~\eqn{defn-prog} contains a maximization over all the $d$ components, which makes the stochastic gradient discontinuous.

This issue can be addressed using a smoothing technique similar to which introduced in \citet{arjevani2022lower}. In particular, \citet{arjevani2022lower} defines the following smoothed version of the indicator function $\mathbb{I}\{i>\prog_{\frac{\beta}{4}}(\x)\}$ for any $i$ (with rotation $U$):
\begin{align}\label{eqn:Theta_i-defn}
\Theta_i(\x)\coloneqq\Gamma\bigg(1-\Big(\sum_{k=i}^T\Gamma^2(|x_k/\beta|)\Big)^{1/2}\bigg)=\Gamma(1-\|\Gamma(\x_{\geq i})\|),
\end{align}
where $\Gamma(|\x_{\geq i}|)$ is a shorthand for a vector with entries
\begin{align}
\Gamma(|x_i|),\Gamma(|x_{i+1}|),\ldots,\Gamma(|x_T|),\nonumber
\end{align}
and the function $\Gamma\colon\R\to\R$ is defined as
\begin{align}\label{eqn:Gamma-defn}
\Gamma(t)=\frac{\int_{1/4}^{t/\beta}\Lambda(\tau)\d\tau}{\int_{1/4}^{1/2}\Lambda(\tau)\d\tau},\quad\text{where}\quad
\Lambda(t)=
\begin{cases}
0,&\frac{t}{\beta}\leq\frac{1}{4}\text{ or }\frac{t}{\beta}\geq\frac{1}{2}, \\
\exp\Big(-\frac{1}{100\big(\frac{t}{\beta}-\frac{1}{4}\big)\big(\frac{1}{2}-\frac{t}{\beta}\big)}\Big), &\frac{1}{4}<\frac{t}{\beta}<\frac{1}{2}.
\end{cases}
\end{align}
Note that $\Gamma$ is a smooth non-decreasing Lipschitz function with $\Gamma(t)=0$ for all $t\leq \beta/4$ and $\Gamma(t)=1$ for all $t\geq \beta/2$. Then, the function $\Theta_i(\x)$ defined in Eq.~\eqn{Theta_i-defn} satisfies
\begin{align}
\mathbb{I}\left\{i>\prog_{\frac{\beta}{4}}(\x)\right\}\leq\Theta_i(\x)\leq\mathbb{I}\left\{i>\prog_{\frac{\beta}{2}}(\x)\right\}.\nonumber
\end{align}

Following the same intuition of the gradient function defined in Eq.~\eqn{quantum-hard-g} without the mean-squared smoothness assumption, here we also arrange the stochasticity to harden the attempts on increasing the coordinate progress via stochastic gradient information. In particular, similar to Eq.~\eqn{M-construction}, for the $d$-dimensional function $\tilde{f}_{T;U}$ with $d\geq 4\mathscr{T}$ for some integer $\mathscr{T}$ whose value is specified later, we note that for any point $\x$ with gradient $\g(\x)$ there exists a matrix $M_\x\in\R^{d\times 2\mathscr{T}}$ with $\mathscr{T}$ columns being $\0$ and the other $\mathscr{T}$ columns forming a set of orthonormal vectors such that
\begin{align}\label{eqn:M-construction-mss}
\nabla_{\prog_{\frac{\beta}{2}}(\x)+1}\tildef(\x)=\frac{1}{2\mathscr{T}}\sum_j 2\gamma\sqrt{\mathscr{T}}\cdot\vect{m}_{\x}^{(j)},
\end{align}
where $\vect{m}_{\x}^{(j)}$ stands for the $j$-th column of $M_\x$ and
\begin{align}
\gamma=\big\|\nabla_{\prog_{\frac{\beta}{2}}(\x)+1}\tildef(\x)\big\|\leq 23\nonumber
\end{align}
is the norm of the $(\prog_{\beta/2}+1)$-th gradient component at certain points whose exact value is specified later.

Moreover, to guarantee that all the stochastic gradients at $\x$ can only reveal the $(\prog_{\beta/4}(\x)+1)$-th coordinate direction $\u_{\prog_{\beta/4}(\x)+1}$ even with infinite number of queries and will not ``accidentally" make further progress, we additionally require that for any $\x,\y\in\R^d$ with $\prog_{\beta/4}(\x)\neq\prog_{\beta/4}(\y)$, all the columns of $M_\x$ are orthogonal to all the columns of $M_\y$. This can be achieved by creating $T$ orthogonal subspaces
\begin{align}
\{\mathcal{V}_1,\ldots,\mathcal{V}_T\},\nonumber
\end{align}
where each subspace is of dimension $2T$ and has no overlap with $\{\u_1,\ldots,\u_T\}$, such that for any $\x$ the columns of $M_\x$ are within the subspace
\begin{align}
\spn\big\{\u_{\prog_{\frac{\beta}{4}}(\x)+1},\mathcal{V}_{\prog_{\frac{\beta}{4}}(\x)+1}\big\},\nonumber
\end{align}
as long as the dimension $d$ is larger than $2\mathscr{T}T+T=O(\mathscr{T}T)$.

Now, we can define the following stochastic gradient function for $\nabla\tildef(\x)$:
\begin{align}\label{eqn:quantum-hard-g-mss}
\hat{\g}(\x,j)=\g(\x)+\Theta_{\prog_{\beta/2}(\x)+1}(\x)\cdot\big(2\gamma\sqrt{\mathscr{T}}\cdot\vect{m}^{(j)}-\g_{\prog_{\beta/2}(\x)+1}(\x)\big),
\end{align}
where $j$ is uniformly distributed in the set $[2\mathscr{T}]$. Then, we can prove that this stochastic gradient function satisfies \assum{mss}.

\begin{lemma}\label{lem:mss-upper-bound}
The stochastic gradient function $\hat{g}$ defined in \eqn{quantum-hard-g-mss} is unbiased for $\nabla\tilde{f}_{T;U}(\x)$ and satisfies
\begin{align}
\E\big\|\hat{\g}(\x,j)-\nabla\tildef(\x)\big\|^2\leq\frac{4\mathscr{T}\alpha^2}{\beta^2},\qquad\E\big\|\hat{\g}(\x,j)-\hat{\g}(\y,j)\big\|^2\leq\frac{\hat{\ell}^2\mathscr{T}\alpha^2\|\x-\y\|^2}{\beta^2},\nonumber
\end{align}
for all $\x,\y\in\R^d$, where $\hat{\ell}=328$.
\end{lemma}
\begin{proof}
For any $\x$ and $j$, we define
\begin{align}
\delta(\x,j)\coloneqq\hat{\g}(\x,j)-\nabla \tildef(\x)=\Theta_{\prog_{\beta/2}(\x)+1}(\x)\cdot\big(\g_{\prog_{\beta/2}(\x)+1}(\x)-2\gamma\sqrt{\mathscr{T}}\cdot\vect{m}^{(j)}\big).\nonumber
\end{align}
Then we have
\begin{align}
\E\big\|\hat{\g}(\x,j)-\nabla\tildef(\x)\big\|^2
=\E\|\delta(\x,\z)\|^2\leq 4\mathscr{T}\alpha^2|\Theta_{\prog_{\beta/2}(\x)+1}(\x)|/\beta^2\leq 4\mathscr{T}\alpha^2/\beta^2.\nonumber
\end{align}
For any $\x,\y\in\R^d$,
\begin{align}
\hat{\g}(\x,j)-\hat{\g}(\y,j)=\delta(\x,j)-\delta(\y,j)+\nabla\tildef(\x)-\nabla\tildef(\y).\nonumber
\end{align}
Since $\E[\delta(\x,j)-\delta(\y,j)]=0$, we can derive that
\begin{align}
\E\big\|\hat{\g}(\x,j)-\hat{\g}(\y,j)\big\|^2=\E\big\|\delta(\x,j)-\delta(\y,j)\big\|^2+\|\nabla\tildef(\x)-\nabla\tildef(\y)\|^2.\nonumber
\end{align}
Note that
\begin{align}
\delta(\x,j)-\delta(\y,j)=\Theta_{i_\x}(\x)\cdot\big(\g_{i_\x}(\x)-2\gamma_\x\sqrt{\mathscr{T}}\cdot\vect{m}_\x^{(j)}\big)-\Theta_{i_\y}(\y)\cdot\big(\g_{i_\y}(\y)-2\gamma_\y\sqrt{\mathscr{T}}\cdot\vect{m}_\y^{(j)}\big),\nonumber
\end{align}
where we denote $i_\x=\prog_{\beta/2}(\x)+1$ and $i_\y=\prog_{\beta/2}(\y)+1$. Then,
\begin{align}
\E\big\|\delta(\x,j)-\delta(\y,j)\big\|^2
\leq &\, 2\mathscr{T}(\nabla_{i_\x}\tildef(\x))^2(\Theta_{i_\x}(\x)-\Theta_{i_\x}(\y))^2\nonumber\\
&+2\mathscr{T}\big(\nabla_{i_\x}\tildef(\x)-\nabla_{i_\x}\tildef(\y)\big)^2\Theta_{i_\x}^2(\y)\nonumber\\
&+2\mathscr{T}(\nabla_{i_\y}\tildef(\y))^2(\Theta_{i_\y}(\y)-\Theta_{i_\y}(\x))^2\nonumber\\
&+2\mathscr{T}\big(\nabla_{i_\y}\tildef(\y)-\nabla_{i_\y}\tildef(\x)\big)^2\Theta_{i_\y}^2(\x).\nonumber
\end{align}
As $\Theta_i$ is $36$-Lipschitz for any $i\in[T]$ according to \lem{Gamma-properties}, we have
\begin{align}
\E\big\|\delta(\x,j)-\delta(\y,j)\big\|^2
\leq &\,\mathscr{T}\cdot\big(2\alpha^2\cdot(23\cdot 6)^2\|\x-\y\|^2/\beta^2+2\|\nabla\tildef(\x)-\nabla\tildef(\y)\|^2\big)\nonumber\\
&+\|\nabla\tildef(\x)-\nabla\tildef(\y)\|^2\nonumber\\
\leq &\,\mathscr{T}\hat{\ell}^2\alpha^2\|\x-\y\|^2/\beta^2,\nonumber
\end{align}
where the last inequality uses the fact that the gradient of $\nabla\tildef$ is $152\alpha/\beta$-Lipschitz continuous, which is demonstrated in \lem{fT-boundedness}.
\end{proof}

Similar to the case of \sec{stochastic-construction-to-lowerbound}, we can show that if one only knows about the first $t$ components $\{\u^{(1)},\ldots,\u^{(t)}\}$, even if we permit the quantum algorithm to query the stochastic gradient oracle at different positions of $\x$, it is still hard to learn $\u^{(t+1)}$ as well as other components with larger indices. Quantitatively, following the same notation in \sec{stochastic-construction-to-lowerbound}, for any $1\leq t\leq T$ we denote
\begin{align}
W_{t;\perp}:=\Big\{\x\in\mathbb{B}(\0,\beta\sqrt{T})\,\big|\,\exists i,\text{ s.t. }|\<\x,\u^{(q)}\>|\geq\frac{\beta}{4}\text{ and }t<i\leq T\Big\},\nonumber
\end{align}
and
\begin{align}
W_{i;\parallel}:=\mathbb{B}(\0,\beta\sqrt{T})-W_{i;\perp},\nonumber
\end{align}
where $W_{t;\perp}$ is the subspace of $\mathbb{B}(\0,\beta\sqrt{T})$ such that any vector in $W_{t;\perp}$ has a relatively large overlap with at least one of $\u^{(t+1)},\ldots,\u^{(T)}$. Moreover, we still use $\Pi_{t;\perp}$ and $\Pi_{t;\parallel}$ to denote the quantum projection operators onto $W_{t;\perp}$ and $W_{t;\parallel}$, respectively. The following lemma demonstrates that, if starting in the subspace $W_{t;\parallel}$, any quantum algorithm using at most $\mathscr{T}/2$ queries at arbitrary locations cannot output a quantum state that has a large overlap with $W_{t;\perp}$ in expectation.

\begin{lemma}\label{lem:gradient-estimation-no-speedup-mss}
For any $n<\mathscr{T}/2$ and $t\leq T$, suppose in the form of \defn{quantum-SG-oracle} we are given the quantum stochastic gradient oracle $\tildeO_{\g;U}$ of $\g(\x,j)$ defined in Eq.~\eqn{quantum-hard-g-mss}. Then for any quantum algorithm $A_{\quan}$ in the form of Eq.~\eqn{quantum-algorithm-form}, consider the sequence of unitaries $A_{\quan}^{(n)}$ truncated after the $n$ stochastic gradient oracle query
\begin{align}
A_{\quan}^{(n)}:=\tildeO_{\g;U}V_n\tildeO_{\g;U}\cdots\tildeO_{\g;U}V_2\tildeO_{\g;U}V_1,\nonumber
\end{align}
and any input state $\ket{\phi}$, we have
\begin{align}\label{eqn:quantum-SGD-ineffective-mss}
\delta_{\perp}(n):=\E_{\{\u^{(t)},\ldots,\u^{(T)},M_\x\}}\big[\|\Pi_{t;\perp}\cdot A_{\quan}^{(n)}\ket{\phi}\|^2\big]\leq \frac{n}{18\mathscr{T}^2T^4},
\end{align}
where the expectation is over all possible sets $\{\u^{(t)},\ldots,\u^{(T)}\}$ and all possible sets of matrices $\{M_\x\}$ at all positions $\x\in\mathbb{B}(\0,\beta\sqrt{T})$ satisfy Eq.~\eqn{M-construction-mss}, given that the dimension $d$ of the objective function $\tilde{f}_{T;U}$ satisfies $d\geq2\mathscr{T}T\log\mathscr{T}$ and $\mathscr{T}\geq T$.
\end{lemma}

\begin{proof}
We use induction to prove this claim. First, for $n=1$, we have
\begin{align}
\delta_{\perp}(1)&=\E_{\{\u^{(t)},\ldots,\u^{(T)},M_\x\}}\big[\|\Pi_{t;\perp}\cdot \tildeO_{\g;U}V_0\ket{\phi}\|^2\big]\nonumber\\
&=\E_{\{\u^{(t)},\ldots,\u^{(T)},M_\x\}}\big[\|\Pi_{t;\perp}\cdot \tildeO_{\g;U}\ket{\phi}\|^2\big]\nonumber\\
&\leq\E_{\{\u^{(t)},\ldots,\u^{(T)},M_\x\}}\big[\big\|\Pi_{t;\perp}\cdot \tildeO_{\g;U}\ket{\phi_{\parallel}}\big\|^2\big]+\E_{\{\u^{(t)},\ldots,\u^{(T)},M_\x\}}\big[\big\|\ket{\phi_{\perp}}\big\|^2\big],\nonumber
\end{align}
where $\ket{\phi_{\parallel}}:=\Pi_{t;\parallel}\ket{\phi}$ and $\ket{\phi_{\perp}}:=\Pi_{t;\perp}\ket{\phi}$. Since for all components in the (possibly superposition) state $\Pi_{T;\perp}\ket{\psi}$ all the stochastic gradients have no overlap with $\{\u^{t+2},\ldots,\u^{T}\}$, by \lem{multivariate-mean-estimation} we have
\begin{align}
\E_{\{\u^{(t)},\ldots,\u^{(T)},M_\x\}}\big[\big\|\Pi_{t;\perp}\cdot \tildeO_{\g;U}\ket{\phi_{\parallel}}\big\|^2\big]\leq\exp(-\zeta \mathscr{T}),\nonumber
\end{align}
where $\zeta$ is a small enough constant. Moreover, by \lem{quantum-zero-chain} we have
\begin{align}
\E_{\{\u^{(t)},\ldots,\u^{(T)},M_\x\}}\big[\big\|\ket{\phi_{\perp}}\big\|^2\big]\leq \frac{1}{36\mathscr{T}^2T^4}.\nonumber
\end{align}
Hence,
\begin{align}
\delta_{\perp}(1)\leq\exp(-\zeta \mathscr{T})+\frac{1}{36\mathscr{T}^2T^4}\leq\frac{1}{18\mathscr{T}^2T^4}.\nonumber
\end{align}

Suppose the inequality \eqn{quantum-SGD-ineffective-mss} holds for all $n\leq\tilde{n}$ for some $\tilde{n}<\frac{\mathscr{T}}{2}$. Then for $n=\tilde{n}+1$, we denote
\begin{align}
\ket{\phi_{\tilde{n}}}:=\tildeO_{\g;U}V_{\tilde{n}-1}\tildeO_{\g;U}\cdots\tildeO_{\g;U}V_1\tildeO_{\g;U}V_0\ket{\phi}.\nonumber
\end{align}
Then,
\begin{align}
\delta_{\perp}(\tilde{n}+1)
&=\E_{\{\u^{(t)},\ldots,\u^{(T)},M_\x\}}\big[\|\Pi_{t;\perp}\cdot \tildeO_{\g;U}V_{\tilde{n}}\ket{\phi_{\tilde{n}}}\|^2\big]\nonumber\\
&\leq\E_{\{\u^{(t)},\ldots,\u^{(T)},M_\x\}}\big[\big\|\Pi_{t;\perp}\cdot \tildeO_{\g;U}V_{\tilde{n}}\ket{\phi_{\tilde{n};\parallel}}\big\|^2\big]
+\E_{\{\u^{(t)},\ldots,\u^{(T)},M_\x\}}\big[\big\|\ket{\phi_{\tilde{n};\perp}}\big\|^2\big]\nonumber\\
&\leq\E_{\{\u^{(t)},\ldots,\u^{(T)},M_\x\}}\big[\big\|\Pi_{t;\perp}\cdot\tildeO_{\g;U}V_{\tilde{n}}\ket{\phi_{\tilde{n};\parallel}}\big\|^2\big]
+\delta_{\perp}(\tilde{n}).\nonumber
\end{align}

Consider the following sequence
\begin{align}
\tildeO_{\g;U}V_{\tilde{n}}\tildeO_{\g;U}\cdots\tildeO_{\g;U}V_0\ket{\phi'}=\tildeO_{\g;U}V_{\tilde{n}}\ket{\phi_{\tilde{n};\parallel}},\nonumber
\end{align}
note that it contains $\tilde{n}+1\leq\frac{\mathscr{T}}{2}$ queries to the stochastic gradient oracle, and at each query except the last one, the input state has no overlap with the desired space $W_{t;\perp}$. Observe that within this restricted input subspace where these queries happen, we always have
\begin{align}
\mathbb{I}\{i>\prog_{\frac{\beta}{4}}(\x)\}=\Theta_i(\x)=\mathbb{I}\{i>\prog_{\frac{\beta}{2}}(\x)\}.\nonumber
\end{align}
Hence, the oracle behaves as if there is no scaling to the indicator function $\mathbb{I}\{i>\prog_{\frac{\beta}{4}}(\x)\}$, and we can apply \lem{multivariate-mean-estimation} to obtain the following result:
\begin{align}
\E_{\{\u^{(t)},\ldots,\u^{(T)},M_\x\}}\big[\big\|\Pi_{t;\perp}\cdot\tildeO_{\g;U}V_{\tilde{n}}\ket{\phi_{\tilde{n};\parallel}}\big\|^2\big]
&\leq \exp(-\zeta \mathscr{T})+\frac{1}{36\mathscr{T}^2T^4}\nonumber\\
&\leq \exp(-\zeta T)+\frac{1}{36\mathscr{T}^2T^4}\nonumber\\
&\leq \frac{1}{18\mathscr{T}^2T^4}.\nonumber
\end{align}
Hence, the inequality \eqn{quantum-SGD-ineffective-mss} also holds for $n=\tilde{n}+1$.
\end{proof}

%=======================================================================================================

\subsubsection{Lower Bound with Bounded Input Domain}

Through this construction of quantum stochastic gradient oracle with mean-squared smoothness, we can prove the query complexity lower bound for any quantum algorithm $A_{\quan}$ defined in \sec{quantum-model} using the hard instance $\tildef$ defined in Eq.~\eqn{tildef-defn}. For the convenience of notations, we use $\widetilde{O}_{\g;U}$ to denote the stochastic gradient oracle defined in Eq.~\eqn{quantum-hard-g-mss} of function $\tildef$. Similar to \sec{noiseless-lowerbound} and \append{bounded-lower}, we consider the truncated sequence $A_{\quan}^{(K\cdot \mathscr{T}/2)}$ of any possible quantum algorithm $A_{\quan}$ with $K<T$, and define a sequence of unitaries starting with $A_0=A_{\quan}^{(K\cdot \mathscr{T}/2)}$ as follows:
\begin{align}\label{eqn:stochastic-unitary-sequences-mss}
A_0&:=V_{K+1}\tildeO_{\g;U}V_{K;\mathscr{T}/2}\cdots\tildeO_{\g;U}V_{K;1}\cdots\tildeO_{\g;U}V_{2;\mathscr{T}/2}\cdots\tildeO_{\g;U}V_{2;1}\tildeO_{\g;U}V_{1;\mathscr{T}/2}\cdots\tildeO_{\g;U}V_{1;1}\\ \nonumber
A_1&:=V_{K+1}\tildeO_{\g;U}V_{K;\mathscr{T}/2}\cdots\tildeO_{\g;U}V_{K;1}\cdots\tildeO_{\g;U}V_{2;\mathscr{T}/2}\cdots\tildeO_{\g;U}V_{2;1}\tildeO_{\g;U_1}V_{1;\mathscr{T}/2}\cdots\tildeO_{\g;U_1}V_{1;1}\\ \nonumber
A_2&:=V_{K+1}\tildeO_{\g;U}V_{K;\mathscr{T}/2}\cdots\tildeO_{\g;U}V_{K;1}\cdots\tildeO_{\g;U_2}V_{2;\mathscr{T}/2}\cdots\tildeO_{\g;U_2}V_{2;1}\tildeO_{\g;U_1}V_{1;\mathscr{T}/2}\cdots\tildeO_{\g;U_1}V_{1;1}\\ \nonumber
&\vdots\\ \nonumber
A_K&:=V_{K+1}\tildeO_{\g;U_K}V_{K;\mathscr{T}/2}\cdots\tildeO_{\g;U_K}V_{K;1}\cdots\tildeO_{\g;U_2}V_{2;\mathscr{T}/2}\cdots\tildeO_{\g;U_2}V_{2;1}\tildeO_{\g;U_1}V_{1;\mathscr{T}/2}\cdots\tildeO_{\g;U_1}V_{1;1},
\end{align}
where $\tildeO_{\g;U_t}$ stands for the stochastic gradient oracle of the function $\tilde{f}_{t;U_t}$. Note that for the sequence of unitaries $A_0$, it can be decomposed into the product of $V_{K+1}$ and $K$ unitaries, each of the form
\begin{align}
\mathscr{A}_k(n)=\tildeO_{\g;U}V_{k;n}\tildeO_{\g;U}\cdots\tildeO_{\g;U}V_{k;2}\tildeO_{\g;U}V_{k;1}\nonumber
\end{align}
for $n=\mathscr{T}/2$ and $k\in[K]$ for some unitaries $V_1,\ldots,V_n$. In the following lemma, we demonstrate that for such a sequence $\mathscr{A}_k(n)$, if we replace $\tilde{O}_{\g;U}$ by another oracle that only reveals part information of $f$, the sequence will barely change on random inputs.

\begin{lemma}\label{lem:part-sequences-closeness-mss}
For any $t\in[T-1]$ and any $n\leq\frac{\mathscr{T}}{2}$, consider the following two sequences of unitaries
\begin{align}
\mathscr{A}(n)=\tildeO_{\g;U}V_n\tildeO_{\g;U}\cdots\tildeO_{\g;U}V_2\tildeO_{\g;U}V_1,\nonumber
\end{align}
and
\begin{align}
\hat{\mathscr{A}}_t(n)=\tildeO_{\g;U_t}V_n\tildeO_{\g;U_t}\cdots\tildeO_{\g;U_t}V_2\tildeO_{\g;U_t}V_1,\nonumber
\end{align}
we have
\begin{align}\label{eqn:SGD-closeness-mss}
\delta(n):=\E_{\{\u^{(t)},\ldots,\u^{(T)},M_\x\}}\big[\|(\hat{\mathscr{A}}_t(n)-\mathscr{A}(n))\ket{\psi}\|^2\big]\leq\frac{n}{36\mathscr{T}T^4}
\end{align}
for any pure state $\ket{\psi}$.
\end{lemma}
\begin{proof}
We use induction to prove this claim. First, for $n=1$, we have
\begin{align}
&\E_{\{\u^{(t)},\ldots,\u^{(T)},M_\x\}}\big[\big\|(\hat{\mathscr{A}}_t(n)-\mathscr{A}(n))\ket{\psi}\big\|^2\big]\nonumber\\
&\qquad=\E_{\{\u^{(t)},\ldots,\u^{(T)},M_\x\}}\big[\big\|(\tildeO_{\g;U}-\tildeO_{\g;U_t})\ket{\psi}\big\|^2\big]\leq \frac{1}{36\mathscr{T}^2T^4},\nonumber
\end{align}
where the last inequality follows from \lem{quantum-zero-chain}. Suppose the inequality \eqn{SGD-closeness} holds for all $n\leq\tilde{n}$ for some $\tilde{n}<\frac{T}{2}$. Then for $n=\tilde{n}+1$, we have
\begin{align}
\delta(\tilde{n}+1)
&=\E_{\{\u^{(t)},\ldots,\u^{(T)},M_\x\}}\big[\|(\hat{\mathscr{A}}_t(n)-\mathscr{A}(n))\ket{\psi}\|^2\big]\\
&\leq\E_{\{\u^{(t)},\ldots,\u^{(T)},M_\x\}}\big[\|(\tildeO_{\g;U}-\tildeO_{\g;U_t})\ket{\psi_t}\|^2\big]+\delta(\tilde{n}),
\end{align}
where
\begin{align}
\ket{\psi_t}=V_{\tilde{n}}\tildeO_{\g;U_t}\cdots\tildeO_{\g;U_t}V_1\ket{\psi}
\end{align}
is a function of $U_t$ obtained by $\tilde{n}$ queries to $\tilde{O}_{\g;U_t}$. By \lem{gradient-estimation-no-speedup}, we have
\begin{align}
\E_{\{\u^{(t)},\ldots,\u^{(T)},M_x\}}[\|\Pi_{t;\perp}\ket{\psi_t}\|^2]\leq \frac{n}{18\mathscr{T}^2T^4}\leq\frac{1}{36\mathscr{T}T^4},
\end{align}
indicating that $\ket{\psi_t}$ only has a very little overlap with the subspace $W_{t;\perp}$ defined in \eqn{W_t_perp-defn}, outside of which the columns $\{\u^{(t)},\ldots,\u^{(T)}\}$ of $U$ has no impact on the function value and derivatives of $\tilde{f}_{T;U}$. Thus,
\begin{align}
\delta(\tilde{n}+1)
&\leq\E_{\{\u^{(t)},\ldots,\u^{(T)},M_\x\}}\big[\|(\tildeO_{\g;U}-\tildeO_{\g;U_t})\ket{\psi_t}\|\big]+\delta(\tilde{n})\nonumber\\
&\leq\E_{\{\u^{(t)},\ldots,\u^{(T)},M_x\}}[\|\Pi_{t;\perp}\ket{\psi_t}\|^2]+\delta(\tilde{n})\leq\frac{\tilde{n}+1}{36\mathscr{T}T^4},\nonumber
\end{align}
indicating that Eq.~\eqn{SGD-closeness-mss} also holds for $n=\tilde{n}+1$.
\end{proof}

\begin{lemma}[$A_t$ and $A_{t-1}$ have similar outputs]
\label{lem:similar-outputs-stochastic-mss}
For a hard instance $\tildef(\x)\colon\R^d\to\R$ defined on $\mathbb{B}(\0,2\beta\sqrt{T})$ with $d\geq 2\mathscr{T}T\log\mathscr{T}$, let $A_t$ for $t\in[K]$ be the sequence unitaries defined in Eq.~\eqn{stochastic-unitary-sequences}. Then
\begin{align}
\E_{\{\u^{(t)},\ldots,\u^{(T)},M_\x\}}\big(\|A_t\ket{\0}-A_{t-1}\ket{\0}\|^2\big)\leq\frac{1}{72T^4}.\nonumber
\end{align}
\end{lemma}

\begin{proof}
From the definition of the unitaries in Eq.~\eqn{stochastic-unitary-sequences-mss}, we have
\begin{align}
\|A_t\ket{\0}-A_{t-1}\ket{\0}\|=\|(\mathscr{A}(\mathscr{T}/2)-\hat{\mathscr{A}}_t(\mathscr{T}/2))\ket{\psi}\|\nonumber
\end{align}
for some fixed quantum state $\ket{\psi}$ dependent on the vectors $\{\u^{(1)},\ldots,\u^{(t-1)}\}$, where
\begin{align}
\mathscr{A}(\mathscr{T}/2)=\tildeO_{\g;U}V_{\mathscr{T}/2}\tildeO_{\g;U}\cdots\tildeO_{\g;U}V_2\tildeO_{\g;U}V_1,\nonumber
\end{align}
and
\begin{align}
\hat{\mathscr{A}}_t(\mathscr{T}/2)=\tildeO_{\g;U_t}V_{\mathscr{T}/2}\tildeO_{\g;U_t}\cdots\tildeO_{\g;U_t}V_2\tildeO_{\g;U_t}V_1.\nonumber
\end{align}
By \lem{part-sequences-closeness-mss}, we have
\begin{align}
\E_{\{\u^{(t)},\ldots,\u^{(T)},M_{\x}\}}\big(\|(\mathscr{A}(T/2)-\hat{\mathscr{A}}_t(T/2))\ket{\psi}\|^2\big)\leq\frac{1}{36\mathscr{T}T^4}\cdot\frac{\mathscr{T}}{2}=\frac{1}{72T^4},\nonumber
\end{align}
which leads to
\begin{align}
\E_{\{\u^{(t)},\ldots,\u^{(T)},M_\x\}}\big(\|A_t\ket{\0}-A_{t-1}\ket{\0}\|^2\big)\leq\frac{1}{72T^4}.\nonumber
\end{align}
\end{proof}

\begin{proposition}\label{prop:stochastic-A_0-cannot-mss}
Consider the $d$-dimensional function $\tildef(\x)\colon\mathbb{B}(\0,2\beta\sqrt{T})\to\R$ defined in \eqn{tildef-defn} with the rotation matrix $U$ being chosen arbitrarily and the dimension $d\geq 2\mathscr{T}T\log\mathscr{T}$ and $\mathcal{T}\geq T$. Consider the truncated sequence $A_{\quan}^{(K\mathscr{T}/2)}$ of any possible quantum algorithm $A_{\quan}$ containing $K\mathscr{T}/2$ queries to the quantum stochastic gradient oracle $\tildeO_{\g;U}$ of $\g(\x,j)$ defined in Eq.~\eqn{quantum-hard-g-mss} with $K<T$, and let $p_U$ be the probability distribution over $\x\in\mathbb{B}(\0,2\beta\sqrt{T})$ obtained by measuring the state $A_{\quan}^{(K\mathscr{T}/2)}\ket{0}$, which is related to the rotation matrix $U$. Then,
\begin{align}
\Pr_{U,\{M_\x\},\x_{\text{out}}\sim p_U
}\big[\|\nabla\tildef(\x_{\text{out}})\|\leq\alpha/\beta\big]\leq\frac{1}{3},\nonumber
\end{align}
where the probability is subject to all possible orthogonal rotation matrices $U$, and all possible matrices $\{M_{\x}\}$ in the quantum stochastic gradient function $\g(\x,j)$ for any $\x$.
\end{proposition}

\begin{proof}
We first demonstrate that the sequence of unitaries $A_K$ defined in Eq.~\eqn{stochastic-unitary-sequences-mss} cannot find an $\alpha/\beta$-approximate stationary point with high probability. In particular, let $p_{U_K}$ be the probability distribution over $\x\in\mathbb{B}(\0,2\beta\sqrt{T})$ obtained by measuring the output state $A_K\ket{0}$. Then,
\begin{align}
\Pr_{U_K,\{M_\x\},\x_{\text{out}}\in p_{U_K}}\big[\|\nabla\tildef(\x_{\text{out}})\|\leq \alpha/\beta\big]
\leq \Pr_{\{\u^{(k+1)},\ldots,\u^{(T)}\}}\big[\|\nabla\tildef(\x)\|\leq \alpha/\beta\big].\nonumber
\end{align}
for any fixed $\x$. Then by \lem{cannot-guess} we have
\begin{align}
\Pr_{U_K,{M_\x},\x_{\text{out}}\in p_{U_K}}\big[\|\nabla\tildef(\x_{\text{out}})\|\leq\alpha/\beta\big]\leq \frac{1}{6}.\nonumber
\end{align}
Moreover, by \lem{similar-outputs-stochastic-mss} and Cauchy-Schwarz inequality, we have
\begin{align}
\E_{U}\big[\|A_K\ket{0}-A_0\ket{0}\|^2\big]\leq K\cdot\E_U\Big[\sum_{t=1}^{K-1}\|A_{t+1}\ket{0}-A_t\ket{0}\|^2\Big]\leq\frac{1}{72T^2}.\nonumber
\end{align}
Then by Markov's inequality,
\begin{align}
\Pr_{U}\Big[\|A_{K-1}\ket{0}-A_0\ket{0}\|^2\geq\frac{1}{6T}\Big]\leq\frac{1}{6T},\nonumber
\end{align}
since both norms are at most 1. Thus, the total variance distance between the probability distribution $p_U$ obtained by measuring $A_0\ket{0}$ and the probability distribution $p_{U_K}$ obtained by measuring $A_K\ket{0}$ is at most
\begin{align}
\frac{1}{6T}+\frac{1}{6T}=\frac{1}{3T}\leq\frac{1}{6}.\nonumber
\end{align}
Hence, we can conclude that
\begin{align}
\hspace{-2mm}\Pr_{U,\{M_{\x}\},\x_{\text{out}}\sim p_U^{(t)}
}\big[\|\nabla\tildef(\x_{\text{out}})\|\leq\alpha/\beta\big]
\leq\Pr_{U_K,\{M_{\x}\},\x_{\text{out}}\sim p_{U_K}}\big[\|\nabla\tildef(\x_{\text{out}})\|\leq\alpha/\beta\big]+\frac{1}{6}=\frac{1}{3}.\nonumber
\end{align}
\end{proof}

\begin{proposition}\label{prop:stochastic-bounded-mss}
Suppose $\Delta$, $\bar{L}$, $\sigma$, and $\epsilon$ are positive. Then,
\begin{align}
\mathcal{T}^{\sto}_{\epsilon}\big(\mathcal{A}_{\quan},\tilde{\mathcal{F}}_1(\Delta,\bar{L},\mathcal{R}),\sigma\big)=\Omega\Big(\frac{\Delta\bar{L}\sigma}{\epsilon^3}\Big),\nonumber
\end{align}
if we further assume the stochastic gradient function $\g(\x)$ satisfies \assum{mss} with mean-squared smoothness parameter $\bar{L}$, where $\mathcal{R}=\sqrt{\frac{\hat{\ell}\sigma\Delta}{6\ell\bar{L}\gamma\epsilon}}$, the complexity measure $\mathcal{T}^{\sto}_{\epsilon}(\cdot)$ is defined in Eq.~\eqn{stochastic-complexity-measure}, and the function class $\tilde{\mathcal{F}}_1(\Delta,\bar{L},\mathcal{R})$ is defined in \defn{tildeFp-bounded}. The lower bound holds even if we restrict $\tilde{\mathcal{F}}_1(\Delta,\bar{L},\mathcal{R})$ to functions whose domain has dimension
\begin{align}
\tilde{\Theta}\Big(\frac{\Delta\bar{L}\sigma}{\epsilon^3}\Big).\nonumber
\end{align}
\end{proposition}

\begin{proof}
We set up the scaling parameters $\alpha$ and $\beta$ in the hard instance $\tildef\colon\R^d\to\R$ defined in Eq.~\eqn{tildef-defn} as
\begin{align}
\alpha=\frac{L\beta^{2}}{\ell},\qquad\beta=\frac{2\ell\epsilon}{L},\nonumber
\end{align}
where $\ell$ is the gradient Lipschitz constant of $\bar{f}_T$ whose value is given in \lem{fT-boundedness}, and the parameter $L\leq\bar{L}$ is specified later. We also set the parameters
\begin{align}
T=\frac{L\Delta}{48\ell\epsilon^2},\qquad\mathscr{T}=\frac{\sigma^2\beta^2}{4\gamma^2\alpha^2}=\frac{\sigma^2}{4\gamma^2\epsilon^2}.\nonumber
\end{align}
Then by \lem{fT-boundedness}, we know that $\tildef$ is $L$-smooth and thus $\bar{L}$-smooth since $L\leq\bar{L}$, and
\begin{align}
\tildef(\0)-\inf_{\x}\tildef(\x)=\alpha\big(\bar{f}_T(\0)-\inf_\x\bar{f}_T(\x)\big)\leq\frac{12L\beta^2}{\ell}\cdot T\leq\Delta,\nonumber
\end{align}
indicating that $\tildef\in\tilde{\mathcal{F}}(\Delta,L_p,\mathcal{R})$ for arbitrary dimension $d$ and rotation matrix $U$. Moreover, for every $\x,\y\in\R^d$, by \lem{mss-upper-bound} we have
\begin{align}
\mathbb{E}\big[\|\g(\x,j)-\nabla\tildef(\x)\|^2\big]\leq 4\alpha^2\gamma^2\mathscr{T}/\beta^2\leq\delta^2,\nonumber
\end{align}
indicating that the variance of the stochastic gradient function $\g$ defined in Eq.~\eqn{quantum-hard-g-mss} is bounded by $\sigma^2$, and
\begin{align}
\E\big\|\hat{\g}(\x,j)-\hat{\g}(\y,j)\big\|^2\leq\frac{\hat{\ell}^2\mathscr{T}\alpha^2\|\x-\y\|^2}{\beta^2}=\frac{\hat{\ell}^2L^2\mathscr{T}}{\ell^2}\cdot\|\x-\y\|^2.\nonumber
\end{align}
Hence, to guarantee that \assum{mss} is satisfied, we set
\begin{align}
L=\frac{\ell}{\hat{\ell}\sqrt{\mathscr{T}}}\cdot\bar{L}=\frac{2\ell\gamma\epsilon}{\hat{\ell}\sigma}\cdot\bar{L}.\nonumber
\end{align}
Furthermore, we notice that the radius $\mathcal{R}$ satisfies
\begin{align}
\mathcal{R}=2\beta \sqrt{T}=\frac{4\ell\epsilon}{L}\cdot\sqrt{\frac{L\Delta}{48\ell\epsilon^2}}=\sqrt{\frac{\ell\Delta}{3L}}=\sqrt{\frac{\hat{\ell}\sigma\Delta}{6\ell\bar{L}\gamma\epsilon}}.\nonumber
\end{align}
To guarantee that $\mathscr{T}\geq T$, $\epsilon$ has to satisfy
\begin{align}
\frac{\sigma^2}{4\gamma^2\epsilon^2}\geq\frac{\Delta}{48\ell\epsilon^2}\cdot\frac{2\ell\gamma\epsilon\bar{L}}{\hat{\ell}\sigma},\nonumber
\end{align}
indicating
\begin{align}
\epsilon\leq\frac{\sigma^2}{4\gamma^2}\cdot\frac{24\hat{\ell}\sigma}{\Delta\gamma\bar{L}}=\frac{6\sigma^2\hat{\ell}}{4\gamma^3\Delta\bar{L}}.\nonumber
\end{align}

By \prop{stochastic-A_0-cannot}, for any truncated sequence $A_{\quan}^{(K\mathscr{T}/2)}$ of any possible quantum algorithm $A_{\quan}$ containing $K\mathscr{T}/2<T\mathscr{T}/2$ queries to the oracle $O^{(p)}_f$ on input domain $\mathbb{B}(0,\mathcal{R})$, we have
\begin{align}
\Pr_{U,\x_{\text{out}}\sim p_U
}\big[\|\nabla\tildef(\x_{\text{out}})\|\leq\alpha/\beta\big]=\Pr_{U,\x_{\text{out}}\sim p_U
}\big[\|\nabla\tildef(\x_{\text{out}})\|\leq\epsilon\big]\leq\frac{1}{3},\nonumber
\end{align}
where $p_U$ is the probability distribution over $\x\in\mathbb{B}(\0,2\beta\sqrt{T})=\mathbb{B}(\0,\mathcal{R})$ obtained by measuring the state $A_{\quan}^{(K\mathscr{T}/2)}\ket{0}$, given that the dimension $d$ satisfies
\begin{align}
d\geq 2\mathscr{T}T\log\mathscr{T}=\tilde{\Theta}\Big(\frac{\Delta\bar{L}\sigma}{\epsilon^3}\Big).\nonumber
\end{align}
Then according to \defn{quantum-complexity-measure} we can conclude that
\begin{align}
\mathcal{T}_{\epsilon}^{\sto}\big(\mathcal{A}_{\quan},\tilde{\mathcal{F}}_1(\Delta,\bar{L},\mathcal{R},\sigma)\big)
\geq \frac{\mathscr{T}T}{2}
=\tilde{\Omega}\Big(\frac{\Delta\bar{L}\sigma}{\epsilon^3}\Big).\nonumber
\end{align}
\end{proof}

\subsubsection{Lower Bound with Unbounded Input Domain}\label{append:stochastic-lowerbound-unbounded-mss}
In this subsection, we extend the quantum lower bound proved in \prop{stochastic-bounded-mss} to the function class $\mathcal{F}(\Delta,\bar{L})$ with unbounded input domain via similar scaling techniques adopted in~\citet{arjevani2022lower}, \append{thm-p-th-order-formal}, and \append{stochastic-lowerbound-unbounded}. In particular, we consider the scaled hard instance $\hatf$ introduced in~\citet{carmon2020lower} and also used in~\citet{arjevani2022lower},
\begin{align}
\hatf(\x):=\tildef(\chi(\x))+\frac{\alpha}{10}\cdot\frac{\|\x\|^2}{\beta^2},\nonumber
\end{align}
where
\begin{align}
\chi(\x):=\frac{\x}{\sqrt{1+\|\x\|^2/\hat{\mathcal{R}}^2}},\nonumber
\end{align}
with the following parameters
\begin{align}
\alpha=\frac{L\beta^2}{\ell}\qquad \beta=\frac{2\ell\epsilon}{L},\qquad T=\frac{L\Delta}{48\ell\epsilon},\qquad L=\frac{2\ell\gamma\epsilon\bar{L}}{\hat{\ell}\sigma},\qquad\hat{\mathcal{R}}=230\beta\sqrt{T},\nonumber
\end{align}
whose values are also adopted in the proof of \prop{stochastic-bounded-mss}. The constants in $\hatf$ are chosen carefully such that stationary points of $\hatf$ are in one-to-one correspondence to stationary points of the hard instance $\tildef$ concerning the setting with bounded input domain. Quantitatively,

\begin{lemma}[\citealt{arjevani2022lower}, Section 4]\label{lem:stochastic-tilde-hat-correspondence-mss}
Let $\Delta$, $\bar{L}$, and $\epsilon$ be positive constants. Then, under the following choice of parameters
\begin{align}
\alpha=\frac{L\beta^2}{\ell},\qquad\beta=\frac{2\ell\epsilon}{L},\qquad T=\frac{L\Delta}{48\ell\epsilon},\qquad L=\frac{2\ell\gamma\epsilon\bar{L}}{\hat{\ell}\sigma},\nonumber
\end{align}
where $\ell$ is the gradient Lipschitz parameter of $\bar{f}_T$ whose value is given in \lem{fT-boundedness}, such that for any function pairs $(\tildef,\hatf)\in\tilde{\mathcal{F}}_1(\Delta,\bar{L},\mathcal{R})\times\mathcal{F}_1(\Delta,\bar{L})$ with dimension $d\geq 400T\log T$ and the same rotation matrix $U$, there exists a bijection between the $\epsilon$-stationary points of $\tildef$ and the $\epsilon$-stationary points of $\hatf$ that is independent from $U$.
\end{lemma}
Equipped with \lem{stochastic-tilde-hat-correspondence-mss}, we are now ready to prove \thm{stochastic-formal-mss}.

\begin{proof}[Proof of \thm{stochastic-formal-mss}]
Note that one quantum query to the stochastic gradient of $\hatf$ can be implemented by one quantum query to the stochastic gradient of $\tildef$ with the same rotation $U$, if we directly scale the stochastic gradient function of $\tildef$ to $\hatf$, which will not increase the variance of the stochastic gradient function, and the mean-squared smoothness condition in \assum{mss} is still preserved with the same mean-squared smoothness parameter $\bar{L}$. Combined with \lem{stochastic-tilde-hat-correspondence-mss}, we can note that the problem of finding $\epsilon$-stationary points of $\tildef$ with unknown $U$ can be reduced to the problem of finding $\epsilon$-stationary points of $\hatf$ with no additional overhead in terms of query complexity. Then by \prop{stochastic-bounded-mss}, we can conclude that
\begin{align}
\mathcal{T}^{\sto}_{\epsilon}\big(\mathcal{A}_{\quan},\mathcal{F}_1(\Delta,\bar{L}),\sigma\big)\geq \mathcal{T}^{\sto}_{\epsilon}\big(\mathcal{A}_{\quan},\tilde{\mathcal{F}}_1(\Delta,\bar{L},\mathcal{R}),\sigma\big)=\Omega\Big(\frac{\Delta\bar{L}\sigma}{\epsilon^3}\Big),\nonumber
\end{align}
if we further assume the stochastic gradient function satisfies \assum{mss}. The dimension dependence is the same as \prop{stochastic-bounded-mss}.
\end{proof}

%%%%%%%%%%%%%%%%%%%%%%%%%%%%%%%%%%%%%%%%%%%%%%%%%%%%%%%%%%%%%%%%%%%%%%%%%%%%%%%

\end{document}